\newcommand {\be}{\begin{equation}}
\newcommand {\ee}{\end{equation}}
\newcommand{\payoff}[1]{\mathit{\Pi}_{#1}}
\newcommand{\Gammait}{\mathit{\Gamma}}
\definecolor{redextra}{HTML}{D95A4D}
\definecolor{greenextra}{HTML}{2F714B}
\newtheorem{theorem}{Theorem}
\newtheorem{proposition}{Proposition}
\newtheorem{lemma}{Lemma}
\newtheorem*{definition*}{Definition}
\newtheorem{example}{Example}
\newtheorem*{example*}{Example}
\newtheorem*{remark*}{Remark}
\newtheorem*{observation*}{Observation}
\newcommand{\e}{\bm{e}}
\begin{document}

\title{Interconnected Contests\footnote{We thank Chengqing Li and Yiyi Li for excellent research assistance. We also thank Francis Bloch, Matthew Elliott, John Geanakoplos, Dan Kovenock, Larry Samuelson, Alex Wolitzky, and participants at a number of conferences and seminars for useful comments. Dziubi\'{n}ski's work was supported by the Polish National Science Centre through grants 2018/29/B/ST6/00174 and 2021/42/E/HS4/00196. Goyal gratefully acknowledges financial support from Janeway Institute and Keynes Fund at Cambridge University University.  Zhou gratefully acknowledges financial support from the NSFC (Grant No. 72450001). }}

\author{Marcin Dziubi\'{n}ski\footnote{Institute of Informatics, University of Warsaw. \texttt{m.dziubinski@mimuw.edu.pl}}
\and Sanjeev Goyal\footnote{University of Cambridge.   \texttt{sg472@cam.ac.uk}}
\and Junjie Zhou\footnote{School of Economics and Management, Tsinghua University. \texttt{zhoujj03001@gmail.com}}
}

\date{\today}

\maketitle

\begin{abstract}
We study a two-player model of conflict with multiple battlefields -- the novel element is that each of the players has their own network of spillovers so that resources allocated to one battle can be utilized in winning neighboring battles. There exists a unique equilibrium in which the relative probability of a player winning a battle is the product of the ratio of the centrality of the battlefield in the two respective competing networks and the ratio of the relative cost of efforts of the two players. We study the design of networks and characterize networks that maximize total efforts and maximize total utility. Finally, we characterize the equilibrium of a game in which players choose both networks and efforts in the battles.   

\end{abstract}

\newpage
\section{Introduction}

 Following \cite{Borel1921}, we consider two players that allocate resources across battlefields so as to win the maximum number of them. The novel element in our work is that each of the players has their own network of spillovers: resources allocated by a player to one battle can be utilized in part by that player in contests on connected/neighbouring battles. We study three broad sets of questions. First, we take the competing networks as given and we ask how they  affect the efforts and the payoffs of the players. Second, we turn to the issue of designing networks and we ask what is the architecture of networks that maximizes aggregate efforts and maximizes player utility. Finally, we decentralize the choice of networks and we ask what is the architecture of networks and the configuration of efforts that arise when players choose both the spillovers and the efforts.\footnote{Applications of our model range widely and they include military conflicts \citep{garfinkel2012oxford,KovenockRoberson2018}, political competition \citep{brams19743,snyder1989election}, advertising competition \citep{Friedman1958}, research and innovation \cite{dasgupta1980industrial},\citet*{BayeKovenockdeVries1984}) and network security (\cite{AlpcanBasar11}, \citet*{Cunningham85}, \cite{DziubinskiGoyal13}, \citet*{GoyalVigier14}; for a survey see \cite{Goyal2023}). \newline 
Spillovers arise naturally in these applications. In advertising competition investments in advertising on one media channel influence other media channels. In political campaigns, expenditures in one constituency may have a bearing on `neighboring' constituencies. In military applications, army assignments to one battlefield may be moved to neighboring  battlefields, depending on the structure of the infrastructure. The issue of network choice is quite natural -- anticipating the important role of transport, generals and governments make decisions on investments in infrastructure and these decisions have a bearing on their allocations of armies and other resources across different battlefields. } 

We begin by noting that, in the absence of networks, the theory of contests tells us a player will allocate resources to a battlefield in inverse proportion to its relative costs and in proportion to the value of the prize. When efforts on a battlefield have spillovers, the marginal returns to efforts equals the marginal returns on that battlefield and on the linked battlefields: these marginal returns will depend on allocations of the player on different battlefields and on the spillovers across the battlefields but they will also depend on the efforts of the other player, which in turn will depend on the network of that player. This interaction between spillovers on own network and on the competing network is central to our analysis. 

Suppose for simplicity that there are only two battlefields. Keeping other parts of the setting as fixed, as we increase the spillover from battlefield A to battlefield B the marginal returns of allocations to battlefield A will increase. As spillovers become large, at some point it seems reasonable to expect that a player will only allocate resources on battlefield A. This suggests that corner solutions will be a natural feature of our setting, a fact that greatly enriches but also complicates the analysis.  

We begin the general analysis by showing there exists a unique (Nash) equilibrium in this game -- this may be an interior or a corner equilibrium. The equilibrium total efforts of the players and their equilibrium payoffs depend on the relative probabilities of winning, which in turn depend on the costs of players, the networks of spillovers, and the values of the battlefields. We provide a closed form solution of these relationships. 

In particular, we establish that the relative probability of player 1 winning a battlefield is equal to the product of the ratio of relative costs and the ratio of Bonacich centrality of this node in the two competing networks. This result brings together the theory of contests with a key insight of the theory of games on networks.\footnote{We note that these properties hold both in the interior and in the corner equilibrium.}

An implication of the result on ratios of costs and network centrality is the following: if the two competing networks are the same, then the centralities are equal and they cancel out. This means that the relative probability of a player winning a battle will depend only on the relative costs of effort. We noted above that the equilibrium total efforts of the players depend on these probabilities. So this means that the aggregate efforts and hence the payoffs are also independent of the networks of spillovers. In other words, the probabilities of winning the battlefields, the total efforts, and the payoffs of the players are independent of the network of spillovers, if the two competing networks are equal. So, for the networks to matter, the two networks have to be different. 

Equipped with this result, we turn to the design of networks that maximize total efforts and total utility. Our equilibrium characterization provides a formula for total efforts of each player. This formula tells us that the player's aggregate efforts are the sum of product of probabilities of the two players winning each of the nodes. As probabilities sum to 1, the product is maximized when both players have an equal probability of winning every node. Given profile of cost of effort, this equality is obtained when the network spillover compensates the higher-cost player appropriately. For equal value battles, this intuition immediately yields our next result: the network that maximizes the sum total of aggregate efforts involves an empty network for the cost advantaged player and the creation of links with strength that offers a \textit{compensating handicap} to the cost disadvantaged player. This idea can be extended to cover unequal value of battles with an appropriate change in networks. We also study the design of networks that maximize utility. When the values
of the battlefields are equal, an empty network for the higher-cost player and a
complete network of spillovers with sufficiently large magnitude of spillovers for
the other player supports very small efforts from both player. This in turn allows equilibrium payoffs to be close to the first-best payoffs.  

The final part of the paper considers a model in which players choose efforts on battles and also choose the strength of spillovers across battlefields.  As costs of links are zero, to make the problem interesting, we assume that  link strength can take value between zero and one. We show that, because costs of links are zero, in equilibrium, the efforts of a player at any battlefield are available at every battle field. This \textit{universal access} property implies that players choose aggregate efforts that are equal to the effort they would choose in a single contest whose prize is equal in value to the sum of the value of prizes at all battlefields. Hence, equilibrium efforts depend only on the relative costs and the value of prizes. This outcome is very different from the networks that maximize aggregate efforts or maximize utility. 

The study of conflict on multiple battlefields is one of the oldest problems in game theory (\cite{Borel1921}). We consider a smooth contest on battlefields and we extend the multi-battlefield model to allow for spillovers in effort assignments across the battlefields. Our paper may be seen as a bridge between the theory of contests and the theory of network games.\footnote{Contributions to the theory of contests include \cite{Borel1921}, \cite{dixit1987strategic},\cite{Friedman1958}, \cite{fu2020optimal}, \cite{konrad2009multi}, and \cite{KovenockRoberson2012}; for surveys of this literature, see \cite{chowdhuryetal2025} and \cite{Konrad09}. Contributions to the theory of network games include \citet*{ballester2006whoiswho}, \citet*{bramoulle2014strategic} and \cite{goyal2001r}. For contributions to the theory of contests in networks, see \cite{FrankeOzturk2015}, \citet*{konig2017networks} and \citet*{XuZenouZhou2022}. In these papers the nodes in the network are the players who choose efforts in contests; by contrast, in our paper, the players are located outside the network and they allocate resources on all nodes of the network. We see our model as covering a different set of applications -- those identified in the first footnote above.} In particular, our characterization of equilibrium in terms of relative costs and relative centralities of nodes across two networks is novel in the context of the literature on competition on networks (\cite{bloch2013pricing}, \citet*{candogan2012optimal}, \cite{fainmesser2016pricing,fainmesser2020pricing}, \citet*{chen2018competitive} and \cite{GoyalKearns2012}. For a recent survey of the literature, see \cite{Goyal2023}). Our results on the design of networks that maximize aggregate efforts (and aggregate utility) appear to be novel in the context of the literature on contest design  (see e.g., \cite{fang2020turning}, \cite{fu2020optimal}, \cite{halac2017contests},   \cite{hinnosaar2024optimal}, \cite{Moldovanu2001optimal}, \cite{moldovanu2007contests}, \cite{olszewski2020performance}).\footnote{The result on using networks with greater spillovers to compensate the higher cost player is reminiscent of the idea of handicapping that has been discussed in the contest literature; see e.g.,\cite{che2003optimal}.} Finally, our result on endogenous networks appears to be novel in the context of the literature on endogenous networks (\cite{BalaGoyal00a}, \cite{galeotti2010law}, \citet*{HendricksPiccioneTan1999}, \cite{Huremovic2021}, \cite{vegaredondo2016}, and \cite{Goyal2023}.)

The rest of the paper is organized as follows. In Section~\ref{sec:model} we define the model.  In Section~\ref{sec:eqprop} we establish existence and uniqueness and characterize the equilibrium.  Section~\ref{sec:spilldes} studies the design of networks. In Section~\ref{sec:endogenous} we consider the nature of networks that would arise if players choose their networks and also choose efforts on battlefields. All the proofs not presented in the main text of the paper are collected in the Appendices. 

\section{Model}
\label{sec:model}

Two players, $1$ and $2$, compete on a set $B = \{1,\ldots,m\}$ of $m$ battlefields, each battlefield $k \in B$
of value $v^k > 0$ (which is for expositional simplicity common to both players). Each player $i \in \{1,2\}$ chooses a vector $\bm{e}_i = (e^k_i)_{k \in B} \in \mathbb{R}_{\geq 0}^B$ of non-negative efforts across the battlefields. Effort is costly and each player $i \in \{1,2\}$ faces a constant marginal cost of effort, $c_i > 0$, that can be different across the players.
For each player $i$ there is a network of non-negative effort spillovers between the battlefields, represented by the adjacency matrix $\bm{\rho}_i = (\rho_i^{k,l})_{k,l\in B}$, $\rho_i^{k,l} \geq 0$ for all $(k,l) \in B^2$.
Assignment of effort $e_i^k$ to battlefield $k$ by player $i$ results in spillover $\rho_i^{k,l} e_i^k$ to
battlefield $l \in B$. The vector of effort assignments by player $i$ to all the battlefields, $\bm{e}_i = (e_i^k)_{k \in B}$ results
in a vector of \emph{effective efforts} of $i$ due to network spillovers, $\bm{y}_i = (y_i^k)_{k \in B}$, with
$y_i^k = e_i^k + \sum_{l \in B\setminus\{k\}} \rho_i^{l,k} e_i^l$. In matrix notation
\begin{equation*}
\bm{y}_i = \left(\mathbf{I} + \bm{\rho}_i^T\right)\bm{e}_i,
\end{equation*}
where $\mathbf{I}$ denotes the identity matrix.

The probabilities of players winning the contest at battlefield $k \in B$, given the effective efforts assignments to $k$, $(y^k_1,y^k_2)$, are determined by a contest success function (CSF), $p : \mathbb{R}_{\geq 0}^2 \rightarrow [0,1]^2$.
The probability of player $i \in \{1,2\}$ winning the contest at battlefield $k \in B$ is $p_i(y^k_1,y^k_2)$.
Throughout most of the paper we focus on the Tullock contest success functions which have the form
\begin{equation*}
p_i(y_1,y_2) = \frac{(y_i)^{\gamma}}{(y_1)^{\gamma} + (y_2)^{\gamma}},
\end{equation*}
with $\gamma \in (0,1]$. The assumption that $\gamma\leq 1$ is fairly standard in the contest literature: it is made to ensure the existence of pure strategy Nash equilibrium. When $\gamma > 1$, payoffs cease to be concave even without spillovers and, depending on the costs and values of battlefields, equilibria in pure strategies may not exist; for a discussion of some of the issues that arise for large $\gamma$, see \cite{BayeKovenockdeVries1984} and~\cite{Ewerhart2015}.

Given the pair of efforts $(\bm{e}_1,\bm{e}_2)$, the expected payoff to player $i \in \{1,2\}$ is
\begin{equation}
\label{eq:payoff}
\payoff{i}(\bm{e}_1,\bm{e}_2)=\sum_{k\in B} v^k p_i^k(\bm{e}_1,\bm{e}_2) - c_i\sum_{k\in B} e_i^k,
\end{equation}
where
\begin{equation}
\label{eq:pcsf}
p_i^k(\bm{e}_1,\bm{e}_2) = p_i(y^k_1,y^k_2)
\end{equation}
is the probability of player $i$ winning battlefield $k$, given the efforts profile $(\bm{e}_1,\bm{e}_2)$.

A strategy profile $(\bm{e}^*_1,\bm{e}^*_2)\in \mathbb{R}^B_{\geq 0} \times \mathbb{R}^B_{\geq 0}$ is  a pure strategy Nash equilibrium of this interconnected conflict game  if, for any player $i\in \{1,2\}$ and any $(\bm{e}_1,\bm{e}_2)\in \mathbb{R}^B_{\geq 0} \times \mathbb{R}^B_{\geq 0}$,
\begin{equation*}
\payoff{i}(\bm{e}^*_1,\bm{e}^*_2) \geq \payoff{i}(\bm{e}_i,\bm{e}^*_{-i}).
\end{equation*}
We are interested in the properties of pure strategy Nash equilibria.

\section{Equilibrium in competing networks}
\label{sec:eqprop}

This section shows that there exists a unique equilibrium and provides a characterization of this equilibrium. The characterization yields a closed-form formula for equilibrium outcomes: the relative probability of player 1 versus player 2 winning a battlefield is equal to the product of the ratio of relative costs and the ratio of Bonacich centrality of this node in the two competing networks. 

We present a two-node example to bring out important aspects of strategic interaction in competing networks. 

\begin{example}
    Two-node Networks
\label{exampletwonodes}
\end{example}

There are two battlefields, $B = \{1,2\}$. Player $1$ has no spillovers across battlefields and player $2$ has a spillover $\lambda \geq 0$ from battlefield $1$ to battlefield $2$ (c.f.~Figure~\ref{fig:2nodes}). The adjacency matrices of spillovers for the two players are
\begin{equation*}
\bm{\rho}_{1} = \begin{bmatrix}
            0 & 0 \\
            0 & 0
            \end{bmatrix} \qquad\qquad\qquad
\bm{\rho}_{2} = \begin{bmatrix}
            0 & \lambda \\
            0 & 0
            \end{bmatrix}.
\end{equation*}
The CSF is Tullock CSF with $\gamma = 1$, the values of both battlefields $v^1 =v^2 = 1$. The costs are $c_1 > 0$ and $c_2 > 0$. In our numerical illustration below, 
we assume that $c_2>c_1$. 

\begin{figure}[htp]
\begin{center}
  \includegraphics{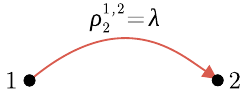}
\end{center}
  \caption{Spillovers for player $2$; no spillovers for player $1$.}
  \label{fig:2nodes}
\end{figure}

The payoffs to the players from strategy profile $(\bm{e}_1, \bm{e}_2)$ are
\begin{align*}
\payoff{1}(\bm{e}_1,\bm{e}_2) & = \frac{e_1^1}{e_1^1 + e_2^1} + \frac{e_1^2}{e_1^2 + \lambda e_2^1 + e_2^2} - (e_1^1 + e_1^2)c_1\\
\payoff{2}(\bm{e}_1,\bm{e}_2) & = \frac{e_2^1}{e_1^1 + e_2^1} + \frac{\lambda e_2^1 + e_2^2}{e_1^2 + \lambda e_2^1 + e_2^2} - (e_2^1 + e_2^2)c_2.
\end{align*}

There exists a unique equilibrium and it has the following structure: when $\lambda$ is small, both players choose positive efforts on both battlefields; when $\lambda$ has an intermediate value, player 1 continues to exert positive efforts on both battlefields, but player 2 allocates efforts only on battlefield 1; when $\lambda$ is large, player 1 only allocates efforts to battlefield 1 and abandons battlefield 2 to player 2, player 2 allocates positive efforts to node 1 only. The  computations are presented in the online appendix.  Figure~\ref{fig:effortsdiag} shows equilibrium efforts of the players as functions of the spillover $\lambda$.

\begin{figure}[htp]
\begin{center}
  \includegraphics[scale=0.9]{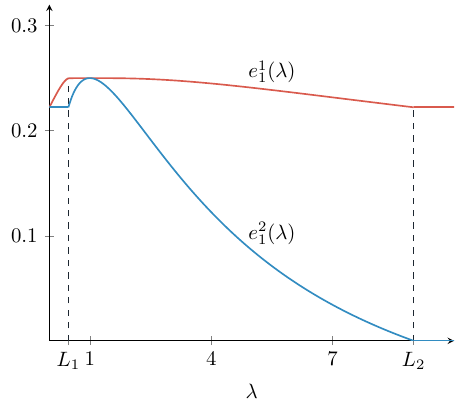}
  \includegraphics[scale=0.9]{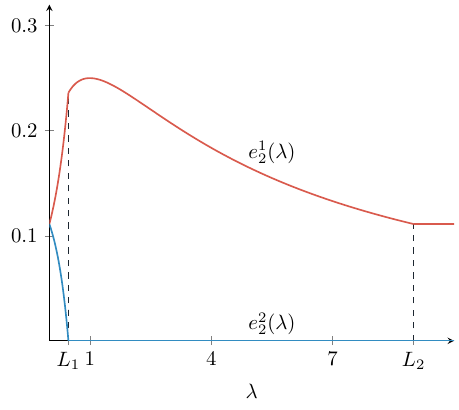}
\end{center}
  \caption{Equilibrium efforts as functions of the spillover $\lambda$ ($c_1 = 1$, $c_2 = 2$).}
  \label{fig:effortsdiag}
\end{figure}

  
$\hfill{\Box}$
This simple example shows that a player's efforts depend on their own network as well as the competing network;  large spillovers may induce a player to abandon a battlefield entirely. We now turn to a general analysis of interconnected contests.

We first establish the existence and uniqueness of equilibrium payoffs, probabilities of winning battlefields, and total efforts. 

\begin{theorem}
\label{th:exunique}
Let $p$ be the Tullock CSF with $\gamma \in (0,1]$. Pure strategy Nash equilibrium exists.
Moreover, equilibrium payoffs, probabilities of winning battlefields, and total efforts are unique.
\end{theorem}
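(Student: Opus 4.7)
\emph{Existence.} For fixed $\bm{e}_{-i}$, the payoff $\payoff{i}(\cdot,\bm{e}_{-i})$ is concave in $\bm{e}_i$: the single-battlefield Tullock payoff is concave in own effective effort for $\gamma\in(0,1]$ by a routine second-derivative check, the map $\bm{e}_i \mapsto \bm{y}_i = (\mathbf{I}+\bm{\rho}_i^T)\bm{e}_i$ is affine (so pre-composition preserves concavity), and sums with a linear cost term leave concavity intact. Since $\payoff{i}\leq V - c_i\sum_k e_i^k$ with $V = \sum_k v^k$, every best response satisfies $\sum_k e_i^k \leq V/c_i$, so I may restrict both players to the compact convex polytope $K_i = \{\bm{e}_i \geq \bm{0}\colon \sum_k e_i^k \leq V/c_i\}$ without losing equilibria. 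The payoff is jointly continuous on $K_1\times K_2$ away from $\bm{y}_1=\bm{y}_2=\bm{0}$; that point cannot appear in a best response because a vanishingly small positive effort secures strictly positive payoff. A standard $\varepsilon$-perturbation plus Glicksberg's theorem then yields a pure-strategy Nash equilibrium.

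\emph{Uniqueness.} Using $\partial y_i^k/\partial e_i^l = \delta_{k,l} + \rho_i^{l,k}$ and $\partial p_i^k/\partial y_i^k = \gamma p_i^k(1-p_i^k)/y_i^k$, the KKT conditions at any equilibrium read, for each $i$ and $l\in B$,
\[
\sum_{k\in B} v^k \frac{\gamma p_i^k(1-p_i^k)}{y_i^k}\bigl(\delta_{k,l}+\rho_i^{l,k}\bigr) \leq c_i,
\]
with equality whenever $e_i^l>0$. I would first establish uniqueness of the effective-effort pair $(\bm{y}_1^*,\bm{y}_2^*)$: reparametrizing the game in effective-effort space (where each strategy set is a polyhedral cone) turns it into a standard two-player Tullock contest across battlefields, and a Rosen (1965) diagonal-strict-concavity argument with suitable positive weights gives strict monotonicity of the pseudo-gradient on the active face of the polytope, hence uniqueness of $(\bm{y}_1^*,\bm{y}_2^*)$. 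This pins down all winning probabilities $p_i^k$ and per-battlefield payoff contributions $v^k p_i^k$. For total effort I would contract the KKT equality against $e_i^{l,*}$ and swap the summation order; using $\sum_l(\delta_{k,l}+\rho_i^{l,k}) e_i^l = y_i^k$, the double sum collapses to
\[
c_i \sum_{k\in B} e_i^{k,*} \;=\; \gamma \sum_{k\in B} v^k p_i^k(1-p_i^k),
\]
so the total effort of each player is determined by the unique probabilities; substituting into \eqref{eq:payoff} then yields unique equilibrium payoffs.

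\emph{Principal obstacle.} The hard step is uniqueness of $(\bm{y}_1^*,\bm{y}_2^*)$ in the presence of corner equilibria and possibly singular $\mathbf{I}+\bm{\rho}_i^T$. Corner solutions introduce KKT slackness, so the diagonal-strict-concavity argument must be carried out on the active face of the effective-effort polytope rather than on its interior; singularity means individual equilibrium effort vectors $\bm{e}_i^*$ may themselves fail to be unique, which is precisely why the theorem claims uniqueness only of payoffs, winning probabilities, and total efforts. Both issues are reconciled by the contraction identity above, which captures exactly the equilibrium invariants that survive these degeneracies.
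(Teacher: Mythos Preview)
Your existence sketch is essentially the paper's: concavity via affine pre-composition, compactification of the strategy set, an $\varepsilon$-truncation to dodge the discontinuity at the origin, Glicksberg, and a limit argument. That part is fine.

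The uniqueness argument has a genuine gap. You assert that a Rosen diagonal-strict-concavity argument will deliver uniqueness of the effective-effort pair $(\bm{y}_1^*,\bm{y}_2^*)$, but this is \emph{false} in general: the paper exhibits (in its online appendix) a three-battlefield instance with $\gamma=1$ in which there is a continuum of equilibria with different values of $y_1^3$, all sharing the same winning probabilities, total efforts, and payoffs. In that example battlefield $3$ receives zero effective effort from player $2$, so $p_1^3=1$ for any $y_1^3>0$; moreover the associated marginal cost coefficient $\mu_1^3$ vanishes, so player $1$'s payoff is flat in $y_1^3$ and strict concavity (hence DSC) fails precisely there. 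Your hedge about restricting to the ``active face'' does not rescue this: the multiplicity lives on the active face.

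What the paper does instead is more delicate. It first shows equilibria are \emph{interchangeable} by adding $c_{-i}\sum_k e_{-i}^k$ to $\payoff{i}$, which leaves best responses unchanged and makes the game constant-sum. It then proves that the partition of battlefields into $B^+$ (both players have positive effective effort), $A_1$, and $A_2$ is the same across all equilibria, and that effective efforts are unique \emph{only on} $B^+$. Probabilities are then unique: on $B^+$ because effective efforts are, and on $A_i$ trivially because they are $0$ and $1$. Your contraction identity $c_i\sum_k e_i^{k,*}=\gamma\sum_k v^k p_i^k(1-p_i^k)$ is correct and is exactly the paper's formula \eqref{eq:char:total}, but it requires uniqueness of the $p_i^k$ as input, and that is the step you have not established. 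The missing ingredient is the interchangeability/constant-sum observation together with the battlefield-partition argument, not a global DSC bound.
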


The proof is presented in the appendix. 

Payoffs of players are discontinuous when one or more battlefields receive zero effective effort from both players; we employ standard methods involving bounded strategy spaces and take limits on these bounds to establish existence  (see e.g., \cite{XuZenouZhou2022}). On the issue of uniqueness, the main challenge is the possibility of corner equilibrium that arise naturally in a model with spillovers,   particularly those where only one player exerts  positive effective efforts at some battlefields (this is illustrated in Example~\ref{exampletwonodes}). 

A first observation is that equilibria in our model are \textit{interchangeable} and that in any equilibrium every battlefield receives positive effective effort from at least one player. Given this property, in any equilibrium the set of battlefields can be partitioned into three sets: those where both players exert positive effective efforts, set $B^{+}$, and those where only one of the player exerts positive effective effort, sets $A_1$ and $A_2$ (some of these parts may be empty). Next, we establish that this partition of battlefields is the same across all equilibria, and that equilibrium effective efforts of the players are unique on the set $B^{+}$. A direct consequence of these facts is uniqueness of probabilities of different players winning of winning the battlefields. Using this and interchangeability of equilibria, we obtain uniqueness of equilibrium payoffs and equilibrium total efforts.
 
It is worth noting that the result can be strengthened to uniqueness in efforts if $\gamma \in (0,1)$. In this case, the payoffs at each battlefield satisfies a form of Inada conditions: the marginal returns to effective efforts at a battlefield go to $\infty$ when the effective effort there goes to $0$. This, together with the fact that in equilibrium each battlefield receives positive effective effort from at least one player, immediately means that in equilibrium both players exert positive effective effort at every battlefield. This in turn implies that equilibrium effective efforts are unique. If, in addition, the matrices $\mathbf{I} + \bm{\rho}_i$ are non-singular, for $i \in \{1,2\}$, then equilibrium efforts are unique. Hence, generically, equilibria are unique for $\gamma \in (0,1)$. Matters are more complicated when $\gamma = 1$: now the Inada condition is not satisfied. It is then possible that some battlefields receive zero effective  efforts from one player and equilibrium effective efforts are not unique. We give an example of such a possibility in the (on-line) appendix.

Equipped with Theorem \ref{th:exunique}, we turn next to a characterization of this equilibrium. The standard approach to obtain equilibrium efforts in contest models involves solving the constrained optimization problem associated with the best-response conditions. If the solution is interior, the best response conditions result in a system of equations that characterizes equilibrium efforts. If the solution is corner, which is a typical situation in our model, the best response conditions feature both equations and inequalities. For expositional ease, and to develop intuitions, we will first state the result for the interior equilibrium; our arguments are general and also carry over to corner equilibrium; we state the general result below as Theorem~\ref{th:char} in this section. 

The marginal cost of effort for player $i$ is $c_i$: in an interior equilibrium, marginal return to effort on any node must be $c_i$. The marginal return is  equal to the sum of marginal returns to the contest on that node and the marginal returns to contests on linked nodes. Let us define $\mu_i^k$ as follows: 
\begin{equation*}
\mu_i^k = \left. \frac{\partial(p_i^k(\bm{e}_1,\bm{e}_2) v^k)}{\partial e_i^k} \right/ \frac{\partial(c_i e_i^k)}{\partial e_i^k}
\end{equation*}
In an interior equilibrium, the first order conditions imply that the following equation must hold in matrix form:
\begin{equation*}
\left(\mathbf{I} + \bm{\rho}_i\right) \bm{\mu}_i = \bm{1}
\end{equation*}
Rewriting, provided that $\mathbf{I} + \bm{\rho}_i$ is non-singular,
\begin{equation}
\label{eq:mu}
\bm{\mu}_i= \left(\mathbf{I} + \bm{\rho}_i\right)^{-1}\bm{1}.
\end{equation}
The marginal returns to a player from efforts at a node are proportional to the node's Katz-Bonacich network centrality in their own network, see e.g., \cite{bonacich2001eigenvector}, \cite{katz1953}.\footnote{In this case, Katz's $\alpha$ centrality has parameter $\alpha = -1$ and the external importance of every node is equal to $1$.}

Recall, that given two vectors $\bm{z}\in \mathbb{R}^S$ and $\bm{y} \in \mathbb{R}^S$, $\bm{z} \odot \bm{y} = (z_i y_i)_{i\in S}$ is the Hadamard product (i.e. entry-wise product) of $\bm{z}$ and $\bm{y}$ and $\bm{z} \oslash \bm{y} = (z_i/y_i)_{i\in S}$ is the Hadamard division (i.e. entry-wise division) of $\bm{z}$ by $\bm{y}$ when $y_i \neq 0$.

\begin{theorem}
\label{th:char1}
Let $p$ be the Tullock CSF with $\gamma \in (0,1]$ and assume $\mathbf{I} + \bm{\rho}_i$ is non-singular, for $i \in \{1,2\}$. In an interior equilibrium $(\bm{e}_1,\bm{e}_2)$
\begin{equation*}
\begin{aligned}
\bm{e}_i & = \frac{\gamma}{c_i} \left(\mathbf{I} + {\bm{\rho}_i}^T\right)^{-1} \left(\bm{p}_1 \odot\bm{p}_2 \odot \bm{v} \oslash \bm{\mu}_i\right)
\end{aligned}
\end{equation*}
where
\begin{equation*}
\begin{aligned}
& p^k_i = \frac{(\mu^k_{-i} c_{-i})^{\gamma}}{(\mu^k_1 c_1)^{\gamma} + (\mu^k_2 c_2)^{\gamma}} \textrm{, for all $k \in B$}
\end{aligned}
\end{equation*}
\end{theorem}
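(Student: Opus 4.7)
The plan is to derive both formulas directly from the interior first-order conditions, exploiting two structural features of the Tullock CSF: (i) $\partial p_i/\partial y_i = \gamma p_1 p_2/y_i$ when evaluated at an effective-effort argument $(y_1,y_2)$, and (ii) scale invariance of $p_i$, which pins down $p_i^k$ as soon as the ratio $y_1^k/y_2^k$ is known. At an interior equilibrium every $e_i^l>0$, so $\partial\payoff{i}/\partial e_i^l=0$ holds coordinatewise with no complementary slackness. Concavity of the Tullock objective in own effort for $\gamma\in(0,1]$ (standard: the per-battlefield probability is a concave function of own effective effort, composed with a linear map from $\bm{e}_i$) guarantees that any such critical point is a best response, so together with Theorem~\ref{th:exunique} the analysis reduces to unpacking the FOCs.

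Computing $\partial p_i^k/\partial e_i^l$ via the chain rule through the transform $y_i^k=\sum_l(\delta_{lk}+\rho_i^{l,k})e_i^l$ yields $\partial p_i^k/\partial e_i^l=(\delta_{lk}+\rho_i^{l,k})\gamma p_1^k p_2^k/y_i^k$, so the FOC for $e_i^l$ reads
\[
\sum_k (\delta_{lk}+\rho_i^{l,k})\,\mu_i^k = 1,
\]
where $\mu_i^k=\gamma v^k p_1^k p_2^k/(c_i y_i^k)$ is exactly what the paper's general definition of $\mu_i^k$ reduces to under Tullock. Stacking over $l$ gives $(\mathbf{I}+\bm{\rho}_i)\bm{\mu}_i=\bm{1}$, i.e.~\eqref{eq:mu}. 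From the same identity $\mu_i^k c_i y_i^k=\gamma v^k p_1^k p_2^k$, taking the ratio across $i\in\{1,2\}$ cancels the common right-hand side to yield $y_1^k/y_2^k=(\mu_2^k c_2)/(\mu_1^k c_1)$; plugging this into the Tullock identity $p_i^k/p_{-i}^k=(y_i^k/y_{-i}^k)^{\gamma}$ together with $p_1^k+p_2^k=1$ delivers the claimed closed form for $p_i^k$.

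For $\bm{e}_i$, the same relation rewrites coordinatewise as $\bm{y}_i=(\gamma/c_i)\,\bm{p}_1\odot\bm{p}_2\odot\bm{v}\oslash\bm{\mu}_i$; inverting $\bm{y}_i=(\mathbf{I}+\bm{\rho}_i^T)\bm{e}_i$, which is licit since $\det(\mathbf{I}+\bm{\rho}_i^T)=\det(\mathbf{I}+\bm{\rho}_i)\neq 0$, gives the stated formula. The argument is essentially algebraic; the main point requiring care is bookkeeping which spillover matrix appears where. The untransposed $\mathbf{I}+\bm{\rho}_i$ shows up when aggregating marginal returns across battlefields affected by varying $e_i^l$ (the outgoing-link view, used to form $\bm{\mu}_i$), whereas the transposed $\mathbf{I}+\bm{\rho}_i^T$ is baked into the definition of effective effort $\bm{y}_i$ in terms of raw effort $\bm{e}_i$ (the incoming-link view). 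Conflating the two—an easy slip given that both matrices share the same determinant and spectrum—would produce an incorrect formula.
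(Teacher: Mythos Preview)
Your proof is correct and follows essentially the same route as the paper: write the interior first-order conditions, define $\mu_i^k=\gamma v^k p_1^k p_2^k/(c_i y_i^k)$, stack to obtain $(\mathbf{I}+\bm{\rho}_i)\bm{\mu}_i=\bm{1}$, extract the probability formula from the ratio $\mu_1^k c_1 y_1^k=\mu_2^k c_2 y_2^k$, and invert $\bm{y}_i=(\mathbf{I}+\bm{\rho}_i^T)\bm{e}_i$ to recover $\bm{e}_i$. The paper's proof (embedded in that of Theorem~\ref{th:char}) carries out precisely these steps, with the only difference being that it handles the corner case simultaneously via slack variables; restricted to the interior case, the arguments coincide.
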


A key insight of Theorem~\ref{th:char1} is that the ratio of probabilities of winning a node $k \in B$ equals
\begin{equation}
\frac{p^k_i}{p^k_j} = \left(\frac{\mu^k_{j} c_{j}}{\mu^k_i c_i}\right)^{\gamma},
\end{equation}
at any interior equilibrium, i.e., the ratio of the Katz-Bonacich centrality of this node in the two networks times the ratio of costs (adjusted for the coefficient of the Tullock contest function). This ratio in battle $k$ exceeds one if and only if  $\mu^k_i c_i < \mu^k_j c_j$, meaning that player $i$'s marginal cost,  adjusted by  network centrality $\mu^k_i$ in battle $k$, is lower than that of $j$.   Since the sum of the winning probabilities equals one, this ratio fully determines both the winning probabilities of each player and their equilibrium efforts in  Theorem \ref{th:char1}. This expression on relative probabilities also helps us obtain analytical solutions for the equilibrium  based on the model's primitives (assuming the existence of an interior equilibrium).

Let us elaborate on the ideas underlying this result. When matrices $\mathbf{I} + \bm{\rho}_i$ are non-singular for $i \in \{1,2\}$, our game can be transformed to a game in the space of effective efforts, where the players choose effective efforts and the costs are redefined to map the cost of effective efforts to the cost of the associated real efforts. In this new game, the set of strategies of player $i$, $Y_i$, is the result of the linear mapping $\mathbf{I}+\bm{\rho}_i^T$ of the first quadrant $\mathbb{R}^B_{\geq 0}$ (the strategy set of player $i$ in the original game),
\begin{equation*}
Y_i = \{\bm{y_i} = \left(\mathbf{I}+\bm{\rho}_i^T\right) \bm{e_i}, \textrm{ where }  \bm{e_i} \in \mathbb{R}^B_{\geq 0} \}.
\end{equation*}
The payoff to player $i$ from effective effort profile $(\bm{y}_1,\bm{y}_2) \in Y_1 \times Y_2$ in the new game is
\begin{equation*}
\begin{aligned}
\widetilde{\mathit{\Pi}}_i(\bm{y}_1,\bm{y}_2) = \sum_{k\in B} p_i(y^k_1,y^k_2) v^k - c_i \bm{1}^T \left(\mathbf{I}+\bm{\rho}_i^T\right)^{-1}\bm{y}_i.
\end{aligned}
\end{equation*}
Using~\eqref{eq:mu}, the payoff to player $i$ can be rewritten as
\begin{equation*}
\widetilde{\mathit{\Pi}}_i(\bm{y}_1,\bm{y}_2) = \sum_{k\in B} p_i(y^k_1,y^k_2) v^k - c_i \bm{\mu}_i^T \bm{y}_i = \sum_{k\in B} p_i(y^k_1,y^k_2) v^k - \sum_{k \in B} c_i \mu_i^k y^k_i.
\end{equation*}
This shows that the modified game is a game on multiple battlefields where the costs of each player $i$ are heterogeneous across the battlefields and the cost at a battlefield $k$ of $i$ is equal to $c_i \mu^k_i$.

In addition, the network restricts the space of effective efforts that player $i$ can choose in this game. If the restriction is not binding, we have an interior equilibrium where the probability of winning battlefield $k$ by player $i$ is equal to
\begin{equation*}
\frac{(c_{-i}\mu^k_{-i})^{\gamma}}{(c_{1}\mu^k_{1})^{\gamma} + (c_{2}\mu^k_{2})^{\gamma}}. 
\end{equation*} 
In this case, the quantities $\bm{\mu}_i$ of player $i$ coincide with Katz-Bonaccich centralities. The equilibrium effective efforts in our game are the same as the equilibrium of a game on multiple battlefields with heterogeneous costs of players across the battlefields and with unrestricted strategy spaces (where each player can choose any non-negative effort at each battlefield).

We next present the general characterization result that covers corner equilibrium. This requires us to develop some new notation. For each player $i \in \{1,2\}$ there exists a set of battlefields, $P_i \subseteq B$,
at which $i$ exerts positive effort.  Given a vector $\bm{z} \in \mathbb{R}^B$ and a set $S \subseteq B$, we will use $\bm{z}^S = (z^k)_{k \in S}$ to denote the vector $\bm{z}$ restricted to the entries in $S$. Similarly, given a matrix $\bm{a} \in \mathbb{R}^{B\times B}$ and two sets $S\subseteq B$ and $T\subseteq B$, we will use $\bm{a}^{S,T} = (a^{k,l})_{k\in S,l\in T}$ to denote matrix $\bm{a}$ restricted to the entries in $S\times T$. In particular, matrix $\bm{\rho}_i^{S,T}$ is the adjacency matrix of the spillovers from the battlefields in $S$ to the battlefields in $T$ of player $i \in \{1,2\}$. Given a set $S \subseteq B$, we will also use $-S = B\setminus S$ to denote the complement of $S$.

\begin{theorem}
\label{th:char}
Let $p$ be the Tullock CSF with $\gamma \in (0,1]$ and assume $\mathbf{I} + \bm{\rho}_i$ and all its submatrices are non-singular, for $i \in \{1,2\}$. In equilibrium, every battlefield receives positive effective effort from at least one player $i \in \{1,2\}$.
A strategy profile $(\bm{e}_1,\bm{e}_2)$ is a Nash equilibrium if and only if there exist sets of battlefields $P_1\subseteq B$
and $P_2\subseteq B$ such that, for all $i \in \{1,2\}$, 

\begin{equation}
\label{eq:char:e}
\begin{aligned}
\bm{e}_i^{-P_i} & = \bm{0},\\
\bm{e}_i^{P_i} & = \frac{\gamma}{c_i} \left(\mathbf{I} + {\bm{\rho}_i^{P_i,P_i}}^T\right)^{-1} \left(\bm{p}_1^{P_i} \odot\bm{p}_2^{P_i} \odot \bm{v}^{P_i} \oslash \bm{\mu}_i^{P_i}\right),
\end{aligned}
\end{equation}
and $\bm{\mu}_i \geq 0$ satisfies
\begin{equation}
\label{eq:char:mu}
\begin{aligned}
& \left(\mathbf{I} + \bm{\rho}_i^{P_i,P_i}\right) \bm{\mu}_i^{P_i} + \bm{\rho}_i^{P_i,-P_i} \bm{\mu}_i^{-P_i} = \bm{1}\\
& {\bm{\rho}_i^{P_i,-P_i}}^T \left(\mathbf{I} + {\bm{\rho}_i^{P_i,P_i}}^T\right)^{-1} \left(\bm{p}_1^{P_i} \odot \bm{p}_2^{P_i} \odot \bm{v}^{P_i} \oslash \bm{\mu}_i^{P_i}\right) & = \bm{p}_1^{-P_i} \odot \bm{p}_2^{-P_i} \odot \bm{v}^{-P_i} \oslash \bm{\mu}_i^{-P_i}
\end{aligned}
\end{equation}
and 
\begin{equation}
\label{eq:char:pr}
p^k_i = \frac{(\mu^k_{-i} c_{-i})^{\gamma}}{(\mu^k_1 c_1)^{\gamma} + (\mu^k_2 c_2)^{\gamma}} \textrm{, for all $k \in B$}.
\end{equation}

Moreover, the equilibrium total effort of player $i$ is
\begin{equation}
\label{eq:char:total}
\sum_{k \in B} e^k_i = \frac{\gamma}{c_i} \left(\sum_{k\in B} p_1^k p_2^k v^k\right)
\end{equation}
and the payoff to player $i$ is
\begin{equation}
\label{eq:char:payoff}
\payoff{i}(\bm{e}_1,\bm{e}_2) = \sum_{k \in B} \left(p^k_i\left(1 - \gamma p^k_{-i}\right) v^k\right).
\end{equation}
\end{theorem}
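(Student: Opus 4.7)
The plan is to derive the characterization from the KKT conditions of each player's best-response problem, and then to obtain the aggregate formulas (\ref{eq:char:total})--(\ref{eq:char:payoff}) by direct algebraic manipulation of (\ref{eq:char:e})--(\ref{eq:char:pr}). First I would establish the preliminary claim that each battlefield receives positive effective effort from at least one player: if $y_1^k = y_2^k = 0$ then the Tullock CSF is discontinuous at $k$ and either player profits strictly from an infinitesimal direct effort at $k$, contradicting equilibrium. Fix an equilibrium $(\bm{e}_1,\bm{e}_2)$ and let $P_i = \{k \in B : e_i^k > 0\}$.

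Because $\partial y_i^l/\partial e_i^k = (\mathbf{I}+\bm{\rho}_i)_{k,l}$, the partial derivative $\partial \payoff{i}/\partial e_i^k$ equals $c_i(((\mathbf{I}+\bm{\rho}_i)\bm{\mu}_i)_k - 1)$. The KKT conditions attached to $e_i^k \geq 0$ thus yield $((\mathbf{I}+\bm{\rho}_i)\bm{\mu}_i)_k = 1$ for $k \in P_i$ and $\leq 1$ for $k \in -P_i$, with $\bm{\mu}_i \geq \bm{0}$ since the Tullock CSF is monotone in its own argument; restricting the equality to $k \in P_i$ and expanding in block form produces the first line of (\ref{eq:char:mu}). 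For Tullock with $\gamma \in (0,1]$, direct differentiation gives $\partial p_i^k/\partial y_i^k = \gamma p_i^k p_{-i}^k / y_i^k$ whenever $y_i^k > 0$, hence the identity
\begin{equation*}
c_i \mu_i^k y_i^k = \gamma v^k p_1^k p_2^k.
\end{equation*}
Equating this identity across $i=1,2$ gives $y_1^k/y_2^k = (c_2\mu_2^k)/(c_1\mu_1^k)$; combining with the Tullock ratio $p_1^k/p_2^k = (y_1^k/y_2^k)^{\gamma}$ and $p_1^k + p_2^k = 1$ delivers (\ref{eq:char:pr}). Solving the identity for $y_i^k$ on $P_i$ and using $\bm{y}_i^{P_i} = (\mathbf{I} + (\bm{\rho}_i^{P_i,P_i})^T)\bm{e}_i^{P_i}$ (which holds since $\bm{e}_i^{-P_i} = \bm{0}$) produces (\ref{eq:char:e}); applying the same identity on $-P_i$ with $\bm{y}_i^{-P_i} = (\bm{\rho}_i^{P_i,-P_i})^T\bm{e}_i^{P_i}$ produces the second line of (\ref{eq:char:mu}).

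For the aggregates, set $\bm{w}^{P_i} := \bm{p}_1^{P_i} \odot \bm{p}_2^{P_i} \odot \bm{v}^{P_i} \oslash \bm{\mu}_i^{P_i}$ and $\bm{x} := (\mathbf{I}+(\bm{\rho}_i^{P_i,P_i})^T)^{-1}\bm{w}^{P_i}$, so that $\sum_{k\in B} e_i^k = (\gamma/c_i)\bm{1}^T\bm{x}$. Taking the transpose of the first line of (\ref{eq:char:mu}) to express $\bm{1}^T = (\bm{\mu}_i^{P_i})^T(\mathbf{I}+(\bm{\rho}_i^{P_i,P_i})^T) + (\bm{\mu}_i^{-P_i})^T(\bm{\rho}_i^{P_i,-P_i})^T$, multiplying on the right by $\bm{x}$, and using the second line of (\ref{eq:char:mu}) to collapse $(\bm{\rho}_i^{P_i,-P_i})^T\bm{x} = \bm{w}^{-P_i}$, yields $\bm{1}^T\bm{x} = \bm{\mu}_i^T \bm{w} = \sum_{k\in B} p_1^k p_2^k v^k$, establishing (\ref{eq:char:total}). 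Substituting into $\payoff{i} = \sum_{k\in B} v^k p_i^k - c_i \sum_{k\in B} e_i^k$ and using $p_i^k - \gamma p_1^k p_2^k = p_i^k(1 - \gamma p_{-i}^k)$ produces (\ref{eq:char:payoff}).

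The main obstacle is the sufficiency direction together with the treatment of boundary behavior. Sufficiency requires not only the stated equations but also the KKT inequality $((\mathbf{I}+\bm{\rho}_i)\bm{\mu}_i)_k \leq 1$ for $k \in -P_i$; under the concavity of each player's payoff in their own effective effort (which holds for Tullock with $\gamma \leq 1$ via the effective-effort reformulation preceding Theorem \ref{th:char1}), KKT sufficiency then delivers optimality. The non-singularity hypotheses on $\mathbf{I}+\bm{\rho}_i$ and its submatrices are used to invert the relevant linear maps throughout. At battlefields where $y_i^k = 0$ but $y_{-i}^k > 0$, one must interpret $\mu_i^k$ via a right-derivative and verify that both sides of the second line of (\ref{eq:char:mu}) vanish consistently with $p_i^k p_{-i}^k = 0$, so the system remains well-posed.
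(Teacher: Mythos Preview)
Your proposal is correct and follows essentially the same route as the paper: both establish positive effective effort at every battlefield, write the KKT conditions in terms of the marginal-rate-of-substitution quantities $\mu_i^k$, exploit the Tullock identity $c_i\mu_i^k y_i^k=\gamma v^k p_1^k p_2^k$ to obtain~(\ref{eq:char:pr}) and~(\ref{eq:char:e}), and then partition into $P_i$ and $-P_i$ to get the block equations~(\ref{eq:char:mu}). The only notable difference is cosmetic: for the total-effort formula~(\ref{eq:char:total}) the paper introduces slack variables $s_i^k$ and uses complementary slackness to write $\bm{1}^T\bm{e}_i=(\bm{1}-\bm{s}_i)^T\bm{e}_i$ before invoking $(\bm{1}-\bm{s}_i)^T(\mathbf{I}+\bm{\rho}_i^T)^{-1}=\bm{\mu}_i^T$, whereas you transpose the first line of~(\ref{eq:char:mu}) and feed the second line in directly; the two computations are algebraically identical. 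Your caveat about the boundary case $y_i^k=0$ with $\gamma=1$ is handled in the paper by the explicit computation $\mu_i^k=v^k/(y_{-i}^k c_i)$, $\mu_{-i}^k=0$, so that~(\ref{eq:char:pr}) continues to hold; your right-derivative remark points to exactly this.
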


\begin{proof} A strategy profile $(\bm{e}_1,\bm{e}_2)$ with the associated effective efforts profile, $(\bm{y}_1,\bm{y}_2)$, is a Nash equilibrium of the game if and only if, for any for any $i\in \{1,2\}$ and any $k \in B$ it satisfies
\begin{equation}
\label{eq:kkt:h1}
\begin{aligned}
\frac{\partial \payoff{i}}{\partial e_i^k} = \frac{\gamma p_i^k(1-p_i^k)}{y_i^k}v^k + \sum_{l \in B\setminus \{k\}} \rho^{kl}_i
\frac{\gamma p_i^l(1-p_i^l)}{y_i^l} v^l-c_i & = 0 \textrm{, if $e_i^k > 0$,} \\
\frac{\partial \payoff{i}}{\partial e_i^k}=\frac{\gamma p_i^k(1-p_i^k)}{y_i^k}v^k + \sum_{l \in B\setminus \{k\}} \rho^{kl}_i
\frac{\gamma p_i^l(1-p_i^l)}{y_i^l} v^l-c_i & \leq 0 \textrm{, if $e_i^k = 0$,}
\end{aligned}
\end{equation}
where the winning probability $p_i^k$ is given in~\eqref{eq:pcsf}.
The ``only if'' direction is clear as~\eqref{eq:kkt:h1} is the first-order condition of player $i$ with respect to $e_i^k$ (the payoff of $i$ is continuously differentiable in $\bm{e}_i$ as $\bm{e}$ is of type $S^1$).
The ``if'' direction follows because the payoff function of player $i$ is concave and the set of admissible efforts is convex.
Therefore the local optimality condition given in~\eqref{eq:kkt:h1} implies global optimality.
By Lemma~\ref{lemma:aux:2}, in the appendix, $y^k_1 + y^k_2 > 0$, for all $k \in B$. By Lemma~\ref{lemma:uniqeff} in the appendix, in the case of $\gamma \in (0,1)$, $y^k_i > 0$ for all $i \in \{1,2\}$ and all $k \in B$. In the case of $\gamma = 1$ the derivatives are finite for all $y^k_i \geq 0$ because $p_i^k/y_i^k = 1/(y^k_1 + y^k_2)$. Thus all the derivatives are well defined around every equilibrium and the payoff functions are differentiable in the neighbourhood of any equilibrium $(\bm{y}_1,\bm{y}_2)$.
        
Let
\begin{equation*}
\mu_i^k = \left. \frac{\partial(p_i^k v^k)}{\partial e_i^k} \right/ \frac{\partial(c_i e_i^k)}{\partial e_i^k}
        = \frac{\partial(p_i^k v^k)}{\partial(c_i e_i^k)}
\end{equation*}
be the marginal rate of substitution between the expected reward from winning the prize and the cost of effort.
At strategy profile $(\bm{e}_1,\bm{e}_2)$,
\begin{equation*}
\mu_i^k = \frac{\gamma (y_{i}^k)^{\gamma-1} (y_{-i}^k)^{\gamma}}{(({y}_1^k)^{\gamma} + ({y}_2^k)^{\gamma})^2 c_i}v^k,
\end{equation*}
which in the case of $y_i^k > 0$ can be written as
\begin{equation}
\label{eq:muhdef}
\mu_i^k = \frac{\gamma (y_{i}^k)^{\gamma-1} (y_{-i}^k)^{\gamma}}{(({y}_1^k)^{\gamma} + ({y}_2^k)^{\gamma})^2 c_i}v^k = \frac{\gamma p_i^k(1-p_i^k)}{{y}_i^k c_i}v^k = \frac{\gamma p_1^k p_2^k}{{y}_i^k c_i}v^k,
\end{equation}
where $p_i^k = p_i({y}_1^k,{y}_2^k)$. 
Using that, \eqref{eq:kkt:h1} can be rewritten as
\begin{equation}
\label{eq:muh:0}
\begin{aligned}
\mu_i^k + \sum_{l \in B\setminus \{k\}} \rho^{kl}_i \mu_i^l = 1\ \textrm{, if $e_i^k > 0$,} \\
\mu_i^k + \sum_{l \in B\setminus \{k\}} \rho^{kl}_i \mu_i^l \leq 1\ \textrm{, if $e_i^k = 0$.}
\end{aligned}
\end{equation}

The probability of winning battle $k$ by player $i$ at effective efforts profile $(\bm{y}_1,\bm{y}_2)$ is
\begin{equation*}
p^k_i = \frac{(y^k_i)^{\gamma}}{(y^k_1)^{\gamma} + (y^k_2)^{\gamma}} = \frac{1}{1 + \left(\frac{y^k_{-i}}{y^k_{i}}\right)^{\gamma}}
\end{equation*}
which, by~\eqref{eq:muhdef}, in the case of $y_i^k > 0$ is equal to
\begin{equation}
\label{eq:probabilityh}
p^k_i = \frac{1}{1 + \left(\frac{\mu^k_{i} c_i}{\mu^k_{-i} c_{-i}}\right)^{\gamma}} = \frac{(\mu^k_{-i} c_{-i})^{\gamma}}{(\mu^k_1 c_1)^{\gamma} + (\mu^k_2 c_2)^{\gamma}}.
\end{equation}
In the case of $y_i^k = 0$, $y_{-i}^k > 0$, and $\gamma = 1$, 
\begin{equation*}
\mu_i^k = \frac{y_{-i}^k}{({y}_1^k + {y}_2^k)^2 c_i}v^k = \frac{v^k}{y_{-i}^k c_i},
\end{equation*}
\begin{equation*}
\mu_{-i}^k = \frac{y_{i}^k}{({y}_1^k + {y}_2^k)^2 c_i}v^k = 0,
\end{equation*}
$p_i^k = 0$, and $p_{-i}^k = 1$. Hence~\eqref{eq:probabilityh} is valid in that case as well.

Inserting this in~\eqref{eq:muhdef} and solving for $y^k_i$ we obtain
\begin{equation*}
y^k_i = \frac{\gamma(\mu^k_{i}c_i)^{\gamma-1}(\mu^k_{-i} c_{-i})^{\gamma}}{((\mu^k_1 c_1)^{\gamma} + (\mu^k_2 c_2)^{\gamma})^2}v^k.
\end{equation*}
Abusing the notation slightly in the case of $\mu_i^k = 0$, this can be written more concisely as
\begin{equation*}
y^k_i = \frac{\gamma p_1^k p_2^k}{\mu_i^k c_i}v^k.
\end{equation*}
In matrix form,
\begin{equation*}
\bm{y}_i = \frac{\gamma}{c_i}
\begin{bmatrix}
\frac{p_1^1 p_2^1 v^1 }{\mu^1_1} \\
\vdots\\
\frac{p_1^m p_2^m v^m}{\mu^m_1}
\end{bmatrix}
= \frac{\gamma}{c_i}
\begin{bmatrix}
p_1^1 p_2^1 v^1 \\
\vdots\\
p_1^m p_2^m v^m
\end{bmatrix} \oslash \bm{\mu}_i
= \frac{\gamma}{c_i} \bm{p}_1 \odot \bm{p}_2 \odot \bm{v} \oslash \bm{\mu}_i
\end{equation*}
and, given the definition of $\bm{y}_i$ in terms of $\bm{e}_i$,
\begin{equation}
\label{eq:effort}
\bm{e}_i = \left(\mathbf{I}+\bm{\rho}_i^T\right)^{-1} \bm{y}_i = \frac{\gamma}{c_i} \left(\mathbf{I}+\bm{\rho}_i^T\right)^{-1} \left(\bm{p}_1 \odot \bm{p}_2 \odot \bm{v} \oslash \bm{\mu}_i\right).
\end{equation}
Introducing slack variables $s_i^k$ (for $k \in B$), the system of inequalities~\eqref{eq:muh:0} can be written as
\begin{equation*}
\mu_i^k + \sum_{l \in B\setminus \{k\}} \rho^{kl}_i \mu_i^l = 1 - s^k_i
\end{equation*}
where, for all $k \in B$, $s^k_i \geq 0$ and $s^k_i = 0$, if $e^k_i > 0$.
In matrix form,
\begin{equation}
\label{eq:muh}
\left(\mathbf{I}+\bm{\rho}_i\right)\bm{\mu}_i = \mathbf{1} - \bm{s}_i,
\implies
\bm{\mu}_i = \left(\mathbf{I}+\bm{\rho}_i\right)^{-1}(\mathbf{1} - \bm{s}_i).
\end{equation}

Let $P_i = \{k \in B : e_i^k > 0\}$ be the set of battlefields receiving positive real effort from player $i$ under $\bm{e}_i$ and let $-P_i = B\setminus \{P_i\}$ be the set of battlefields receiving zero real effort from player $i$ under $\bm{e}_i$.
By~\eqref{eq:muh}
\begin{equation*}
\left(\mathbf{I}^{P_i,P_i} + \bm{\rho}_i^{P_i,P_i}\right)\bm{\mu}_i^{P_i} + \bm{\rho}_i^{P_i,-P_i} \bm{\mu}_i^{-P_i}  = \mathbf{1}.
\end{equation*}
Moreover, since $\bm{e}_i^{-P_i} = \bm{0}$ and $\bm{y}_i = (\mathbf{I} + \bm{\rho}_i^T)\bm{e}_i$ so
\begin{align*}
\bm{y}_i^{P_i} & = \left(\mathbf{I}^{P_i,P_i}+{\bm{\rho}_i^{P_i,P_i}}^T\right) \bm{e}_i^{P_i}\\
\bm{y}_i^{-P_i} & = {\bm{\rho}_i^{P_i,-P_i}}^T \bm{e}_i^{P_i}
\end{align*}
Hence, using~\eqref{eq:effort},
\begin{equation*}
\bm{e}_i^{P_i} = \frac{\gamma}{c_i} \left(\mathbf{I}^{P_i,P_i}+{\bm{\rho}_i^{P_i,P_i}}^T\right)^{-1} \left(\bm{p}_1^{P_i} \odot \bm{p}_2^{P_i} \odot \bm{v}^{P_i} \oslash \bm{\mu}_i^{P_i}\right)
\end{equation*}
and
\begin{align*}
\bm{p}_1^{-P_i} \odot \bm{p}_2^{-P_i} \odot \bm{v}^{-P_i} \oslash \bm{\mu}_i^{-P_i} & = \frac{c_i}{\gamma}{\bm{\rho}_i^{P_i,-P_i}}^T \bm{e}_i^{P_i}\\
                                        & = {\bm{\rho}_i^{P_i,-P_i}}^T\left(\mathbf{I}^{P_i,P_i}+{\bm{\rho}_i^{P_i,P_i}}^T\right)^{-1} \left(\bm{p}_1^{P_i} \odot \bm{p}_2^{P_i} \odot \bm{v}^{P_i} \oslash \bm{\mu}_i^{P_i}\right).
\end{align*}

Using~\eqref{eq:muh}, the total effort of player $i \in \{1,2\}$ in Nash equilibrium $(\bm{e}_1,\bm{e}_2)$ is
\begin{equation*}
\sum_{k \in B} e^k_i = \mathbf{1}^T \bm{e}_i = \left(\mathbf{1} - \bm{s}_i\right)^T\bm{e}_i,
\end{equation*}
as $s_i^k = 0$, if $e_i^k > 0$, and $e_i^k = 0$, if $s_i^k > 0$. Hence
\begin{align*}
\sum_{k \in B} e^k_i & = \frac{\gamma}{c_i}
\underbrace{ (\mathbf{1} - \bm{s}_i)^T
\left(\mathbf{I}+\bm{\rho}_i^T\right)^{-1}}_{= \bm{\mu}_i^T} \bm{p}_1 \odot \bm{p}_2 \odot \bm{v} \oslash \bm{\mu}_i
= \frac{\gamma}{c_i} \left(\sum_{k\in B} p^k_1 p^k_2 v^k\right)\\
& = \frac{\gamma}{c_i} \sum_{k\in B} \frac{(\mu^k_{1} \mu^k_{2} c_{1} c_{2})^{\gamma}}{((\mu^k_1 c_1)^{\gamma} + (\mu^k_2 c_2)^{\gamma})^2} v^k.
\end{align*}

The equilibrium payoff to player $i$ is
\begin{align*}
\payoff{i}(\bm{e}_1,\bm{e}_2) & = \sum_{k \in B} p^k_i v^k - c_i \bm{1}^T \bm{e}_i
= \sum_{k \in B} \left(p^k_i\left(1 - \gamma p^k_{-i}\right) v^k\right)\\
& = \sum_{k \in B} \frac{(c_{-i}\mu^k_{-i})^{\gamma}((c_1 \mu^k_1)^{\gamma} + (c_2 \mu^k_2)^{\gamma} - \gamma (c_i \mu^k_i)^{\gamma})) v^k }{((c_1 \mu^k_1)^{\gamma} + (c_2 \mu^k_2)^{\gamma})^2}
\end{align*}

This completes the proof.
\end{proof}

If an equilibrium strategy of player $i$ is not interior, the restriction of the set of strategies of $i$ in the game played in the space of effective efforts is binding. In this case, the quantities $\bm{\mu}_i$ do not coincide with Bonacich-Katz centrality. An equilibrium strategy of player $i$ is a best response of $i$ within the restricted strategy space $Y_i$ to an equilibrium strategy of the other player, which is a strategy in the restricted space $Y_{-i}$. Because of that, in the case of corner equilibrium, quantities $\mu_i$ depend not only on $\bm{\rho}_i$ but also on $\bm{\rho}_{-i}$.

Thus the network affects the contest between the two players in two ways: by scaling the costs of the players with Bonacich-Katz centralities and by restricting the strategy spaces of the players in the space of effective efforts. In the case of interior equilibrium this second effect is immaterial but in the case of corner equilibrium both effects matter.

We now present three examples that illustrate the theorem. The first example highlights the role of networks in amplifying advantages. 

\begin{example}
Star vs. Cycle Network
\end{example}

Consider a scenario with $m \geq 4$ battlefields, $B = \{1,2,3,\ldots,m\}$.
Suppose that the network of spillovers of player $1$, $\bm{\rho}_1$, is an undirected star with centre $1$ and the network of spillovers of player $2$, $\bm{\rho}_2$, is an undirected cycle (c.f. Figure~\ref{fig:starandcycle}).
Assume also that $m$ is not divisible by 3.\footnote{When $m$ is divisible by $3$, there is multiplicity of equilibria, because $(\mathrm{I} + \bm{\rho}_2)$ is non-invertible. All the equilibria result in the same effective efforts profile, the same total efforts of the two players and the same payoffs for the two players.}
In both cases, for simplicity, the magnitude of spillovers at each link is equal to $1$.
The CSF is Tullock CSF with $\gamma = 1$, the values of all battlefields are equal to $1$. The costs are equal to $1$, $c_1 = c_2 = 1$.

\begin{figure}[htp]
\begin{center}
  \includegraphics[scale=0.7]{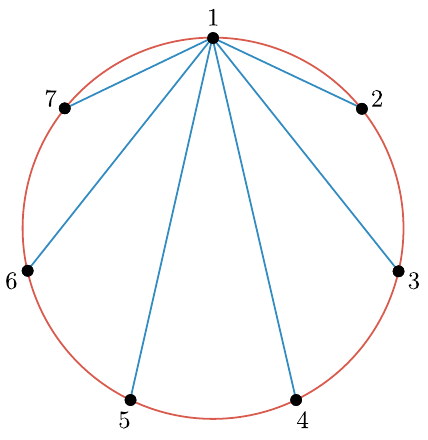}
\end{center}
  \caption{An undirected star with centre $1$ and an undirected cycle over $m = 7$ nodes.}
  \label{fig:starandcycle}
\end{figure}

Let $P_1 = \{1\}$ and $P_2 = B$. Solving the system of equations defined by~\eqref{eq:char:mu} and~\eqref{eq:char:pr} in~Theorem~\ref{th:char} we compute the quantities $\bm{\mu}_1$ and $\bm{\mu}_2$:
\begin{equation*}
\mu_1^1 = \ldots = \mu_1^m = \frac{1}{m},\qquad
\mu_2^1 = \ldots = \mu_2^m = \frac{1}{3}.
\end{equation*}
Using~\eqref{eq:char:pr} we obtain the winning probabilities:
\begin{equation*}
p_1^1 = \ldots = p_1^m = \frac{m}{m+3},\qquad
p_2^1 = \ldots = p_2^m = \frac{3}{m+3}.
\end{equation*}
Notice that the ratio of the probabilities for battlefield $k\in B$, $p_1^k/p_2^k$, is equal to the inverse ratio of the corresponding quantities $\mu_2^k/\mu_1^k = m/3$.

Inserting the $\bm{\mu}$'s and the probabilities into~\eqref{eq:char:e} we obtain a system of equations for finding the efforts. Solving it we obtain
\begin{equation*}
e_1^1 = \frac{3m^2}{(m+3)^2},\qquad
e_1^2 = \ldots = e_1^m = 0,\qquad
e_2^1 = \ldots = e_2^m = \frac{3m}{(m+3)^2}.
\end{equation*}
This is a valid strategy profile so we have a Nash equilibrium.

The equilibrium total efforts of the two players are equal to
\begin{equation*}
\sum_{k \in B} e^k_i = \frac{3m^2}{(m+3)^2}.
\end{equation*}
The equilibrium payoffs are
\begin{equation*}
\payoff{1}(\bm{e}_1,\bm{e}_2) = \frac{m^2}{(m+3)^2}m, \qquad\qquad \payoff{2}(\bm{e}_1,\bm{e}_2) = \frac{9}{(m+3)^2}m
\end{equation*}
and the ratio of the two equilibrium payoffs is
\begin{equation*}
\frac{\payoff{1}(\bm{e}_1,\bm{e}_2)}{\payoff{2}(\bm{e}_1,\bm{e}_2)} = \frac{m^2}{9}.
\end{equation*}
This example illustrates that power of the network topology: it can generate  unbounded advantage for the player with the star network.\footnote{Interestingly, this advantage arises even though the total magnitude of spillovers in the cycle network is higher than in the star network.}
\hfill $\Box$

\strut

In the cycle network, all nodes are symmetric. The second example involves a star network competing with a line network -- it helps bring out the role of network asymmetries in shaping contests in competing networks. 

\begin{example}
Star versus Line Network 
\end{example}

Consider an example with $m=5$ where $\bm{\rho}_1$  is an undirected cycle, like in the previous example, and $\bm{\rho}_2$ is an undirected line $1-2-3-4-5$. (c.f. Figure~\ref{fig:starandline})).
In both cases the magnitude of spillovers at each link is equal to $1$.
The CSF is Tullock CSF with $\gamma = 1$, the values of all battlefields are equal to $1$. The costs are also equal to $1$, $c_1 = c_2 = 1$.

\begin{figure}[htp]
\begin{center}
 \includegraphics[scale=0.6]{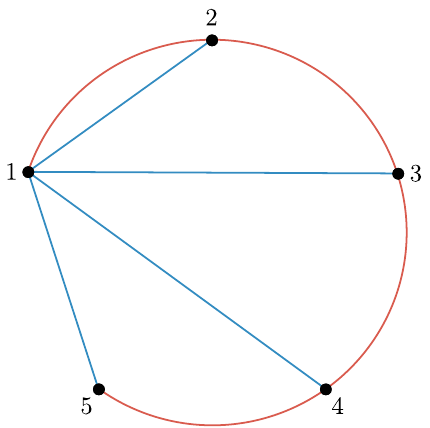}
\end{center}
 \caption{An undirected star with centre $1$ and an undirected line over $m = 5$ nodes.}
 \label{fig:starandline}
\end{figure}

Let $P_1 = \{1\}$ and $P_2 = \{2,4\}$. Solving the system of equations defined by~\eqref{eq:char:mu} and~\eqref{eq:char:pr} in~Theorem~\ref{th:char} we compute the quantities $\bm{\mu}_1$ and $\bm{\mu}_2$:
\begin{align*}
\mu_1^1 & = \mu_1^2 = \mu_1^4 = \mu_1^5 = \frac{8}{41},\qquad \mu_1^3 = \frac{9}{41},\\
\mu_2^1 & = \mu_2^2 = \mu_2^4 = \mu_2^5 = \frac{16}{41},\qquad \mu_2^3 = \frac{9}{41}.
\end{align*}
Using~\eqref{eq:char:pr} we obtain the winning probabilities:
\begin{align*}
p_1^1 & = p_1^2 = p_1^4 = p_1^5 = \frac{2}{3},\qquad p_1^3 = \frac{1}{2},\\
p_2^1 & = p_2^2 = p_2^4 = p_2^5 = \frac{1}{3},\qquad p_1^3 = \frac{1}{2}.
\end{align*}
Notice that the ratio of the probabilities for all battlefields $k\in B$, $p_1^k/p_2^k$, is equal to the inverse ratio of the corresponding quantities, $\mu_2^k/\mu_1^k = 2$ (in the case of $k\neq 3$) and $\mu_2^k/\mu_1^k = 1$ (in the case of $k = 3$).

Inserting the $\bm{\mu}$'s and the probabilities into~\eqref{eq:char:e} we obtain a system of equations for finding the efforts. Solving it we obtain
\begin{equation*}
e_1^1 = \frac{41}{36},\qquad
e_1^2 = \ldots = e_1^5 = 0,\qquad
e_2^2 = e_2^4 = \frac{41}{72}, \qquad e_2^1 = e_2^3 = e_2^5 = 0.
\end{equation*}
This is a valid strategy profile so we have a Nash equilibrium.

The equilibrium total efforts of the two players are both equal to $41/36$.
The equilibrium payoffs are
\begin{equation*}
\payoff{1}(\bm{e}_1,\bm{e}_2) = \frac{73}{36}, \qquad 
\payoff{2}(\bm{e}_1,\bm{e}_2) = \frac{25}{36},
\end{equation*}
and the ratio of the two equilibrium payoffs is $73/25$. Although the two networks have the same total magnitude of spillovers,  the star topology is  nonetheless more beneficial than the line topology.
\hfill$\Box$

\strut

The next example illustrates the role of competing networks in shaping equilibrium outcomes. 

\begin{example}
Competing hub-spoke networks
\end{example}
    
There are $m=10$ battlefields. Suppose that player $1$ has spillovers $\lambda \geq 0$ from battlefield $1$ to all remaining battlefields and player $2$ has spillovers $\lambda \geq 0$ from battlefield $2$ to all remaining battlefields. We assume that the Tullock contest success function has $\gamma = 1$. The value of every battlefield is equal to $1$. The costs are equal to $1$, $c_1 = c_2 = 1$.  The network is illustrated in Figure \ref{figuretwohubs}. 

There are three types of equilibria, depending on the magnitude of spillovers $\lambda$.

Let $\lambda \in [0,L_1)$, where $L_1 = 1/n$. Let $P_1 = P_2 = B$. Solving the system of equations defined by~\eqref{eq:char:mu} and~\eqref{eq:char:pr} in~Theorem~\ref{th:char} we compute the quantities $\bm{\mu}_1$ and $\bm{\mu}_2$:
\begin{equation*}
\mu_1^1 = \mu_2^2 = 1 - (n - 1)\lambda,\qquad
\mu_1^2 = \mu_2^1 = 1,\qquad
\mu_1^k = \mu_2^k = 1.
\end{equation*}
Using~\eqref{eq:char:pr} we obtain the winning probabilities:
\begin{equation*}
p_1^1 = p_2^2 = \frac{1}{2 - (n - 1)\lambda},\qquad
p_1^2 = p_2^1 = \frac{1 - (n - 1)\lambda}{2 - (n - 1)\lambda},\qquad
p_1^k = p_2^k = \frac{1}{2}.
\end{equation*}
Inserting the $\bm{\mu}$'s and the probabilities into~\eqref{eq:char:e} we obtain a system of equations for finding the efforts. Solving it we obtain
\begin{align*}
e_1^1 & = e_2^2 = \frac{1}{(2 - (n - 1)\lambda)^2},\qquad
e_1^2 = e_2^1 = \frac{1 - n\lambda}{(2 - (n - 1)\lambda)^2},\\
e_1^k & = e_2^k = \frac{1}{4} - \frac{\lambda}{(2 - (n - 1)\lambda)^2}.
\end{align*}

Let $\lambda \in [L_1,L_2)$, where $\lambda = L_2$ solves $\lambda^3 (n - 2) + \lambda^2(2n + 3) + \lambda(n - 4) = 1$,
and is positive, that is
\begin{equation*}
L_2 = \frac{2\sqrt{n^2 + 30n - 15}\sin\left(\frac{\arcsin\left(\frac{n^3 - 63n^2 - 117n + 135}{(n^2 + 30n - 15)^{\frac{3}{2}}}\right)}{3} + \frac{\pi}{3}\right)}{3(n - 2)} - \frac{2n + 3}{3(n - 2)}
\end{equation*}
(e.g. in the case of $n = 10$, $L_2 \approx 0.114453$).
Let $P_1 = B\setminus\{2\}$ and $P_2 = B\setminus \{1\}$. Solving the system of equations defined by~\eqref{eq:char:mu} and~\eqref{eq:char:pr} in~Theorem~\ref{th:char} we compute the quantities $\bm{\mu}_1$ and $\bm{\mu}_2$:
\begin{equation*}
\mu_1^1 = \mu_2^2 = \frac{1 - (n - 2)\lambda}{2},\qquad
\mu_1^2 = \mu_2^1 = \frac{1 - (n - 2)\lambda}{2\lambda},\qquad
\mu_1^k = \mu_2^k = 1.
\end{equation*}
Using~\eqref{eq:char:pr} we obtain the winning probabilities:
\begin{equation*}
p_1^1 = p_2^2 = \frac{1}{1+\lambda},\qquad
p_1^2 = p_2^1 = \frac{\lambda}{1+\lambda},\qquad
p_1^k = p_2^k = \frac{1}{2}.
\end{equation*}
Inserting the $\bm{\mu}$'s and the probabilities into~\eqref{eq:char:e} we obtain a system of equations for finding the efforts. Solving it we obtain
\begin{align*}
e_1^1 & = e_2^2 = \frac{2\lambda}{(1+\lambda)^2 (1 - (n - 2)\lambda)^2},\qquad
e_1^2 = e_2^1 = 0,\\
e_1^k & = e_2^k = \frac{1}{4} - \frac{2\lambda^2}{(1+\lambda)^2 (1 - (n - 2)\lambda)^2}.
\end{align*}

Let $\lambda \geq L_2$. Let $P_1 = \{1\}$ and $P_2 = \{2\}$. Solving the system of equations defined by~\eqref{eq:char:mu} and~\eqref{eq:char:pr} in~Theorem~\ref{th:char} we compute the quantities $\bm{\mu}_1$ and $\bm{\mu}_2$:
\begin{align*}
\mu_1^1 & = \mu_2^2 = \frac{4\lambda}{2(n+2)\lambda + (n-2)(1+\lambda^2)},\qquad
\mu_1^2 = \mu_2^1 = \frac{4}{2(n+2)\lambda + (n-2)(1+\lambda^2)},\\
\mu_1^k & = \mu_2^k = \frac{(1+\lambda)^2}{\lambda(2(n+2)\lambda + (n-2)(1+\lambda^2))}.
\end{align*}
Using~\eqref{eq:char:pr} we obtain the winning probabilities:
\begin{equation*}
p_1^1 = p_2^2 = \frac{1}{1+\lambda},\qquad
p_1^2 = p_2^1 = \frac{\lambda}{1+\lambda},\qquad
p_1^k = p_2^k = \frac{1}{2}.
\end{equation*}
Inserting the $\bm{\mu}$'s and the probabilities into~\eqref{eq:char:e} we obtain a system of equations for finding the efforts. Solving it we obtain
\begin{equation*}
e_1^1 = e_2^2 = \frac{n-2}{4} + \frac{2\lambda}{(1+\lambda)^2},\qquad
e_1^2 = e_2^1 = 0,\qquad
e_1^k = e_2^k = 0,
\end{equation*}

In all the three cases the obtained strategy profiles are valid, so we have a Nash equilibrium. In the first case the equilibrium is interior.
Notice that in all the three cases, the ratios of the probabilities for battlefield $k\in B$, $p_1^k/p_2^k$, are equal to the inverse ratio of the corresponding quantities $\mu_2^k/\mu_1^k$.
The equilibrium and the relative probability of winning a node are illustrated in Figure \ref{figureequilibriumtwohubs}. The equilibrium efforts on own hub are increasing and on the other nodes are decreasing in $\lambda$. When $\lambda$ reached the threshold $L_1$, a player allocates zero direct effort to the hub node of the other player. When $\lambda$ reaches $L_2$, a player allocates zero to all spoke nodes. The relative probability of winning the spokes is equal to $1$ for all $\lambda$, and this is intuitive. The probability of winning own hub and other's hub exhibits an interesting non-monotonicity: when $\lambda$ increases from a low value, the relative probability of winning own hub (other's hub) increases (decreases). However, as $\lambda$ continues to grow, spillovers grow and differences in effective efforts shrink, and this leads to a fall in the relative probability of winning the own hub.
$\hfill{\Box}$

\begin{figure}
\centering
\includegraphics[scale=0.80]{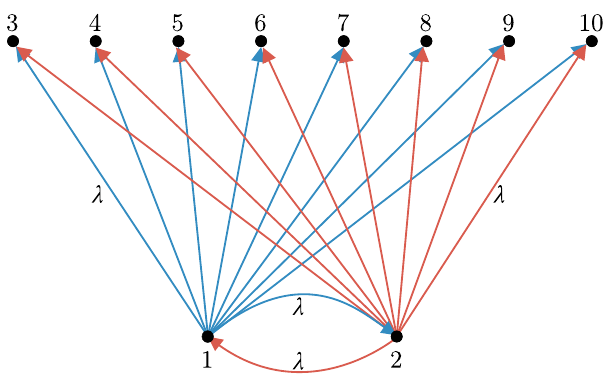}
\caption{Competing hub-spoke networks(blue: player 1; red: player 2)}
\label{figuretwohubs}
\end{figure}

\begin{figure}%
    \centering
    \subfloat{{\includegraphics[scale=0.80]{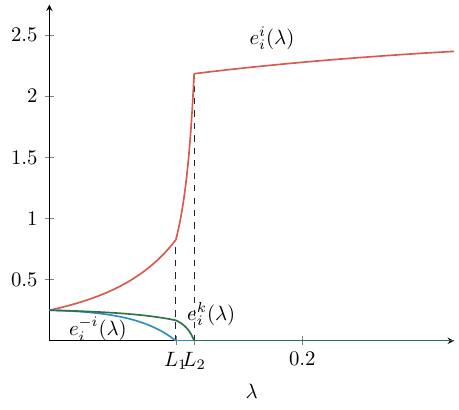}}}%
    \qquad
    \subfloat{{\includegraphics[scale=0.80]{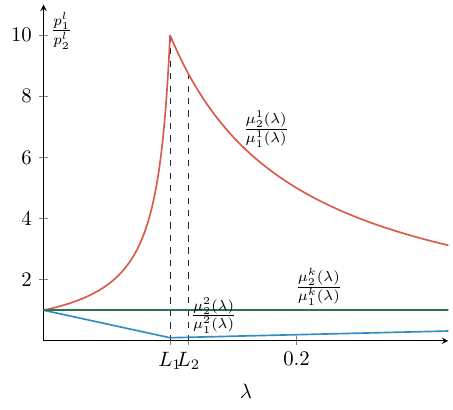}}}%
    \caption{Equilibrium efforts and Relative Prob. of winning}
    \label{figureequilibriumtwohubs}
\end{figure}

\bigskip

\section{Network Design}
\label{sec:spilldes}

So far, in our analysis, we have taken the spillovers as exogenous. Our analysis suggests that networks of spillovers shape the allocation of efforts on nodes and the payoffs of the players. Inspired by the large literature on contest design, we now ask how networks can be designed to attain specific objectives such as maximizing aggregate total efforts of players and maximizing the total utility of players. 

\subsection{Maximizing efforts}

Our characterization of equilibrium in Theorem~\ref{th:char} shows that individual efforts at specific nodes and that the aggregate efforts depend on the network of spillovers. In the  literature, a major question pertains to the design of contests that maximize efforts (see e.g., \cite{Moldovanu2001optimal},  \cite{fu2020optimal}, \cite{hinnosaar2024optimal}. The novelty in our paper is that we use the (directed weighted) network of spillovers to structure incentives in contests. 

We will consider the following sequence of moves. The designer, taking as given the costs and value of the battles, first chooses the network of spillovers and observing this network, the two players then simultaneously choose efforts at different battles.

We start by noting an implication of the characterization result. Recall that the relative probability of a player $i$ winning a node $k$ is
\begin{equation*}
\frac{p^k_i}{p^k_j} = \left(\frac{\mu^k_{j} c_{j}}{\mu^k_{i} c_{i}}\right)^{\gamma}.
\end{equation*}
So, if the two competing networks are the same, then the $\mu$ terms drop out. This means that the probability of winning will depend only on the costs of effort.  Since the equilibrium total efforts of the players as well as equilibrium payoffs (c.f. Equations~\eqref{eq:char:total} and~\eqref{eq:char:payoff} of Theorem~\ref{th:char})  are a function of these probabilities and the costs, they are also independent of the common network of spillovers. 

This leads to a type of `network invariance' result: in equilibrium, the probabilities of winning the battlefields, the total efforts, and the payoffs of the players are independent of the network of spillovers, if the two competing networks are equal.

\begin{proposition}
Suppose that the networks of both players are the same. In equilibrium, the aggregate efforts and payoffs of the players are invariant with respect to the network.
\end{proposition}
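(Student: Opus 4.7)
The plan is to reduce the claim to an equality of network centralities and then read off invariance from the formulas in Theorem~\ref{th:char}. By~\eqref{eq:char:total} and~\eqref{eq:char:payoff}, both the aggregate effort $\sum_{k\in B} e_i^k$ and the payoff $\payoff{i}$ depend on the spillover networks only through the winning probabilities $(p_i^k)_{k \in B}$, with costs, values, and $\gamma$ held fixed. Formula~\eqref{eq:char:pr} in turn expresses each $p_i^k$ as a function of $(\mu_1^k,\mu_2^k,c_1,c_2)$. Hence it is enough to show that whenever $\bm{\rho}_1 = \bm{\rho}_2 = \bm{\rho}$ the equilibrium centralities coincide, $\bm{\mu}_1 = \bm{\mu}_2$; once this is established,~\eqref{eq:char:pr} collapses to the pure-cost expression $p_i^k = c_{-i}^{\gamma}/(c_1^{\gamma} + c_2^{\gamma})$, and substitution into~\eqref{eq:char:total}--\eqref{eq:char:payoff} immediately yields invariance of the aggregate efforts and payoffs.

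For the interior regime ($P_1 = P_2 = B$), the equality $\bm{\mu}_1 = \bm{\mu}_2$ is immediate from~\eqref{eq:mu}: both vectors solve the same linear system $(\mathbf{I}+\bm{\rho})\bm{\mu} = \bm{1}$, which by the non-singularity hypothesis has the unique solution $(\mathbf{I}+\bm{\rho})^{-1}\bm{1}$, i.e.\ the common Katz--Bonacich vector. Combined with the uniqueness of equilibrium aggregates from Theorem~\ref{th:exunique}, this settles the proposition whenever the symmetric candidate delivers non-negative efforts in~\eqref{eq:char:e}.

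The main obstacle is the corner regime, where $P_1$ and $P_2$ could a priori differ and each $\bm{\mu}_i$ depends on $P_i$ via the full system~\eqref{eq:char:mu}. To dispatch this case I would work in the reformulated game from the proof of Theorem~\ref{th:char}, in which both players share the common feasible set $Y_1 = Y_2 = Y$ (because the networks agree) and face per-unit effective-effort cost vectors $c_1\bm{\mu}$ and $c_2\bm{\mu}$ that differ only by the scalar $c_i$. In this reformulation the choice of whether to exert positive real effort on a given battlefield is governed by the geometry of $Y$ and a single proportionality constant in the costs, so I would argue $P_1 = P_2 =: P$ in any equilibrium. Under a common $P$, the two instances of~\eqref{eq:char:mu} become identical linear systems in $\bm{\mu}_i$ and, by non-singularity of the relevant submatrix, force $\bm{\mu}_1 = \bm{\mu}_2$; verifying that the resulting efforts from~\eqref{eq:char:e} satisfy the sign constraints of Theorem~\ref{th:char} then closes the construction. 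Uniqueness of aggregates via Theorem~\ref{th:exunique} finally upgrades the invariance observed in this one equilibrium to invariance across every equilibrium.
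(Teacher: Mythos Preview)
Your overall strategy---reduce to $\bm{\mu}_1 = \bm{\mu}_2$ and read off invariance from~\eqref{eq:char:total}--\eqref{eq:char:payoff}---is exactly the paper's, which simply points to the observations preceding the proposition and omits a formal argument. Your interior-case paragraph is complete and is precisely what the paper has in mind.

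Your corner-case argument, however, has a gap. The claim that under a common $P$ the two instances of~\eqref{eq:char:mu} become ``identical linear systems in $\bm{\mu}_i$'' is not correct: the second equation in~\eqref{eq:char:mu} involves division by $\bm{\mu}_i$ (so it is not linear) and, more importantly, involves $\bm{p}_1 \odot \bm{p}_2$, which by~\eqref{eq:char:pr} depends on \emph{both} $\bm{\mu}_1$ and $\bm{\mu}_2$. The two systems are therefore coupled and nonlinear, and you cannot conclude $\bm{\mu}_1 = \bm{\mu}_2$ from ``same system, same unique solution.'' The prior step---that $P_1 = P_2$ in every equilibrium---is also asserted rather than argued; saying the active set is ``governed by the geometry of $Y$ and a single proportionality constant'' does not rule out that the constant $c_i$ tips the sign of some $e_i^k$. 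A cleaner route is to \emph{construct} one equilibrium with $\bm{\mu}_1 = \bm{\mu}_2$ rather than claim it for all: guess $\bm{\mu}_1 = \bm{\mu}_2 = \bm{\mu}$ and $P_1 = P_2 = P$, so that $p_1^k p_2^k = (c_1 c_2)^{\gamma}/(c_1^{\gamma}+c_2^{\gamma})^2$ is constant in $k$ and factors out of~\eqref{eq:char:mu} and~\eqref{eq:char:e}. The residual conditions on $(P,\bm{\mu})$ then involve only $\bm{\rho}$ and $\bm{v}$, not the $c_i$, and can be borrowed from the symmetric equilibrium of the equal-cost game (which exists by symmetry together with interchangeability, Lemma~\ref{lemma:exch}). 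The efforts in~\eqref{eq:char:e} for general $(c_1,c_2)$ are positive scalar multiples of the equal-cost efforts, hence non-negative, giving a valid Nash equilibrium with network-free probabilities; uniqueness from Theorem~\ref{th:exunique} then finishes exactly as you say.
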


The proof of the result follows directly from our observations above and is omitted. This result tells us that if networks are to matter for aggregate efforts and for payoffs, then they must be distinct. To demonstrate how the choice of the networks of spillovers can be used to maximize the equilibrium total efforts, consider our two-node example above (see Example \ref{exampletwonodes}). Figure \ref{fig:effortstwonodes} shows that aggregate total efforts across both players increase until a certain point and then decrease. Recall that, in this example, the spillovers only apply to player 2, and this is the player who has higher costs of effort. Therefore, a larger $\lambda$ may be seen as a form of handicapping: this enables the weaker player to catch up with the stronger player. Also note that beyond a certain point the spillovers lower aggregate total efforts. So there is a well defined level of spillover at which these efforts are maximized. We build on this example in our study of networks that maximize aggregate total efforts of players.

\begin{figure}
\centering
\includegraphics[scale=.85]{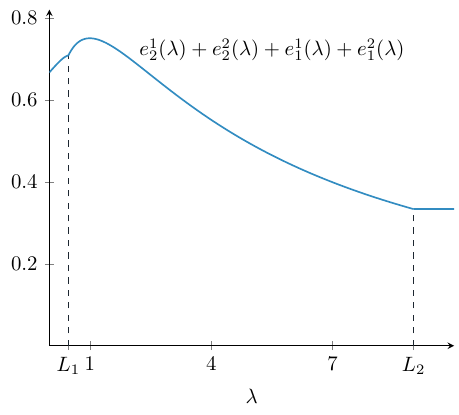}
\caption{Aggregate Efforts and Spillovers}
\label{fig:effortstwonodes}
\end{figure}

Theorem~\ref{th:char1} tells us that equilibrium total efforts of players:
\begin{equation*}
\sum_{k \in B} e^k_i = \frac{\gamma}{c_i} \left(\sum_{k\in B} p_1^k p_2^k v^k\right) \textrm{, for all $i \in \{1,2\}$}.
\end{equation*}
So to maximize effort, we need to maximize the product of the equilibrium-winning probabilities. Since $p^k_2 = 1 - p^k_1$, we get $p^k_1 = p^k_2 = 1/2$. We  design the spillovers to ensure that this holds. In particular, if
\begin{equation*}
\mu^k_1 = \frac{c_2}{c_1} \mu^k_2
\end{equation*}
then the equilibrium winning probabilities are equal to $1/2$. We build on this argument to establish the following result. Recall that $m$ is the number of battlefields. 

\begin{proposition}
\label{pr:maxeffort}
Suppose $c_2 \geq c_1 > 0$ and $v^k=v > 0$ for all $k\in B$. The maximal total efforts of players are attained when network of player 1 is empty and network for player 2 is a complete directed network with
\begin{equation*}
\rho^{l,k}_2 = \left(\frac{1}{m-1}\right) \left(\frac{c_2 - c_1}{c_1}\right) \textrm{, for all $l\in B$ and $k \in B\setminus\{l\}$}.
\end{equation*}
The \textit{handicap} offered to player $2$ is
\begin{equation*}
\bm{1}^{T}(\bm{\rho}_2 - \bm{\rho}_1) \bm{1} = m\left(\frac{c_2 - c_1}{c_1}\right).
\end{equation*}
\end{proposition}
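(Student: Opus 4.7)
The plan is to establish an analytic upper bound on aggregate effort and then exhibit a design that saturates it pointwise. From~\eqref{eq:char:total} the aggregate effort in any equilibrium equals
\[
\sum_{i\in\{1,2\}}\sum_{k\in B}e_i^k \;=\; \gamma\!\left(\tfrac{1}{c_1}+\tfrac{1}{c_2}\right)\sum_{k\in B}p_1^k p_2^k v^k.
\]
Because Theorem~\ref{th:char} guarantees $p_1^k+p_2^k=1$ at every battlefield, we have $p_1^k p_2^k \leq 1/4$ with equality iff $p_1^k = p_2^k = 1/2$. With $v^k=v$ this yields the uniform upper bound $\gamma m v (1/c_1+1/c_2)/4$. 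Maximising aggregate effort thus reduces to finding a design under which $p_1^k = p_2^k = 1/2$ at every node, which via~\eqref{eq:char:pr} amounts to forcing $c_1\mu_1^k = c_2\mu_2^k$ for all $k$.

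Setting $\bm{\rho}_1 = \mathbf{0}$ gives $\bm{\mu}_1 = \mathbf{1}$ by~\eqref{eq:mu}, so it remains to choose $\bm{\rho}_2$ so that $\bm{\mu}_2 = (c_1/c_2)\mathbf{1}$. Substituting this candidate into $(\mathbf{I}+\bm{\rho}_2)\bm{\mu}_2=\mathbf{1}$ yields the row-sum condition $\sum_{l\neq k}\rho_2^{k,l} = (c_2-c_1)/c_1$ for every $k$; the uniform complete network of the proposition is the most symmetric solution. Applying~\eqref{eq:char:e} then gives $\bm{e}_1 = (\gamma v/(4c_1))\mathbf{1}$ trivially, and, since $\bm{\rho}_2$ is self-transpose (all off-diagonal entries are equal), $(\mathbf{I}+\bm{\rho}_2^T)^{-1}\mathbf{1} = \bm{\mu}_2$ and hence $\bm{e}_2 = (\gamma v/(4c_2))\mathbf{1}$. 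Both effort vectors are strictly positive, so the profile is a valid interior Nash equilibrium with $P_1=P_2=B$ satisfying~\eqref{eq:char:mu}--\eqref{eq:char:pr}, and the upper bound is attained.

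The handicap statement is then a one-line computation: $\bm{1}^T(\bm{\rho}_2 - \bm{\rho}_1)\bm{1}$ totals the $m(m-1)$ off-diagonal entries of $\bm{\rho}_2$, each equal to $(c_2-c_1)/((m-1)c_1)$, giving $m(c_2-c_1)/c_1$. The main subtlety I anticipate is ruling out that some corner-equilibrium design could outperform this one; fortunately the inequality $p_1^k p_2^k \leq 1/4$ is unconditional — it holds in every equilibrium, interior or corner, for every admissible $(\bm{\rho}_1,\bm{\rho}_2)$ — so pointwise saturation by our construction already implies optimality, and no separate corner-case argument is needed.
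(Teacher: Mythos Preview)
Your proof is correct and follows essentially the same route as the paper's: bound $p_1^k p_2^k \le 1/4$ via~\eqref{eq:char:total}, then exhibit the empty/complete-network pair, verify $\bm{\mu}_1=\mathbf{1}$ and $\bm{\mu}_2=(c_1/c_2)\mathbf{1}$, check the resulting efforts are strictly positive so the interior characterization applies, and compute the handicap. Your symmetry shortcut $(\mathbf{I}+\bm{\rho}_2^T)^{-1}\mathbf{1}=\bm{\mu}_2$ is a little cleaner than the paper's explicit inversion of $\mathbf{I}+\bm{\rho}_2$, and your closing remark that the bound $p_1^k p_2^k\le 1/4$ holds in corner as well as interior equilibria is a point the paper leaves implicit; the paper in turn cites Lemma~\ref{lemma:unique} to pin down uniqueness of the constructed equilibrium, which you omit but do not strictly need given that Theorem~\ref{th:exunique} already guarantees uniqueness of equilibrium total efforts.
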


\begin{proof}[Proof:]
Assume that $c_2 \geq c_1 > 0$ and $v^1 = \ldots = v^m = v > 0$. 
By Theorem~\ref{th:char}, the equilibrium total effort of player $i$ is equal to
\begin{equation*}
\sum_{k \in B} e_i^k = \frac{\gamma}{c_i}\sum_{k \in B} p_1^k p_2^k v^k = \frac{\gamma v}{c_i}\sum_{k \in B} p_1^k p_2^k.
\end{equation*}
Since, for all $k \in B$, $p^k_1 +p^k_2 = 1$, $p^k_1 \geq 0$, and $p^k_2 \geq 0$, so, keeping $\gamma$, $c_i$, and $v$ constant, the total effort is maximised when $p^k_1 = p^k_2 = 1/2$. In this case the total effort is equal to
\begin{equation*}
\sum_{k \in B} e_i^k = \frac{m \gamma v}{4c_i}
\end{equation*}
and this is the upper bound on the equilibrium total effort. To see that this bound is attainable, let $\bm{\rho}_1$ be the empty network with 
\begin{equation*}
\rho^{l,k}_1 = 0 \textrm{, for all $l\in B$ and $k \in B\setminus\{l\}$},
\end{equation*}
and $\bm{\rho}_2$ be complete directed networks with
\begin{equation*}
\rho^{l,k}_2 = \frac{c_2 - c_1}{(m-1)c_1} \textrm{, for all $l\in B$ and $k \in B\setminus\{l\}$}.
\end{equation*}
Notice that
\begin{equation*}
\bm{\mu}_1 = \bm{1} \quad \textrm{and} \quad \bm{\mu}_2 = \frac{c_1}{c_2} \bm{1} = \frac{c_1}{c_2} \bm{\mu}_1
\end{equation*}
solve equations $(\mathbf{I} + \bm{\rho}_i) \bm{\mu}_i = \bm{1}$, for $i \in \{1,2\}$, respectively.
In addition, using~\eqref{eq:char:pr} in Theorem~\ref{th:char}, for all $k \in B$,
\begin{equation*}
p_1^k p_2^k = \frac{(\mu_{1}^k \mu_{2}^k c_{1}c_{2})^\gamma}{\left(\left(\mu_{1}^k c_{1}\right)^{\gamma}+ \left(\mu_{2}^k c_{2}\right)^{\gamma}\right)^2} = \frac{1}{4}.
\end{equation*}

To complete the proof for the upper bound on total efforts, we need to show that the game has an equilibrium with the values of $\bm{\mu}_1$ and $\bm{\mu}_2$ given above. Notice that
\begin{equation*}
\left(\mathbf{I}+\bm{\rho}_1\right)^{-1} = \mathbf{I}\qquad \textrm{and} \qquad \left(\mathbf{I}+\bm{\rho}_2\right)^{-1} = \frac{c_1(c_2 - c_1)}{c_2(c_2 - m c_1)} \bm{1} - \frac{(m-1)c_1}{c_2-m c_1} \mathbf{I}.
\end{equation*}
(where $\bm{1}$ denotes a $m\times m$ matrix of $1$'s).
Using~\eqref{eq:char:e} in Theorem~\ref{th:char},
\begin{equation*}
\bm{e}_i = \frac{\gamma}{c_i} \left(\mathbf{I} + {\bm{\rho}_i}^T\right)^{-1} \left(\bm{p}_1 \odot\bm{p}_2 \odot \bm{v} \oslash \bm{\mu}_i\right)  = \frac{\gamma}{4c_i} \left(\mathbf{I} + {\bm{\rho}_i}^T\right)^{-1}\left(\bm{v} \oslash \bm{\mu}_i\right).
\end{equation*}
Hence, for $k \in B$,
\begin{equation*}
e^{k}_i = \frac{\gamma v}{4c_i}.
\end{equation*}
Since all the efforts are greater than zero, the game has an interior equilibrium $(\bm{e}_1,\bm{e}_2)$ with the associated marginal rates of substitution $(\bm{\mu}_1,\bm{\mu}_2)$, as computed above, and the winning probabilities equal to $1/2$ for both players, across all battlefields. By Lemma~\ref{lemma:unique} in the appendix and non-singularity of matrices $\mathbf{I} + \bm{\rho}_i$, for all $i \in \{1,2\}$, this constructed equilibrium is the unique one.
Thus we established the upper bound on the total effort and the networks of spillovers that allow us to attain it. The handicap offered by the social planner to player 2 is equal to
\begin{equation*}
\bm{1}^{T} (\bm{\rho}_2 - \bm{\rho}_1)\bm{1} = \bm{1}^{T} \bm{\rho}_2 \bm{1} = m\left(\frac{c_2- c_1}{c_1}\right).
\end{equation*}
This completes the proof.
\end{proof}

We note that the networks of spillovers maximizing the total efforts of the players in Proposition~\ref{pr:maxeffort} are not unique. In fact, there is a continuum of such networks. The designer could start with a complete network of spillovers $\bm{\rho}_1$ where the spillover between any two battlefields is set to the same value $\lambda_1 \geq 0$. The associated network $\bm{\rho}_2$ is also a complete network of spillovers with the spillovers between any two battlefields set to the same value
\begin{equation*}
\lambda_2 = \frac{c_2}{c_1}\lambda_1 + \left(\frac{1}{m-1}\right) \left(\frac{c_2 - c_1}{c_1}\right).
\end{equation*}
The handicap of player $2$ is in this case equal to
\begin{equation*}
\bm{1}^{T}(\bm{\rho}_2 - \bm{\rho}_1) \bm{1} = m\left(\frac{c_2 - c_1}{c_1}\right)(1 + (m-1)\lambda_1).
\end{equation*}
The construction in  Proposition~\ref{pr:maxeffort} corresponds to  $\lambda_1=0$.

The above result was established for the case where all battles have equal values. When prizes in different battles are very unequal, there may sometimes not exist interior equilibrium with the network design we presented above. In this case, network design takes on a slightly more complicated form and this is studied in the online appendix.    

\subsection{Maximizing Payoffs} 

We next turn to design of networks that maximize payoffs of players. By Theorem~\ref{th:char}, the equilibrium social welfare is
\begin{equation}
\payoff{1}(\bm{e}_1,\bm{e}_2) + \payoff{2}(\bm{e}_1,\bm{e}_2) = \sum_{k \in B} \left(1 - 2\gamma p^k_1 p^k_2\right) v^k
= \sum_{k \in B} v^k - \frac{2c_1 c_2}{c_1 + c_2} \sum_{k\in B}\left(e^k_1 + e^k_2\right).
\end{equation}

Hence social welfare maximization is equivalent to minimization of the sum of total efforts of the players. By Theorem~\ref{th:char}, on any network the equilibrium total effort of each player is greater than zero. This means that the upper bound on equilibrium social welfare is equal to the sum of the value of the battlefields. When the values of the battlefields are equal, by choosing an empty network for one player and a complete network of spillovers with sufficiently large magnitude of spillovers for the other player, the social planner can attain equilibrium social welfare that is arbitrarily close to this upper bound.

\begin{proposition}
\label{pr:maxwelfare}
Let $p$ be the Tullock CSF with $\gamma \in (0,1]$ and let $c_1 > 0$ and $c_2 > 0$ be the costs of effort of the players.
Suppose that the values of all battlefields are equal, $v^1 = \ldots = v^m = v > 0$.

For any $\varepsilon > 0$ there exists $\lambda_{\varepsilon} > 0$ such that the equilibrium total effort of the two players attained when $\bm{\rho}_1$ is an empty network with
\begin{equation*}
\rho^{l,k}_1 = 0 \textrm{, for all $l\in B$ and $k \in B\setminus\{l\}$}
\end{equation*}
and $\bm{\rho}_2$ is a complete directed network with
\begin{equation*}
\rho^{l,k}_2 = \lambda_{\varepsilon} \textrm{, for all $l\in B$ and $k \in B\setminus\{l\}$},
\end{equation*}
is less than $\varepsilon$.
\end{proposition}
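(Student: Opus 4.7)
My plan is to apply Theorem~\ref{th:char} directly to the specified pair of networks, exhibit the unique equilibrium in closed form by exploiting the symmetry, and then let $\lambda \to \infty$.

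First I would characterize the equilibrium. Since $\bm{\rho}_1$ is the empty network, \eqref{eq:char:mu} gives $\bm{\mu}_1 = \bm{1}$. For $\bm{\rho}_2$ equal to the uniform complete network with weight $\lambda$, the matrix $\mathbf{I} + \bm{\rho}_2 = (1-\lambda)\mathbf{I} + \lambda \mathbf{J}$ (with $\mathbf{J}$ the $m \times m$ all-ones matrix) has eigenvalues $1 + (m-1)\lambda$ and $1 - \lambda$ (multiplicity $m-1$), so it is non-singular for every $\lambda \neq 1$. By symmetry the system $(\mathbf{I} + \bm{\rho}_2)\bm{\mu}_2 = \bm{1}$ admits the uniform solution $\mu_2^k = \mu_\lambda := 1/(1 + (m-1)\lambda)$ for every $k \in B$. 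Inserting this into \eqref{eq:char:pr} yields battlefield-independent winning probabilities
\begin{equation*}
p_1^k \equiv p_1 = \frac{(\mu_\lambda c_2)^\gamma}{c_1^\gamma + (\mu_\lambda c_2)^\gamma}, \qquad p_2^k \equiv p_2 = 1 - p_1.
\end{equation*}

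Next I would verify that the candidate is an interior equilibrium. Plugging $\bm{\mu}_1, \bm{\mu}_2, \bm{p}_1, \bm{p}_2$ into \eqref{eq:char:e} and using the identity $(\mathbf{I} + \bm{\rho}_2^T)^{-1}\bm{1} = \mu_\lambda \bm{1}$ (which follows from the symmetry of $\bm{\rho}_2$ and the defining equation for $\bm{\mu}_2$), one obtains the uniform candidate
\begin{equation*}
e_i^k = \frac{\gamma v \, p_1 p_2}{c_i} > 0 \qquad \text{for all } i \in \{1,2\} \text{ and all } k \in B.
\end{equation*}
Since every component is strictly positive, $P_1 = P_2 = B$ and the KKT system of Theorem~\ref{th:char} is satisfied. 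By Theorem~\ref{th:exunique}, these are the unique equilibrium total efforts.

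Finally I would take $\lambda \to \infty$. Equation \eqref{eq:char:total} yields the sum of total efforts
\begin{equation*}
\sum_{i \in \{1,2\}}\sum_{k \in B} e_i^k = m \gamma v \Big(\frac{1}{c_1} + \frac{1}{c_2}\Big) p_1 p_2.
\end{equation*}
As $\lambda \to \infty$, $\mu_\lambda \to 0$, so the numerator of $p_1$ vanishes while the denominator tends to $c_1^\gamma > 0$; hence $p_1 \to 0$ and $p_1 p_2 \to 0$. For any given $\varepsilon > 0$, choosing $\lambda_\varepsilon > 1$ large enough drives the right-hand side below $\varepsilon$. The only real subtlety in this plan is the need to avoid the singularity at $\lambda = 1$; this is immaterial because we are free to take $\lambda_\varepsilon$ arbitrarily large, so non-singularity (and therefore applicability of Theorem~\ref{th:char}) is automatic in the asymptotic regime of interest.
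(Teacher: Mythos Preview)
Your proposal is correct and follows essentially the same approach as the paper's own proof: both compute $\bm{\mu}_1=\bm{1}$ and $\mu_2^k=1/(1+(m-1)\lambda)$, verify via~\eqref{eq:char:e} that the resulting candidate efforts $e_i^k=\gamma v\,p_1 p_2/c_i$ are strictly positive (hence the equilibrium is interior), invoke uniqueness of equilibrium total efforts, and then let $\lambda\to\infty$ so that $p_1 p_2\to 0$. Your explicit remark about avoiding the singularity at $\lambda=1$ is a nice addition that the paper leaves implicit.
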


\begin{proof}
For the lower bound, notice that keeping $c_i$ and $v$ constant, the expression in
\begin{equation}
\label{eq:maxmineffort:1}
\sum_{k \in B} e_i^k = \frac{\gamma}{c_i}\sum_{k \in B} p_1^k p_2^k v^k
\end{equation}
is minimized when either $p^k_1 = 0$ and $p^k_2 = 1$ or $p^k_1 = 1$ and $p^k_2 = 0$. In this case the total effort is $0$. By Theorem~\ref{th:char}, every battlefield receives positive effective effort in equilibrium, which implies that in equilibrium the lower bound of $0$ is not attained for any pair of spillover networks $\bm{\rho}_1$ and $\bm{\rho}_2$.

Let $\bm{\rho}_1$ be an empty network
\begin{equation*}
\rho^{l,k}_1 = 0 \textrm{, for all $l\in B$ and $k \in B\setminus\{l\}$}
\end{equation*}
and let $\bm{\rho}_2$ be complete directed networks with
\begin{equation*}
\rho^{l,k}_2 = \lambda \textrm{, for all $l\in B$ and $k \in B\setminus\{l\}$},
\end{equation*}
with $\lambda > 0$.
Notice that $\bm{\mu}_i = M_i \bm{1}$ where
\begin{equation*}
M_1 = 1 \quad \textrm{and} \quad M_2 = \frac{1}{1+(m-1)\lambda}
\end{equation*}
solve equations $(\mathbf{I} + \bm{\rho}_i) \bm{\mu}_i = \bm{1}$, for $i \in \{1,2\}$, respectively.
Moreover, using~\eqref{eq:char:pr} in Theorem~\ref{th:char}, for all $k \in B$,
\begin{align*}
p_1^k p_2^k & = \frac{(\mu_{1}^k \mu_{2}^k c_{1}c_{2})^\gamma}{\left(\left(\mu_{1}^k c_{1}\right)^{\gamma}+ \left(\mu_{2}^k c_{2}\right)^{\gamma}\right)^2} = \frac{1}{\left(\frac{M_1 c_1}{M_2 c_2}\right)^{\gamma} + \left(\frac{M_2 c_2}{M_1 c_1}\right)^{\gamma} + 2} \\
& = \frac{1}{\left(\frac{(1+(m-1)\lambda) c_1}{c_2}\right)^{\gamma} + \left(\frac{c_2}{(1+(m-1)\lambda) c_1}\right)^{\gamma} + 2} := Q(\lambda).
\end{align*}
Hence, for all $k\in B$ and $i\in \{1,2\}$,
\begin{equation*}
\lim_{\lambda \rightarrow +\infty} p_1^k p_2^k = 0
\end{equation*}
and, consequently,
\begin{equation*}
\lim_{\lambda \rightarrow +\infty} \frac{\gamma}{c_i}\sum_{k \in B} p_1^k p_2^k v^k = 0.
\end{equation*}

To complete the proof we need to show that the game has an equilibrium with the values of $\bm{\mu}_1$ and $\bm{\mu}_2$ given above. Arguing as in the proof of the maximizing effort design result above, notice that
\begin{equation*}
\left(\mathbf{I}+\bm{\rho}_1\right)^{-1} = \mathbf{I}\qquad \textrm{and} \qquad \left(\mathbf{I}+\bm{\rho}_2\right)^{-1} = \frac{1}{(\lambda - 1)(1 + (m-1)\lambda)} \bm{1} - \frac{1}{\lambda - 1} \mathbf{I}.
\end{equation*}

Using~\eqref{eq:char:e} in Theorem~\ref{th:char},
\begin{equation*}
\bm{e}_i = \frac{\gamma}{c_i} \left(\mathbf{I} + {\bm{\rho}_i}^T\right)^{-1} \left(\bm{p}_1 \odot\bm{p}_2 \odot \bm{v} \oslash \bm{\mu}_i\right)  = \frac{\gamma v}{c_i}\frac{Q(\lambda)}{M_i} \left(\mathbf{I} + {\bm{\rho}_i}^T\right)^{-1} \bm{1} = \frac{\gamma v}{c_i}Q(\lambda) \bm{1}.
\end{equation*}
Hence, for $k \in B$,
\begin{equation*}
e^{k}_i = \frac{\gamma v}{c_i} p_1^k p_2^k .
\end{equation*}
Since all the efforts are greater than zero, the game has an interior equilibrium $(\bm{e}_1,\bm{e}_2)$ with the associated marginal rates of substitution $(\bm{\mu}_1,\bm{\mu}_2)$, as computed above. 
By Lemma~\ref{lemma:unique} in the appendix and non-singularity of matrices $\mathbf{I} + \bm{\rho}_i$, for all $i \in \{1,2\}$, this constructed equilibrium is unique. This completes the proof.
\end{proof}

\section{Endogenous spillovers}
\label{sec:endogenous}

The analysis of equilibrium in given networks illustrates the powerful effects of network spillovers in shaping strategies and payoffs. It is then natural to ask what would be the outcome if players could invest in both the efforts and create their own network of spillovers. As costs of links are zero (for consistency with the network design section), to make the problem interesting, we need to assume a bound on the  spillovers between the nodes. For simplicity, we set the bound to be equal to 1.  Let $\mathcal{G}(B) = \{\bm{\rho} \in [0,1]^{B \times B} : \textrm{for all $k \in B$, } \rho^{k,k} = 0 \}$ be the set of all admissible networks of spillovers over the set of battlefields $B$.
Each player $i \in \{1,2\}$ chooses the efforts at each battlefield, $\bm{e}_i \in \mathbb{R}^{B}_{\geq 0}$, as well as the magnitudes of spillovers between the battlefields, $\bm{\rho}_i \in \mathcal{G}(B)$. The spillovers are costless, in line with the section on network design. The choices are made simultaneously and player $i$'s payoff from strategy profile $(\bm{e}_1,\bm{e}_2,\bm{\rho}_1,\bm{\rho}_2)$
is
\begin{align*}
\payoff{i}(\bm{e}_1,\bm{e}_2,\bm{\rho}_1,\bm{\rho}_2) & = \sum_{k \in B} v^k p_i^k(\bm{e}_1,\bm{e}_2,\bm{\rho}_1,\bm{\rho}_2) - c_i \sum_{k \in B} e_i^k  \\
{} & = \sum_{k \in B} v^k p_i^k(\bm{e}_1,\bm{e}_2,\bm{\rho}_1,\bm{\rho}_2) - c_i \bm{1}^{T} \bm{e}_i,
\end{align*}
where
\begin{equation*}
p_i^k(\bm{e}_1,\bm{e}_2,\bm{\rho}_1,\bm{\rho}_2) = \frac{\left(y^k_i\right)^{\gamma}}{\left(y^k_1\right)^{\gamma} + \left(y^k_2\right)^{\gamma}}
\end{equation*}
and $\bm{y}_j = (\mathbf{I} + \bm{\rho}_j^T) \bm{e}_j$, for all $j \in \{1,2\}$.

\begin{proposition}   
\label{prop-zerolinkcost}
A strategy profile $(\bm{e}_1,\bm{e}_2,\bm{\rho}_1,\bm{\rho}_2)$ is an equilibrium in the game with costless spillovers if and only if, for any $i \in \{1,2\}$, $\bm{\rho}_i$ contains an out-star with spillovers of magnitude $1$ and, for all $k \in B$, the effective effort at each battlefield is equal to the total effort,
\begin{equation*}
y_i^k = \sum_{l \in B} e_i^l = \frac{\gamma}{c_i} \, \frac{c_{1}^\gamma c_2^\gamma}{(c_1^\gamma + c_2^\gamma)^2} \left(\sum_{k \in B} v^k\right).
\end{equation*}
In every equilibrium, payoff to player $i$ equals
\begin{equation*}
\payoff{i}=\left(\frac{c_{-i}^\gamma}{c_1^\gamma+c_2^\gamma} - \gamma \, \frac{c_{1}^\gamma}{c_1^\gamma + c_2^\gamma} \frac{c_{2}^\gamma}{c_1^\gamma + c_2^\gamma}\right) \left(\sum_{k \in B} v^k\right).
\end{equation*}
\end{proposition}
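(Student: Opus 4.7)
The plan is to exploit the costless spillovers to reduce the game to a single-prize Tullock contest on the aggregate efforts $E_i \equiv \sum_{l \in B} e_i^l$ with total prize $V \equiv \sum_{k \in B} v^k$. The key observation is that $\rho_i^{l,k} \leq 1$ implies $y_i^k = e_i^k + \sum_{l \neq k} \rho_i^{l,k} e_i^l \leq E_i$, with equality iff $\rho_i^{l,k} = 1$ for every $l$ in the support of $\bm{e}_i$; since the network does not enter the cost, player $i$ weakly prefers to saturate $y_i^k$ at every battlefield, and strictly so whenever $y_{-i}^k > 0$.

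For necessity I would proceed in three steps. Step one: show that in any equilibrium $y_j^k > 0$ for every $j$ and $k$. Indeed, if $y_j^k = 0$ while $E_j > 0$ then some $l \neq k$ must satisfy $e_j^l > 0$ and $\rho_j^{l,k} = 0$; raising this single entry to any positive value changes only $y_j^k$ (other effective efforts depend on different entries of $\bm{\rho}_j$), strictly raising $p_j^k$ at no extra effort cost and contradicting equilibrium. (The case $E_j = 0$ is ruled out by a standard deviation argument, since $v^k > 0$ and $\gamma \leq 1$ make a small positive $e_j^k$ strictly profitable once the opponent's effective effort is finite.) Step two: given $y_{-i}^k > 0$ for every $k$, $p_i^k$ is strictly increasing in $y_i^k$, so holding $E_i$ fixed player $i$ strictly prefers $y_i^k = E_i$ at every $k$; this forces $\rho_i^{l,k} = 1$ for every $l \neq k$ with $e_i^l > 0$, so $\bm{\rho}_i$ in particular contains an out-star of magnitude-one spillovers emanating from any node in the support of $\bm{e}_i$. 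Step three: substituting $y_i^k = E_i$ everywhere, the payoff collapses to
\begin{equation*}
\payoff{i} = \frac{E_i^\gamma}{E_1^\gamma + E_2^\gamma} V - c_i E_i,
\end{equation*}
a standard two-player Tullock contest with prize $V$ and costs $(c_1,c_2)$. Its first-order conditions give $E_i/E_{-i} = c_{-i}/c_i$, hence $p_i = c_{-i}^\gamma/(c_1^\gamma + c_2^\gamma)$ and $E_i = \gamma p_1 p_2 V / c_i$, which matches the claimed formula for $y_i^k$; the payoff formula then follows from $\payoff{i} = p_i V - c_i E_i = (p_i - \gamma p_1 p_2)V$.

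Sufficiency is by reversing the argument. With the claimed structure, each player's best-response problem in $(\bm{e}_i, \bm{\rho}_i)$ reduces to the aggregate Tullock contest just solved; the Tullock payoff is concave in $E_i$ for $\gamma \in (0,1]$, so the first-order condition pins down the global best response in $E_i$, while no deviation in $\bm{\rho}_i$ can raise any $y_i^k$ above $E_i$, so the network choice is also optimal.

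The main obstacle is step one. The fact that effective effort is strictly positive at every battlefield for \emph{both} players is essential: without it one cannot upgrade ``player $i$ weakly prefers $y_i^k = E_i$'' into a binding constraint, and the reduction to the aggregate contest breaks. The argument itself is clean because a change in a single entry $\rho_j^{l,k}$ touches only one coordinate of $\bm{y}_j$, but it requires careful bookkeeping of which entries of $\bm{\rho}_j$ affect which coordinates of $\bm{y}_j$ — a point hidden by the matrix notation $\bm{y}_j = (\mathbf{I} + \bm{\rho}_j^T)\bm{e}_j$.
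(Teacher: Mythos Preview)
Your proposal is correct and follows essentially the same approach as the paper: bound $y_i^k \leq E_i$, use costless link deviations to force saturation $y_i^k = E_i$, and then collapse the game to a single Tullock contest with prize $V = \sum_k v^k$. Your Step~2 is in fact slightly more direct than the paper's version: you argue that $\rho_i^{l,k} = 1$ for \emph{every} $l$ in the support of $\bm{e}_i$ and every $k \neq l$ (since raising any such entry strictly raises $y_i^k$ at no cost when $y_{-i}^k > 0$), which immediately gives $y_i^k = E_i$; the paper instead first establishes an out-star from the maximum-effort node and then uses a separate ``move all effort to the hub'' deviation to obtain $y_i^k = E_i$. Both routes are short and correct.
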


\begin{proof} 
Let $(\bm{e}_1,\bm{e}_2,\bm{\rho}_1,\bm{\rho}_2)$ be a Nash equilibrium of the game. 
We show first that for each player $i \in \{1,2\}$ and each battlefield $k \in B$, $y_i^k > 0$. For assume otherwise and let $k \in B$ be such that $y_i^k = 0$ for some player $i \in N$. By Theorem~\ref{th:char}, $y_{-i}^k > 0$. Moreover, there exists a battlefield $l$ such that $e_i^l > 0$. Since $y_i^k = 0$ so $l\neq k$ and $\rho_i^{lk} = 0$. Since $y_{-i}^{k} > 0$, player $i$ strictly benefits from increasing $\rho_i^{lk}$. A contradiction with the strategy profile being a Nash equilibrium.
Second, we show that, for any $i \in \{1,2\}$, $\bm{\rho}_i$ contains an out-star with outgoing spillovers of magnitude $1$. For assume otherwise. Take the battlefield $k \in B$ such that $e_i^k = \max_{l \in B} e_i^l$. There exists a battlefield $l$ such that $\rho_i^{kl} < 1$. Since, as we showed above, $y_{-i}^l > 0$ so player $i$ strictly benefits from increasing $\rho_i^{kl}$. Again, a contradiction with the strategy profile being a Nash equilibrium.
Third, we show that for any player $i \in \{1,2\}$ and every battlefield $k$ the effective effort at $k$ is equal to the total effort of $i$,
$y_i^k = E_i = \sum_{l \in B} e_i^l$.
For assume otherwise and let $l$ be a battlefield with $y_i^l < E_i$. Since all the spillovers are bounded from above by $1$, an effective effort at every battlefield is less than or equal to $E_i$. Let $k$ be a centre of an out-star with outgoing spillovers of magnitude $1$ in $\rho^k_i$. Moving all the effort to $k$ result in all battlefields having effective effort equal to $E_i$. Since $y_i^l < E_i$ and, as we showed above, all battlefields receive positive effective effort from the other player, this strictly increases the payoff of player $i$. A contradiction with the strategy profile being a Nash equilibrium.

The aggregate efforts $(E_1,E_2)$ must be the equilibrium profile of a two-player one-battle contest where the prize equals  $V=\left(\sum_{k\in B} v^k\right)$.  By the standard results in the literature, for $i \in \{1,2\}$,
\begin{equation*}
E_i = \frac{\gamma}{c_i}p_i(1-p_i) V \qquad \textrm{ where } \qquad p_i=\frac{c_{-i}^\gamma}{c_1^\gamma+c_2^\gamma}.
\end{equation*}
\end{proof}

 Every equilibrium of the game exhibits a \textit{universal access} property, the efforts at any battlefield are made available to all battlefields.  Given the analysis above, it is easy to see that a strategy profile $(\bm{e}_1,\bm{e}_2,\bm{\rho}_1,\bm{\rho}_2)$, where $\bm{\rho}_i$  is an out-star with center $k\in B$, $e_i^k = E_i$, and $e_i^l = 0$ for $l \in B\setminus \{k\}$, is an Nash equilibrium. 
 An example of an equilibrium is presented in Figure~\ref{fig:endogenous1}.

 \begin{figure}[htp]
\begin{center}
  \includegraphics[scale=0.6]{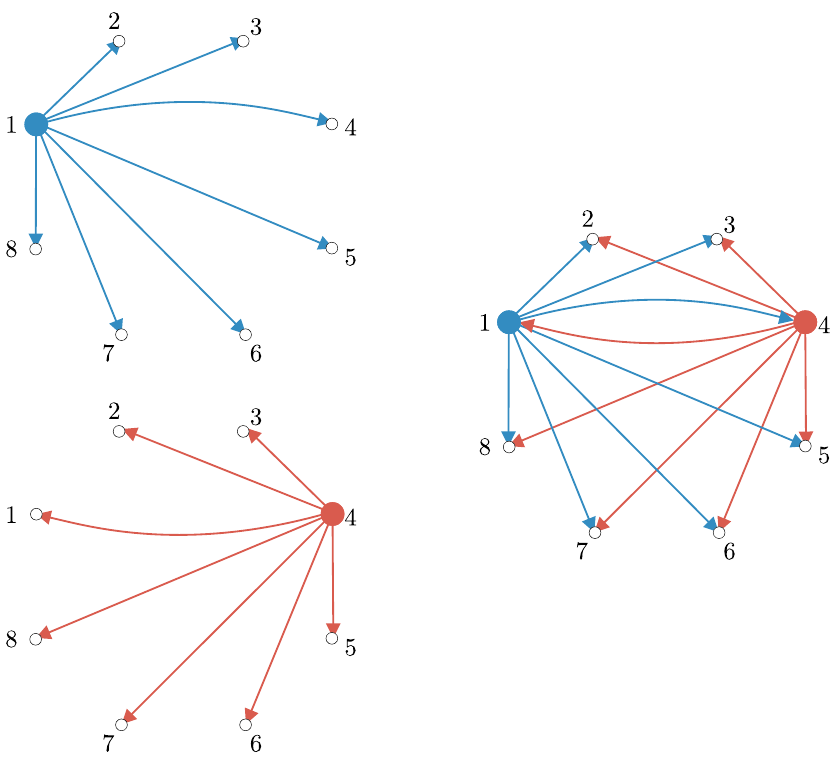}
\end{center}
  \caption{Equilibrium in networks and efforts ($n=8$). Left panel -- individual strategies;  right panel: equilibrium outcome.}
  \label{fig:endogenous1}
\end{figure}

In network design section, we assumed that spillovers were unbounded, whereas in this section we assume that spillovers are bounded. So a direct comparison between networks that maximize aggregate efforts (and those that maximize aggregate utility) with those arise when players choose their own networks is not possible. However, it is easy to see that they will typically be quite different. Networks that maximize aggregate efforts will aim for unequal spillovers between the players to reduce the difference in cost of efforts; by contrast, the decentralized choice of networks results in the two players choosing maximal possible spillovers.

\section{Concluding Remarks}

We studied a model of multi-battle contests with efforts spillover between the battles. We established existence and uniqueness of equilibrium. The equilibrium probability of a player winning a battlefield can be expressed as a product of the ratio of costs of two players and the ratio of the centrality of the battlefields in the two networks, respectively. This result brings together the central insight of the research on networks and contests. We studied the design of networks that maximize efforts: here we showed that networks can serve the role of \textit{handicapping} the lower cost player. Finally, we studied the choice of networks and efforts in battles by players: we showed that a player's efforts on any battlefield are available to every battlefield. This \textit{universal access} property property implies that players will choose efforts in proportion to the sum of the value of all battlefields and the relative costs of efforts of the two players. 

Our analysis is conducted for two-player games, but we expect that the main arguments should carry over to general $n$ player games. Similarly, we believe that our arguments can be extended to cover more general costs of effort (an earlier version of the paper derived similar results in a model with exogenously specified constant resources for each of the players).  

In future work it would be interesting to study alternative ways of aggregating winnings from battlefields onto payoffs -- such as e.g., the player who wins a majority of battles earns positive payoff and the other player earns zero payoffs. Finally, our formulation of spillovers across nodes reflects the idea that efforts on one node are substitutes for efforts on neighbouring nodes. It would be interesting to examine the case where they are complements.

\newpage
\bibliographystyle{plainnat}
\bibliography{biblio}

\newpage

\appendix
\section{Appendix}

\subsection{Equilibrium: existence and uniqueness}

The first step is to establish existence. 
\begin{lemma}
\label{lemma:existence}
If $p$ is the Tullock CSF with $\gamma \in (0,1]$, then there exists a pure strategy Nash equilibrium.
\end{lemma}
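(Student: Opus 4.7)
The plan is to use a standard truncation-and-limit argument that handles the discontinuity of the Tullock payoff at points where some battlefield receives zero effective effort from both players.

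First I would obtain a uniform upper bound on best responses. Since choosing zero effort guarantees a non-negative payoff, any best response of player $i$ satisfies $c_i \sum_k e_i^k \leq \sum_k v^k$, so strategies may be restricted to the compact convex set $[0, M_i]^B$ with $M_i = \sum_k v^k / c_i$. For each $\varepsilon > 0$, I would then consider the truncated strategy set $S_i^{\varepsilon} := [\varepsilon, M_i]^B$, on which the effective effort obeys $y_i^k \geq \varepsilon > 0$ at every battlefield, so the probabilities $p_i^k$ and the payoff $\payoff{i}$ are jointly continuous. On this truncated game $\payoff{i}$ is also concave in $\bm{e}_i$: for fixed $y_{-i}^k > 0$ the scalar map $y \mapsto v^k y^{\gamma}/(y^{\gamma} + (y_{-i}^k)^{\gamma})$ is concave in $y \geq 0$ when $\gamma \in (0,1]$ (the standard Tullock concavity property), $y_i^k$ is linear in $\bm{e}_i$, and the cost term is linear; summing over battlefields preserves concavity. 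By Kakutani's (or Glicksberg's) fixed-point theorem an equilibrium $(\bm{e}_1^{\varepsilon}, \bm{e}_2^{\varepsilon}) \in S_1^{\varepsilon} \times S_2^{\varepsilon}$ of the truncated game exists.

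Next I would pass to the limit. Taking $\varepsilon_n \downarrow 0$ and using compactness of $[0, M_1]^B \times [0, M_2]^B$, extract a convergent subsequence with limit $(\bm{e}_1^{*}, \bm{e}_2^{*})$. The crucial step is to show that every battlefield has $y_1^{*k} + y_2^{*k} > 0$ at the limit, because then $\payoff{i}$ is continuous at $(\bm{e}_1^{*}, \bm{e}_2^{*})$ and a standard upper-hemicontinuity argument transfers the equilibrium property: for any alternative $\bar{\bm{e}}_i \in [0, M_i]^B$, a small perturbation $\bar{\bm{e}}_i^{\varepsilon_n} \in S_i^{\varepsilon_n}$ with $\bar{\bm{e}}_i^{\varepsilon_n} \to \bar{\bm{e}}_i$ satisfies $\payoff{i}(\bm{e}_i^{\varepsilon_n}, \bm{e}_{-i}^{\varepsilon_n}) \geq \payoff{i}(\bar{\bm{e}}_i^{\varepsilon_n}, \bm{e}_{-i}^{\varepsilon_n})$, and letting $n \to \infty$ gives the limit inequality (deviations outside $[0, M_i]^B$ being strictly dominated). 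I would establish no-singularity by contradiction: if $y_1^{*k} = y_2^{*k} = 0$ for some $k$, pass to a further subsequence on which $p_1^{\varepsilon_n k} \to \bar{p} \in [0,1]$, and without loss of generality $\bar{p} \leq 1/2$. For a fixed small $\delta > 0$ and $n$ large, the deviation that adds $\delta$ to $e_1^{\varepsilon_n k}$ is feasible in $S_1^{\varepsilon_n}$, and since $y_2^{\varepsilon_n k} \to 0$ the post-deviation winning probability at $k$ tends to $1$; the gain at $k$ is therefore at least $v^k/2 - o(1)$, while the added cost $c_1 \delta$ and the spillover-induced changes of probabilities at other battlefields are $O(\delta)$. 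Choosing $\delta$ small produces a strictly profitable deviation, contradicting that $(\bm{e}_1^{\varepsilon_n}, \bm{e}_2^{\varepsilon_n})$ is an equilibrium of the truncated game.

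The main obstacle is this no-singularity step. Controlling the spillover-driven side effects of the deviation requires that the injected effort $\delta$ at the singular battlefield be small relative to the $\Theta(1)$ gain $v^k/2$ coming from switching from a limit probability at most $1/2$ to a probability tending to $1$ at $k$; a subsidiary delicate point is that different subsequences could in principle converge to different limits, but each such limit is an equilibrium, which suffices for existence. Everything else in the argument — bounding strategies, verifying continuity and concavity on the truncated domain, and applying Kakutani — is standard.
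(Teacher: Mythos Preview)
Your approach is essentially the paper's: truncate strategies below by $\varepsilon$, apply Glicksberg to the resulting continuous concave game, extract a limit along $\varepsilon_n\downarrow 0$, rule out battlefields receiving zero effective effort from both players, and pass the equilibrium inequalities to the limit by perturbing deviations into the truncated sets. One refinement worth noting on the no-singularity step: the paper bounds $\partial \payoff{j}/\partial e_j^k$ from below by simply \emph{dropping} the cross-battlefield terms (legitimate because spillovers $\rho_j^{k,l}\geq 0$ make them non-negative) and then shows the own-battlefield term diverges; your claim that spillover-induced side effects are ``$O(\delta)$'' is not obviously uniform in $n$ (the derivative $\partial p_1^l/\partial y_1^l$ can blow up like $1/\varepsilon_n$ on the truncated game), but the argument is rescued by the same observation---those side effects are always weakly beneficial to the deviating player and can be discarded rather than bounded.
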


\begin{proof}
Let $\Gammait$ denote the original conflict game. For any positive $\varepsilon>0$ we define a truncated conflict game  $\Gammait^{\varepsilon}$ in which each player's effort in any battle is bounded below by  $\varepsilon$.
In the truncated game $\Gammait^{\varepsilon}$, each $i$'s payoff, as defined in~\eqref{eq:payoff}, is continuously differentiable.  Moreover, for any $i\in \{1,2\}$, $\payoff{i}$  is concave in  $\bm{e}_i$,  as  $\bm{y}_i$ results from an affine transformation of $\bm{e}_i$, the CSF $p_i$ is concave in $y_i^k$ (as $p_i$ is the Tullock CSF with $\gamma \in (0,1]$),
and the cost is convex in $\bm{e}_i$. In addition, since the total prize is bounded from above by $\sum_{k\in B} v^k$  and the winning probabilities at every battlefield are bounded from above by $1$, there is an upper bound $M$ on the efforts so that exerting an effort higher than $M$ on any battlefield is strictly dominated. Hence, without loss of generality, we can assume that each player's strategy space is $[\varepsilon,M]^B$.
If $\varepsilon < M$ then each player's strategy space is convex, compact, and non empty. Therefore, for any $\varepsilon\in (0,M)$, by Glicksberg's fixed point theorem~(\cite{Glicksberg1952}), a pure strategy Nash equilibrium of $\Gammait^{\varepsilon}$ exists

For each positive integer value $l = 1,2,\ldots$ let $\bm{e}^*(l) = (\bm{e}_1^*(l),\bm{e}_2^*(l))$ denote an equilibrium of $\Gamma^{1/(q+l)}$ where
$q > 1/M$ is an integer. Since the sequence $(\bm{e}^*(l))_{l=1}^{+\infty}$ lies in the compact set $[0,M]^B\times [0,M]^B$, it has a convergent subsequence. To simplify the notation, we may assume that the sequence $\bm{e}^*(l)$ itself converges to a limit $\bm{e}^*(+\infty)$.

Let $\bm{y}_i^*(l) = (\mathbf{I} + \bm{\rho}_i^T)\bm{e}^{*}_i(l)$ be the effective efforts of player $i \in \{1,2\}$ associated with the true efforts $\bm{e}^*(l)$, for $l\in\mathbb{N}\cup\{+\infty\}$.
We will show that $\bm{e}^*(+\infty)$ is an equilibrium of the original game. To this end,  we will show that:
\begin{enumerate}[(i)]
\item
for any battle $k\in B$ there exists a player $i \in \{1,2\}$ such that $y^{*k}_i(+\infty)>0$, and \label{p:existence:1}
\item
$\bm{e}^*(+\infty)$ is indeed an equilibrium.\label{p:existence:2}
\end{enumerate}

For point~\eqref{p:existence:1} assume, to the contrary, that there exists a battlefield $k\in B$ such that
for all $i \in \{1,2\}$, $y_i^{k^{*}}(+\infty)=0$.

Let
\begin{equation*}
p_i^{k^{*}}(l) = \frac{(y_i^{k^{*}}(l))^{\gamma}}{(y_1^{k^{*}}(l))^{\gamma} + (y_2^{k^{*}}(l))^{\gamma}}
\end{equation*}
be player $i$'s winning probability at battlefield $k$ in $\Gamma^{1/(q+l)}$ under strategy profile $\bm{e}^{*}(l)$.
Since $p_1^{k^{*}}(l) + p_2^{k^{*}}(l)=1$, there exists a player $j \in \{1,2\}$ such that $p_j^{k^{*}}(l) \leq 1/2$
for infinitely many $l \in\mathbb{N}$. Taking a subsequence if necessary, we can assume that $p_j^{k^{*}}(l)\leq 1/2$ for any $l\in\mathbb{N}$.

Now consider
\begin{align*}
\left.\frac{\partial \payoff{j}(\bm{e})}{\partial e_j^k}\right|_{\bm{e}=\bm{e}^*(l)}
& \geq -c_j+  v^k\left.\frac{\partial p_j^k(\bm{e})}{\partial e_j^k}\right|_{\bm{e}=\bm{e}^*(l)}\\
& = -c_j+  v^k \left(1-\frac{(y_j^{k^{*}}(l))^{\gamma}}{(y_1^{k^{*}}(l))^{\gamma} + (y_2^{k^{*}}(l))^{\gamma}} \right) \frac{1}{(y_1^{k^{*}}(l))^{\gamma} + (y_2^{k^{*}}(l))^{\gamma}}
\\
& \geq -c_j+  v^k \left(1-\frac{1}{2}\right)\frac{1}{(y_1^{k^{*}}(l))^{\gamma} + (y_2^{k^{*}}(l))^{\gamma}},
\end{align*}
where, in the first step, we ignore the nonnegative spillovers of $e_j^k$ on other battlefields $k \in B$, the
second step follows from direct computation, and the last step follows from the fact that $p_j^{k^{*}}(l) \leq 1/2$.
Since $y_i^{k^{*}}(+\infty)=0$, for all  $i\in N$,  we have
$\lim_{l\rightarrow +\infty} \left.\frac{\partial \payoff{j}(\e)}{\partial e_j^k}\right|_{\e=\e^*(l)}=+\infty$, as $\lim_{l\rightarrow +\infty}y_i^{k^{*}}(l)=y_i^{k^{*}}(+\infty)=0$, for all $i\in \{1,2\}$.  Note that
$\lim_{l\rightarrow +\infty} e_j^{k^{*}}(l) \leq \lim_{l\rightarrow +\infty} y_j^{k^{*}}(l)=0$, implying that $\lim_{l\rightarrow +\infty} e_j^{k^{*}}(l)=0$.

Consequently,  for sufficiently large $l$, $\left.\frac{\partial \payoff{j}(\e)}{\partial e_j^k}\right|_{\bm{e}=\bm{e}^*(l)}>0$ and $e_j^{k^{*}}(l)< M$,  implying that $j$ can strictly improve his payoff by slightly increasing his effort in battle $k$, which contradicts the fact that $\bm{e}^*(l)$ is an equilibrium of $\Gammait^{1/(q+l)}$.

For point~\eqref{p:existence:2}, take any battle $k\in B$ and any player $i \in \{1,2\}$. We  need to show that
for all $\bm{e}_i \in \mathbb{R}_{\geq 0}^{B}$,
\begin{equation*}
\payoff{i}(\bm{e}_1^*(+\infty), \bm{e}^*_2(+\infty )) \geq \payoff{i}(\bm{e}_i, \bm{e}_{-i}^*(+\infty )).
\end{equation*}

Since $\bm{e}^*(l)$ is an equilibrium of $\Gammait^{1/q+l}$,  so,  for any $i\in \{1,2\}$ and any $\bm{e}_i \in [1/l,M]^B \subseteq \mathbb{R}_{\geq 0}^{B}$,
\begin{equation}
\label{eq:payoff-l}
\payoff{i}\left(\bm{e}^{*}_1(l), \bm{e}^{*}_2(l)\right) \geq \payoff{i}\left(\bm{e}_{i}, \bm{e}^{*}_{-i}(l)\right).
\end{equation}
Take any player $i\in \{1,2\}$ and any $\bm{e}_i\in \mathbb{R}_{\geq 0}^B$,
there are two cases to consider:
\paragraph{Case 1: $\bm{e}_i>\mathbf{0}$ (every entry is positive).}
Then, for sufficiently large $l$, $\bm{e}_i \in [1/l, M]^B$. When $l$ goes to infinity in \eqref{eq:payoff-l} we get
\begin{equation*}
\payoff{i}(\bm{e}_1^*(+\infty), \bm{e}^*_2(+\infty)) \geq \payoff{i}(\bm{e}_i, \bm{e}_{-i}^*(+\infty)),
\end{equation*}
as,  when $l \rightarrow +\infty$, $\bm{e}^*(l) \rightarrow \bm{e}^*(+\infty)$,  $\payoff{i}(\bm{e}^*_1(l), \bm{e}^*_2(l))\rightarrow \payoff{i}(\bm{e}^{*}_1(+\infty), \bm{e}^*_2(+\infty))$
(due to the continuity of $\payoff{i}$ at $\bm{e}=\bm{e}^{*}(+\infty)$ by point~\eqref{p:existence:1}),
and  $\payoff{i}(\bm{e}_1, \bm{e}^{*}_2(l)) \rightarrow \payoff{i}(\bm{e}_i, \bm{e}^{*}_{-i}(+\infty))$ (due to the continuity of $\payoff{i}$ at $\bm{e}=(\bm{e}_i, \bm{e}^{*}_{-i}(+\infty))$).\footnote{For any strategy profile $\bm{e} \in \mathbb{R}_{\geq 0}^B \times \mathbb{R}_{\geq 0}^B$,  as long as for any battlefield there is one player with strictly positive effective effort, $\payoff{i}$ is continuous at $\bm{e}$.}

\paragraph{Case 2: $\bm{e}_i \geq \mathbf{0}$.} Take any $\eta>0$, and consider $\hat{\bm{e}}_i =\eta\bm{1}+\bm{e}_i>\bm{0}$ (where $\bm{1}$ is the vector of $1$'s). Then,
\begin{equation*}
\payoff{i}(\bm{e}^{*}_1(+\infty), \bm{e}^{*}_2(+\infty )) \geq \payoff{i}(\hat{\bm{e}}_i, \bm{e}_{-i}^{*}(+\infty)),
\end{equation*}
by Case 1.

Furthermore, the winning probability of $i$ weakly increases when $i$'s efforts increases from $\bm{e}_i$ to $\hat{\bm{e}}_i$.    The  cost difference between $\hat{\bm{e}}_i$ and $\bm{e}_i$ is $c_i m\eta$. Therefore,
\begin{equation*}
\payoff{i}(\hat{\bm{e}}_i, \bm{e}^{*}_{-i}(+\infty)) \geq \payoff{i}(\bm{e}_i, \bm{e}_{-i}^{*}(+\infty)) -c_i m\eta.
\end{equation*}
Combining these inequalities yields
\begin{equation*}
\payoff{i}(\bm{e}^{*}_1(+\infty), \bm{e}^*_2(+\infty)) \geq \payoff{i}(\bm{e}_i, \bm{e}^{*}_{-i}(+\infty)) -c_i m\eta,
\end{equation*}
which holds for any $\eta>0$.
Taking $\eta \rightarrow 0^{+}$ yields $\payoff{i}(\bm{e}^{*}_1(+\infty), \bm{e}^*_2(+\infty)) \geq
\payoff{i}(\bm{e}_i, \bm{e}_{-i}^{*}(+\infty))$.
\end{proof}

Having established existence, we next take up uniqueness. In the proofs we use the standard notion of interchangeability of equilibria, defined as follows. The set of Nash equilibria is interchangeable if for any two Nash equilibria $(\bm{e}'_1,\bm{e}'_2)$ and $(\bm{e}''_1,\bm{e}''_2)$, the strategy profiles $(\bm{e}'_1,\bm{e}''_2)$ and $(\bm{e}''_1,\bm{e}'_2)$ are also Nash equilibria.

Adapting the argument of~\cite{Ewerhart2017}, we first show that Nash equilibria in our model are interchangeable.

\begin{lemma}
\label{lemma:exch}
The set of equilibria is interchangeable.
\end{lemma}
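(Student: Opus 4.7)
Given two Nash equilibria $(\bm{e}'_1,\bm{e}'_2)$ and $(\bm{e}''_1,\bm{e}''_2)$, my plan is to show that both cross profiles $(\bm{e}'_1,\bm{e}''_2)$ and $(\bm{e}''_1,\bm{e}'_2)$ are equilibria by the classical trick of summing the four best-response inequalities these two equilibria supply, and exploiting the separability of the payoff sum. The key observation is that whenever $p_1^k + p_2^k = 1$ at every battlefield $k$,
\begin{equation*}
\payoff{1}(\bm{e}_1,\bm{e}_2) + \payoff{2}(\bm{e}_1,\bm{e}_2) = \sum_{k\in B} v^k - c_1 \bm{1}^T \bm{e}_1 - c_2 \bm{1}^T \bm{e}_2,
\end{equation*}
so the aggregate payoff depends on the two effort vectors only linearly and separately, which is the structural property driving interchangeability.

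Concretely, the equilibrium property of $\bm{e}'$ and $\bm{e}''$ yields the four inequalities
\begin{equation*}
\payoff{1}(\bm{e}'_1,\bm{e}'_2) \geq \payoff{1}(\bm{e}''_1,\bm{e}'_2), \qquad \payoff{2}(\bm{e}'_1,\bm{e}'_2) \geq \payoff{2}(\bm{e}'_1,\bm{e}''_2),
\end{equation*}
\begin{equation*}
\payoff{1}(\bm{e}''_1,\bm{e}''_2) \geq \payoff{1}(\bm{e}'_1,\bm{e}''_2), \qquad \payoff{2}(\bm{e}''_1,\bm{e}''_2) \geq \payoff{2}(\bm{e}''_1,\bm{e}'_2).
\end{equation*}
Adding all four and applying the identity above to the two equilibrium profiles on the left and to the two cross profiles on the right, both sides reduce to the same quantity, $2\sum_{k\in B} v^k - c_1(\bm{1}^T\bm{e}'_1 + \bm{1}^T\bm{e}''_1) - c_2(\bm{1}^T\bm{e}'_2 + \bm{1}^T\bm{e}''_2)$. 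Since the sum is a genuine equality while each summand is a weak inequality, each of the four must hold with equality. From $\payoff{1}(\bm{e}''_1,\bm{e}''_2) = \payoff{1}(\bm{e}'_1,\bm{e}''_2)$ combined with $\bm{e}''_1$ being a best response to $\bm{e}''_2$, I conclude that $\bm{e}'_1$ is also a best response to $\bm{e}''_2$; symmetrically, from $\payoff{2}(\bm{e}'_1,\bm{e}'_2) = \payoff{2}(\bm{e}'_1,\bm{e}''_2)$, the strategy $\bm{e}''_2$ is a best response to $\bm{e}'_1$. Hence $(\bm{e}'_1,\bm{e}''_2)$ is a Nash equilibrium, and $(\bm{e}''_1,\bm{e}'_2)$ follows by the symmetric argument.

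The main delicacy is ensuring the separability identity applies at the two cross profiles and not only at the equilibrium profiles, since a priori a battlefield $k$ could have zero effective effort from player $1$ under $\bm{e}'$ and simultaneously zero effective effort from player $2$ under $\bm{e}''$, at which the Tullock CSF is formally of the form $0/0$. I handle this by adopting the standard continuation $p_1(0,0) = p_2(0,0) = 1/2$, which preserves $p_1 + p_2 = 1$ everywhere and makes the identity valid on all four profiles involved. This convention does not alter the set of Nash equilibria: by Lemma~\ref{lemma:aux:2} in the appendix, every battlefield receives positive effective effort from at least one player in any equilibrium, so the value of $p$ at the origin is never used in evaluating equilibrium payoffs; and any unilateral deviation by player $i$ matters only at battlefields where $y^k_{-i} > 0$, on which the CSF is continuous. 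With this convention in place, the four-inequalities argument proceeds exactly as sketched, and the interchangeability conclusion follows cleanly.
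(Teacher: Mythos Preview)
Your argument is correct and is essentially the paper's approach unpacked: the paper adds the opponent's cost to each player's payoff to obtain a constant-sum game and invokes the classical fact that two-player constant-sum games have interchangeable equilibria, whereas you carry out that classical proof explicitly via the four-inequality trick using the equivalent separability identity $\payoff{1}+\payoff{2}=\sum_k v^k - c_1\bm{1}^T\bm{e}_1 - c_2\bm{1}^T\bm{e}_2$, and you are in fact more careful than the paper about the $(0,0)$ discontinuity at cross profiles. One caution on your appeal to Lemma~\ref{lemma:aux:2}: as stated in the paper (for arbitrary pairs in $T_1\times T_2$) its proof of point~(i) already presumes interchangeability, so citing it here is formally circular; the fact you actually use --- that every battlefield in a \emph{genuine} equilibrium receives positive effective effort from some player --- is a one-line observation (a player facing $(0,0)$ at $k$ profitably deviates by assigning $\varepsilon$ there) and should be stated directly rather than via the lemma.
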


\begin{proof}
Define a new game, with the same sets of strategies and payoff of player $i \in \{1,2\}$ from strategy profile $(\bm{e}_1,\bm{e}_2)$ being
\begin{equation*}
\tilde{\payoff{i}}(\bm{e}_1,\bm{e}_2) = \payoff{i}(\bm{e}_1,\bm{e}_2) + c_{-i} \sum_{k \in B} e^k_{-i}.
\end{equation*}
Notice that the players have the same incentives in the new game and so it has the same set of Nash equilibria as the original game.
In addition, notice that 
\begin{align*}
& \tilde{\payoff{1}}(\bm{e}_1,\bm{e}_2) + \tilde{\payoff{2}}(\bm{e}_1,\bm{e}_2) = {} \\
& \qquad\qquad \payoff{1}(\bm{e}_1,\bm{e}_2) + \payoff{2}(\bm{e}_1,\bm{e}_2) + c_{1} \sum_{k \in B} e^k_{-i} + c_{1} \sum_{k \in B} e^k_{-i}
= \sum_{k \in B} v^k,
\end{align*}
as winning probabilities add up to $1$ for each battlefield $k\in B$.
Hence the new game is constant sum and so it has interchangeable equilibria. Hence equilibria in out game are interchangeable as well.
\end{proof}

Let $E_i$ be the set of equilibrium efforts of player $i \in \{1,2\}$ and let
\begin{equation*}
T_i = \left\{\left(\mathbf{I} + \bm{\rho}_i^T\right) \bm{e}_i : \bm{e}_i \in E_i\right\}
\end{equation*}
be the set of equilibrium effective efforts of player $i$.
We next show the following auxiliary lemma.

Define the equilibrium effective effort set for player $i$ as $T_i$. We have the following lemma.
\begin{lemma}
\label{lemma:aux:2}
For any $(\bm{y}^{*}_1, \bm{y}^{*}_2) \in T_1 \times T_2$, $i \in \{1,2\}$, and $k \in B$, 
\begin{enumerate}[(i)]
\item either ${y^{*}}^k_1 > 0$ or ${y^{*}}^k_2 > 0$,\label{p:aux:2:1}
\item for any $\bm{y}^{**}_{-i} \in T_{-i}$ either ${y^{*}}^k_i = 0$ or ${y^{*}}^k_{-i} = {y^{**}}^k_{-i}$.\label{p:aux:2:2}
\end{enumerate}
\end{lemma}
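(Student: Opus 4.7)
The plan is to establish (i) and (ii) in sequence, with claim~(ii) invoking claim~(i) at one step. Claim~(i) will follow from a one-shot deviation argument: a battlefield at which both players have zero effective effort is a free prize for whoever adds an $\varepsilon$-effort there. Claim~(ii) will follow by combining the interchangeability of equilibria (Lemma~\ref{lemma:exch}) with strict concavity of the Tullock CSF in the own effective effort at any battlefield where the opponent's effective effort is strictly positive.

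For part~(i), I would suppose toward contradiction that $y_1^{*k} = y_2^{*k} = 0$ at some battlefield $k$ in an equilibrium $(\bm{e}^*_1,\bm{e}^*_2)$. Consider the deviation $\tilde{\bm{e}}_1$ that agrees with $\bm{e}^*_1$ off battlefield $k$ and sets $\tilde{e}_1^k = \varepsilon$ for small $\varepsilon > 0$. Then $\tilde{y}_1^k = \varepsilon > 0$ while $y_2^{*k} = 0$, so player~$1$ wins battlefield $k$ outright and picks up at least $v^k/2$ over the old payoff (under any tiebreaking convention at $(0,0)$); effective efforts at every other battlefield weakly increase, so those winning probabilities weakly increase too. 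The incremental cost is only $c_1\varepsilon$, so for $\varepsilon$ small enough the deviation strictly raises $\payoff{1}$, contradicting equilibrium. The symmetric argument for player~$2$ completes the case.

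For part~(ii), fix $i \in \{1,2\}$ and let $(\bm{e}^*_1,\bm{e}^*_2)$ induce $(\bm{y}^*_1,\bm{y}^*_2)$. Let $\bm{e}^{**}_{-i}$ be any equilibrium strategy of player $-i$ inducing $\bm{y}^{**}_{-i} \in T_{-i}$. By Lemma~\ref{lemma:exch}, $(\bm{e}^*_i,\bm{e}^{**}_{-i})$ is also an equilibrium, so both $\bm{e}^*_{-i}$ and $\bm{e}^{**}_{-i}$ are best responses of player $-i$ against $\bm{e}^*_i$ and they yield the same payoff. Set $\bm{e}^\alpha_{-i} = \alpha\bm{e}^*_{-i} + (1-\alpha)\bm{e}^{**}_{-i}$ for $\alpha \in (0,1)$, with induced effective effort $\bm{y}^\alpha_{-i} = \alpha\bm{y}^*_{-i} + (1-\alpha)\bm{y}^{**}_{-i}$. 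The cost of $\bm{e}^\alpha_{-i}$ is affine in $\alpha$. At any battlefield $k$ with $y_i^{*k} > 0$, the map $y \mapsto p_{-i}(y_i^{*k}, y)$ is strictly concave on $[0,\infty)$ for Tullock CSF with $\gamma \in (0,1]$ (a direct differentiation). At battlefields with $y_i^{*k} = 0$, part~(i) applied to both equilibria gives $y_{-i}^{*k} > 0$ and $y_{-i}^{**k} > 0$, so $p_{-i}(0,\cdot) \equiv 1$ on the segment between them and the contribution is constant in $\alpha$. Hence if $y_{-i}^{*k} \neq y_{-i}^{**k}$ at some $k$ with $y_i^{*k} > 0$, the payoff of player $-i$ at $\bm{e}^\alpha_{-i}$ would strictly exceed the common best-response value, a contradiction.

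The main obstacle will be the joint handling of the two parts: strict concavity is available only along coordinates in which the opponent's effective effort is positive, so (i) is genuinely needed within the proof of (ii) to rule out the case $y_i^{*k} = 0 = y_{-i}^{*k}$. A secondary technical point is the verification of strict concavity of $y \mapsto p_{-i}(y_i,y)$ for $\gamma \in (0,1)$, where the derivative is unbounded at $y = 0$; a short log-derivative computation confirms concavity on the interior, which is what the convex combination argument actually uses.
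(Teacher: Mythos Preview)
Your proposal is correct. Part~(i) is essentially the paper's argument: a one-shot $\varepsilon$-deviation at a battlefield with zero effective effort from both players. One small imprecision: the claim that player~$1$ ``picks up at least $v^k/2$ under any tiebreaking convention'' fails if the convention assigns $p_1(0,0)=1$; the clean statement is that whichever player has $p_i(0,0)\le 1/2$ has a profitable deviation, which is what your closing sentence about the symmetric argument for player~$2$ effectively supplies.

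For part~(ii), your argument and the paper's rest on the same two ingredients---interchangeability (Lemma~\ref{lemma:exch}) and strict concavity of $y\mapsto p_{-i}(y_i,y)$ when $y_i>0$---but package them differently. The paper works at the gradient level: it shows
\[
\big\langle \nabla_{\bm e_{-i}}\payoff{-i}(\bm e^{*}_{-i},\bm e^{*}_{i})-\nabla_{\bm e_{-i}}\payoff{-i}(\bm e^{**}_{-i},\bm e^{*}_{i}),\ \bm e^{*}_{-i}-\bm e^{**}_{-i}\big\rangle
\]
is $\le 0$ by a direct computation that $\partial\mu^k_{-i}/\partial y^k_{-i}<0$, and $\ge 0$ by the first-order variational inequality for each of the two best responses, forcing termwise equality. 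You work at the function-value level: the convex combination $\bm e^{\alpha}_{-i}$ strictly beats the common best-response value unless the effective efforts agree on $\{k:y^{*k}_i>0\}$. Your route is slightly more elementary and sidesteps the differentiability issue at $y=0$ when $\gamma<1$; the paper's route has the virtue of computing the marginal-rate quantities $\mu^k_i$ explicitly, which it reuses in the characterization Theorem~\ref{th:char}.
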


\begin{proof}
For point~\eqref{p:aux:2:1}, assume, to the contrary, that $({y^{*}}^k_1,{y^{*}}^k_2) = (0,0)$, for some battlefield $k \in B$, and let $(\bm{e}^{*}_1,\bm{e}^{*}_2) \in E_1 \times E_2$ be an equilibrium profile associated with $(\bm{y}^{*}_1,\bm{y}^{*}_2)$. Then $({e^{*}}^k_1,{e^{*}}^k_2) = (0,0)$ and increasing ${e^{*}}^k_1$ by an arbitrarily small $\varepsilon > 0$ allows $1$ to win $k$ with probability $1$ and gain $v^k > 0$ for an arbitrarily small cost $\varepsilon c_1$. Hence $1$ is able to deviate to a strategy that results in a positive increase in payoff, a contradiction with the assumption that $(\bm{e}^{*}_1,\bm{e}^{*}_2)$ is a Nash equilibrium.

For point~\eqref{p:aux:2:2}, without loss of generality, we show the claim of the lemma for $i = 2$. Suppose the corresponding real efforts are $(\bm{e}^{*}_1, \bm{e}^{*}_2)$ and $\bm{e}^{**}_1$. 
Using~\eqref{eq:muhdef}, 
\begin{equation*}
    \nabla_{\bm{e}_1} \payoff{1}(\bm{e}^{*}_1, \bm{e}^{*}_2)-\nabla_{\bm{e}_1} \payoff{1}(\bm{e}^{**}_1, \bm{e}^{*}_2) = 
    \gamma c_i \begin{pmatrix}
        {\mu^{*}}_1^1 - {\mu^{**}}_1^1 + \sum_{l \in B\setminus \{1\}} \rho^{1,l}_1 \left( {\mu^{*}}_1^l - {\mu^{**}}_1^l \right) \\
        \vdots\\
        {\mu^{*}}_1^m - {\mu^{**}}_1^m + \sum_{l \in B\setminus \{m\}} \rho^{m,l}_1 \left( {\mu^{*}}_1^l - {\mu^{**}}_1^l \right)
    \end{pmatrix}
\end{equation*}
and 
\begin{equation*}
\frac{\partial \mu^k_1}{\partial y^k_1} = - \frac{\left(y^k_1\right)^{\gamma-2}\left(y^k_2\right)^{\gamma}\left(2\gamma\left(y^k_1\right)^{\gamma} + (1 - \gamma)\left(\left(y^k_1\right)^{\gamma} + \left(y^k_2\right)^{\gamma}\right)\right)}{\left(\left(y^k_1\right)^{\gamma} + \left(y^k_2\right)^{\gamma}\right)^3} < 0,
\end{equation*}
for $\gamma \in (0,1]$, $y^k_1 > 0$, and $y^k_2 > 0$. In addition, for any $y^k_1 > 0$ and $y^k_2 = 0$, $\mu^k_1 = 0$.
Hence
\begin{align*}
    &\langle \nabla_{\bm{e}_1} \payoff{1}(\bm{e}^{*}_1, \bm{e}^{*}_2)-\nabla_{\bm{e}_1} \payoff{1}(\bm{e}^{**}_1, \bm{e}^{*}_2), \bm{e}^{*}_1 - \bm{e}^{**}_1 \rangle \\
    = &\sum_{k \in B} \left({\mu^{*}}_1^k - {\mu^{**}}_1^k + \sum_{l \in B\setminus \{k\}} \rho^{k,l}_1 \left( {\mu^{*}}_1^l - {\mu^{**}}_1^l \right) \right)\left({e^{*}}_1^k - {e^{**}}_1^k \right)\\
    =&\sum_{l \in B} \left( {\mu^{*}}_1^l - {\mu^{**}}_1^l \right) \left( {e^{*}}_1^l - {e^{**}}_1^l + \sum_{k \in B\setminus\{l\}}\rho^{k,l}_1 \left({e^{*}}_1^k - {e^{**}}_1^k\right) \right) \\
    =& \sum_{l \in B} \left( {\mu^{*}}_1^l - {\mu^{**}}_1^l \right) \left( {y^{*}}_1^l - {y^{**}}_1^l\right) \leq 0,
\end{align*}
with equality only if either ${y^{*}}^k_2 = 0$ or ${y^{**}}^k_1 = {y^{*}}^k_1$, for all $k\in B$.

On the other hand, by interchangeability, $(\bm{e}^{**}_1, \bm{e}^{*}_2)$ is also an equilibrium. Given $\bm{e}^{*}_2$, by the concavity of $u_1$, we have
\begin{align*}
    &\langle -\nabla_{\bm{e}_1} \payoff{1}(\bm{e}^{*}_1, \bm{e}^{*}_2), \bm{e}^{**}_1 - \bm{e}^{*}_1  \rangle \geq 0,\\
    &\langle -\nabla_{\bm{e}_1} \payoff{1}(\bm{e}^{**}_1, \bm{e}^{*}_2), \bm{e}^{*}_1 - \bm{e}^{**}_1  \rangle \geq 0,
\end{align*}
which implies
\begin{equation*}
    \langle \nabla_{\bm{e}_1} \payoff{1}(\bm{e}^{*}_1, \bm{e}^{*}_2)-\nabla_{\bm{e}_1} \payoff{1}(\bm{e}^{**}_1, \bm{e}^{*}_2), \bm{e}^{*}_1 - \bm{e}^{**}_1  \rangle \geq 0.
\end{equation*}

Thus we must have 
\begin{equation*}
    \sum_{l \in B} \left( {\mu^{*}}_1^l - {\mu^{**}}_1^l \right) \left( {y^{*}}_1^l - {y^{**}}_1^l\right) = 0.
\end{equation*}
and so either ${y^{*}}^k_2 = 0$ or ${y^{**}}^k_1 = {y^{*}}^k_1$, for all $k \in B$. 
\end{proof}

Given an equilibrium effective efforts profile $(\bm{y}_1,\bm{y}_2) \in T_1 \times T_2$ we define the following sets of battlefields
\begin{equation}
\label{eq:sets}
\begin{aligned}
B^{+}(\bm{y}_1,\bm{y}_2) & = \{k \in B : y^k_1 > 0 \textrm{ and } y^k_2 > 0 \},\\
A_i(\bm{y}_1,\bm{y}_2) & = \{k \in B : y^k_i > 0 \textrm{ and } y^k_{-i} = 0 \}.
\end{aligned}
\end{equation}
In the next lemma we establish that sets $B^{+}$, $A_1$ and $A_2$ are unique, across all equilibria.

\begin{lemma}
\label{lemma:sets}
For any $(\bm{y}^{*}_1, \bm{y}^{*}_2) \in T_1 \times T_2$ and $(\bm{y}^{**}_1, \bm{y}^{**}_2) \in T_1 \times T_2$, $B^{+}(\bm{y}^{*}_1, \bm{y}^{*}_2) = B^{+}(\bm{y}^{**}_1, \bm{y}^{**}_2)$, $A_1(\bm{y}^{*}_1, \bm{y}^{*}_2) = A_1(\bm{y}^{**}_1, \bm{y}^{**}_2)$, and $A_2(\bm{y}^{*}_1, \bm{y}^{*}_2) = A_2(\bm{y}^{**}_1, \bm{y}^{**}_2)$.
\end{lemma}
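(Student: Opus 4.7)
The plan is to use the interchangeability result (Lemma~\ref{lemma:exch}) together with the two parts of Lemma~\ref{lemma:aux:2}. Fix two equilibrium effective effort profiles $(\bm{y}^*_1,\bm{y}^*_2)$ and $(\bm{y}^{**}_1,\bm{y}^{**}_2)$ in $T_1 \times T_2$, and a battlefield $k \in B$. By symmetry between the two equilibria it suffices to show each of the three memberships transfers from the first to the second: if $k$ lies in $B^+$, $A_1$, or $A_2$ of the first equilibrium, then it lies in the corresponding set of the second.

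For the case $k \in B^+(\bm{y}^*_1,\bm{y}^*_2)$, so that ${y^*}^k_1>0$ and ${y^*}^k_2>0$, I apply Lemma~\ref{lemma:aux:2}\eqref{p:aux:2:2} with $i=1$ and $\bm{y}^{**}_2 \in T_2$: since ${y^*}^k_1 > 0$, the alternative ${y^*}^k_2 = {y^{**}}^k_2$ must hold, forcing ${y^{**}}^k_2 > 0$. Applying the same part with $i=2$ and $\bm{y}^{**}_1 \in T_1$ yields ${y^{**}}^k_1 > 0$, hence $k \in B^+(\bm{y}^{**}_1,\bm{y}^{**}_2)$.

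For the case $k \in A_1(\bm{y}^*_1,\bm{y}^*_2)$, so ${y^*}^k_1>0$ and ${y^*}^k_2=0$, I apply Lemma~\ref{lemma:aux:2}\eqref{p:aux:2:2} with $i=2$ and $\bm{y}^{**}_1 \in T_1$: either ${y^{**}}^k_1 = {y^*}^k_1 > 0$, or ${y^*}^k_2 = {y^{**}}^k_2 = 0$ need not hold (the lemma only constrains one direction, so let me rephrase). Precisely, the lemma gives: either ${y^*}^k_2 = 0$ (which is already our hypothesis) or ${y^{**}}^k_1 = {y^*}^k_1$. Since the first alternative is automatic, this does not yet pin down $\bm{y}^{**}$, so I then apply Lemma~\ref{lemma:aux:2}\eqref{p:aux:2:2} with $i=1$ to the \emph{second} equilibrium $(\bm{y}^{**}_1,\bm{y}^{**}_2)$ and $\bm{y}^*_2 \in T_2$: either ${y^{**}}^k_1=0$ or ${y^{**}}^k_2={y^*}^k_2=0$. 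The first possibility, combined with Lemma~\ref{lemma:aux:2}\eqref{p:aux:2:1} applied to $(\bm{y}^{**}_1,\bm{y}^{**}_2)$, forces ${y^{**}}^k_2>0$; but then applying Lemma~\ref{lemma:aux:2}\eqref{p:aux:2:2} once more (to $(\bm{y}^{**}_1,\bm{y}^{**}_2)$ with $i=2$ and $\bm{y}^*_1 \in T_1$) yields ${y^{**}}^k_1 = {y^*}^k_1 > 0$, a contradiction. Hence the second alternative holds, giving ${y^{**}}^k_2 = 0$, and combined with Lemma~\ref{lemma:aux:2}\eqref{p:aux:2:1} we conclude ${y^{**}}^k_1>0$, so $k \in A_1(\bm{y}^{**}_1,\bm{y}^{**}_2)$.

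The case $k \in A_2(\bm{y}^*_1,\bm{y}^*_2)$ is symmetric by swapping players $1$ and $2$. Swapping the roles of the two equilibria then gives the reverse inclusions, so all three sets coincide. The main obstacle, which is already handled by the interchangeability lemma, is that a priori the two equilibria might describe different ``support patterns''; it is crucial that interchangeability lets us treat $(\bm{y}^*_1,\bm{y}^{**}_2)$ and $(\bm{y}^{**}_1,\bm{y}^*_2)$ as equilibria, which is exactly what makes Lemma~\ref{lemma:aux:2}\eqref{p:aux:2:2} applicable across equilibria.
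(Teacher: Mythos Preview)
Your proposal is correct and uses the same ingredients as the paper's proof, namely parts~\eqref{p:aux:2:1} and~\eqref{p:aux:2:2} of Lemma~\ref{lemma:aux:2}. One efficiency remark: your treatment of the $A_1$ case takes an unnecessary detour through a contradiction; the paper instead applies Lemma~\ref{lemma:aux:2}\eqref{p:aux:2:2} with $i=1$ (rather than $i=2$) directly to $(\bm{y}^*_1,\bm{y}^*_2)$: since ${y^*}^k_1>0$, the disjunction immediately forces ${y^{**}}^k_2={y^*}^k_2=0$, and then part~\eqref{p:aux:2:1} gives ${y^{**}}^k_1>0$, finishing in one step.
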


\begin{proof}
Take any $(\bm{y}^{*}_1, \bm{y}^{*}_2) \in T_1 \times T_2$ and $(\bm{y}^{**}_1, \bm{y}^{**}_2) \in T_1 \times T_2$.

Fix any $i \in \{1,2\}$. By point~\eqref{p:aux:2:2} of Lemma~\ref{lemma:aux:2}, for any $k \in A_i(\bm{y}^{*}_1, \bm{y}^{*}_2)$, ${y^{*}}^k_{-i} = {y^{**}}^k_{-i} = 0$. Hence, by point~\eqref{p:aux:2:1} of Lemma~\ref{lemma:aux:2}, ${y^{**}}^k_{i} > 0$ and, therefore, $k \in A_i(\bm{y}^{**}_1, \bm{y}^{**}_2)$. Thus $A_i(\bm{y}^{*}_1, \bm{y}^{*}_2) \subseteq A_i(\bm{y}^{**}_1, \bm{y}^{**}_2)$. By analogous argument, $A_i(\bm{y}^{**}_1, \bm{y}^{**}_2) \subseteq A_i(\bm{y}^{*}_1, \bm{y}^{*}_2)$ and so $A_i(\bm{y}^{*}_1, \bm{y}^{*}_2) = A_i(\bm{y}^{**}_1, \bm{y}^{**}_2)$.

By point~\eqref{p:aux:2:1} of Lemma~\ref{lemma:aux:2}, for any $(\bm{y}_1,\bm{y}_2) \in T_1\times T_2$ and any $i \in \{1,2\}$, $B^{+}(\bm{y}_1,\bm{y}_2) \cup A_i(\bm{y}_1,\bm{y}_2) = B$. This, together with $A_1(\bm{y}^{*}_1, \bm{y}^{*}_2) = A_1(\bm{y}^{**}_1, \bm{y}^{**}_2)$ yields $B^{+}(\bm{y}^{*}_1, \bm{y}^{*}_2) = B^{+}(\bm{y}^{**}_1, \bm{y}^{**}_2)$ and the proof is complete.
\end{proof}

Since sets $B^{+}$, $A_1$ and $A_2$ are the same across all equilibria, in the rest of the proof we omit their qualification by effective efforts.
Now we are ready to prove uniqueness of equilibrium payoffs, total efforts and probabilities of winning battlefields.

\begin{lemma}
\label{lemma:unique}
Equilibrium effective efforts on battlefields in $B^{+}$, equilibrium total efforts, probabilities of winning battlefields, and payoffs are unique. 
\end{lemma}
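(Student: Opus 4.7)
The plan is to chain together the structural results already established in Lemmas~\ref{lemma:aux:2} and~\ref{lemma:sets} with the aggregation identities derived in the proof of Theorem~\ref{th:char}. First I would prove uniqueness of effective efforts on $B^{+}$. Fix any two equilibrium effective effort profiles $(\bm{y}^{*}_1, \bm{y}^{*}_2)$ and $(\bm{y}^{**}_1, \bm{y}^{**}_2)$ in $T_1\times T_2$. For any $k \in B^{+}$ and any $i \in \{1,2\}$, $y^{*k}_{i} > 0$, so point~\eqref{p:aux:2:2} of Lemma~\ref{lemma:aux:2}, applied to the equilibrium $(\bm{y}^{*}_1, \bm{y}^{*}_2)$ and to the alternative effective effort $\bm{y}^{**}_{-i}$, forces $y^{*k}_{-i} = y^{**k}_{-i}$. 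Ranging $i$ over both players yields $\bm{y}^{*}_i\!\upharpoonright\!B^{+} = \bm{y}^{**}_i\!\upharpoonright\!B^{+}$ for both $i \in \{1,2\}$.

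Next I would derive uniqueness of the winning probabilities at every battlefield. By Lemma~\ref{lemma:sets}, the partition $\{B^{+}, A_1, A_2\}$ of $B$ does not depend on the choice of equilibrium. For $k \in B^{+}$ the probability $p_i^k = (y_i^k)^{\gamma}/((y_1^k)^{\gamma} + (y_2^k)^{\gamma})$ is determined by the effective efforts at $k$, which are unique by the previous step. For $k \in A_i$ we have $y^k_{-i} = 0$ and $y_i^k > 0$, so $p_i^k = 1$ and $p_{-i}^k = 0$. Hence $\bm{p}_1$ and $\bm{p}_2$ are uniquely determined across equilibria.

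For uniqueness of equilibrium total efforts, I would reuse the closed-form derivation from the proof of Theorem~\ref{th:char}. Summing the KKT conditions over the battlefields (and exploiting complementary slackness $\bm{s}_i \odot \bm{e}_i = \bm{0}$) yields, in any Nash equilibrium, the aggregation identity
\begin{equation*}
\sum_{k \in B} e^k_i \;=\; \bm{\mu}_i^{T} \bm{y}_i \;=\; \frac{\gamma}{c_i} \sum_{k \in B} p_1^k p_2^k v^k,
\end{equation*}
whose right-hand side depends only on the (now unique) winning probabilities; this pins down total efforts. Finally, because $\payoff{i}(\bm{e}_1,\bm{e}_2) = \sum_{k \in B} p_i^k v^k - c_i \sum_{k \in B} e^k_i$ and both terms on the right have just been shown to be equilibrium-invariant, equilibrium payoffs are also unique.

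The substantive obstacle is concentrated in the first step: pinning down effective efforts on $B^{+}$ across all equilibria. That step rests entirely on point~\eqref{p:aux:2:2} of Lemma~\ref{lemma:aux:2}, whose proof leverages interchangeability of equilibria (Lemma~\ref{lemma:exch}) together with the strict monotonicity of $\mu_i^k$ in $y_i^k$ to turn a nonpositive inner product into a zero. Once that lemma is in hand, uniqueness of probabilities, total efforts, and payoffs are all direct consequences of Lemma~\ref{lemma:sets} and the formulas of Theorem~\ref{th:char}, and no further case analysis between interior and corner equilibria is required.
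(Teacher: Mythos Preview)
Your proposal is correct and tracks the paper's proof almost verbatim for uniqueness of effective efforts on $B^{+}$, of winning probabilities, and of payoffs. The only divergence is in the step for total efforts: the paper invokes interchangeability (Lemma~\ref{lemma:exch}) once more, observing that if $\bm{e}^{**}_i \in E_i$ then $(\bm{e}^{**}_i,\bm{e}^{*}_{-i})$ is also an equilibrium, hence $\payoff{i}(\bm{e}^{*}_i,\bm{e}^{*}_{-i}) = \payoff{i}(\bm{e}^{**}_i,\bm{e}^{*}_{-i})$, and then combines this with uniqueness of probabilities to back out $\sum_k e^{*k}_i = \sum_k e^{**k}_i$. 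You instead re-derive the aggregation identity $\sum_k e_i^k = (\gamma/c_i)\sum_k p_1^k p_2^k v^k$ from the KKT conditions and complementary slackness. Both routes are valid; the paper's is marginally shorter and keeps Lemma~\ref{lemma:unique} logically prior to Theorem~\ref{th:char}, while yours recovers the closed form early at the cost of anticipating that derivation.
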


\begin{proof}
We first establish uniqueness of equilibrium effective efforts on battlefields in $B^{+}$. This follows directly from uniqueness of $B^{+}$ across all equilibria (Lemma~\ref{lemma:sets}) and point~\eqref{p:aux:2:2} of Lemma~\ref{lemma:aux:2}.

Since equilibrium effective efforts are unique on battlefields in $B^{+}$, equilibrium probabilities of winning these battlefields are the same across all equilibria. In addition, by the definition and uniqueness of set $A_i$, the probability of winning a battlefield in this set is equal to $1$ for player $i$ and equal to $0$ for player $-i$. This shows uniqueness of equilibrium probabilities of winning battlefields.

For uniqueness of equilibrium total efforts take any $i \in\{1,2\}$, any equilibrium $(\bm{e}^{*}_1,\bm{e}^{*}_2) \in E_1\times E_2$ and any equilibrium strategy $\bm{e}^{**}_i \in E_i$ of player $i$. By Lemma~\ref{lemma:exch},  $(\bm{e}^{**}_i,\bm{e}^{*}_{-i})$ is an equilibrium as well and, therefore,
\begin{equation*}
\payoff{i}\!\left(\bm{e}^{*}_i,\bm{e}^{*}_{-i}\right) = \payoff{i}\!\left(\bm{e}^{**}_i,\bm{e}^{*}_{-i}\right)
\end{equation*}
This, together with uniqueness of equilibrium probabilities of winning battlefields implies uniqueness of equilibrium total efforts (as cost depends only on the total effort).

Lastly, since equilibrium probabilities of winning battlefields and equilibrium total efforts are unique, for each player $i$, so equilibrium utilities of $i$ are also unique.
\end{proof}

By Lemma~\ref{lemma:unique}, equilibrium effective efforts are unique in the case when there exists an equilibrium where each battlefield receives positive effective effort from both players. In the following lemma we show that this is the case when $\gamma \in (0,1)$.

\begin{lemma}
\label{lemma:uniqeff}
If $\gamma \in (0,1)$ then $B^{+} = B$ and equilibrium effective efforts are unique.
\end{lemma}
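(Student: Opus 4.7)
The plan is to prove the two claims in sequence: first establish $B^{+} = B$ by a deviation argument that exploits the Inada-type behavior of the Tullock CSF for $\gamma < 1$, then invoke Lemma~\ref{lemma:unique} to conclude uniqueness of effective efforts.

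For the first claim, I would argue by contradiction. Suppose there is an equilibrium $(\bm{e}_1,\bm{e}_2)$ with effective efforts $(\bm{y}_1,\bm{y}_2)$ and some battlefield $k$ with $y^k_i = 0$. By point~\eqref{p:aux:2:1} of Lemma~\ref{lemma:aux:2}, $y^k_{-i} > 0$, and since $y^k_i = e^k_i + \sum_{l\neq k}\rho^{l,k}_i e^l_i = 0$ with all summands nonnegative, in particular $e^k_i = 0$. I would then show that increasing $e^k_i$ to some small $\varepsilon > 0$ (leaving all other choices of $i$ unchanged) strictly improves $i$'s payoff. The new effective effort at $k$ is at least $\varepsilon$, and effective efforts at every other battlefield weakly increase by $\rho^{k,l}_i \varepsilon \geq 0$; thus winning probabilities at battlefields $l \neq k$ weakly increase, and the payoff change is at least
\begin{equation*}
\Delta \payoff{i} \;\geq\; v^k \cdot \frac{\varepsilon^{\gamma}}{\varepsilon^{\gamma} + (y^k_{-i})^{\gamma}} \;-\; c_i \varepsilon.
\end{equation*}
Since $y^k_{-i} > 0$ is fixed and $\gamma \in (0,1)$, the first term is of order $\varepsilon^{\gamma}$ while the cost is of order $\varepsilon$; hence $\Delta \payoff{i} > 0$ for $\varepsilon$ sufficiently small. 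This contradicts the equilibrium property, so no such battlefield exists and $B^{+} = B$.

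The main subtlety is handling the case $y^k_i = 0$ carefully: one cannot simply write a first-order condition because $\partial\payoff{i}/\partial e^k_i$ diverges at $y^k_i = 0$ when $\gamma < 1$. That is exactly the reason the deviation must be phrased as a finite-difference comparison rather than via the FOC inequality in~\eqref{eq:kkt:h1}; the divergence of the marginal benefit is the very reason the inequality cannot hold, but this must be cashed out by exhibiting an explicit profitable deviation of size $\varepsilon$ and using $\varepsilon^{\gamma}/\varepsilon \to \infty$.

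With $B^{+} = B$ established, Lemma~\ref{lemma:unique} applies to every battlefield and delivers uniqueness of equilibrium effective efforts on all of $B$. This completes both parts of the statement.
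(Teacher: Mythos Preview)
Your proposal is correct and follows essentially the same approach as the paper: a contradiction argument using the Inada-type behavior of the Tullock CSF for $\gamma<1$ to exhibit a profitable $\varepsilon$-deviation at any battlefield where one player's effective effort vanishes, followed by an appeal to Lemma~\ref{lemma:unique}. Your write-up is slightly more explicit than the paper's in spelling out why $e^k_i=0$ and why other battlefields' effective efforts weakly increase, but the logic is identical.
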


\begin{proof}
We first show that in equilibrium every battlefield receives positive effective effort from every player $i \in \{1,2\}$.
For assume otherwise and let $(\bm{e}_1,\bm{e}_2)$ be an equilibrium that does not satisfy this property. Let $(\bm{y}_1,\bm{y}_2)$
be the corresponding effective efforts profile. Then there exists player $i \in \{1,2\}$ and battlefield $k \in B$ such that $y^k_i = 0$ and, consequently, $e^k_i = 0$. Since, as we noticed above, every equilibrium is of type $S^1$ so $y^k_{-i} > 0$. Let $D = y^k_{-i}$ and consider a strategy $\bm{e}'_i = (\bm{e}^{-k}_{i},\varepsilon)$ of player $1$ that assigns effort $\varepsilon > 0$ to battlefield $k$ and assigns the same efforts as $\bm{e}_i$ to the remaining battlefields. The difference in payoffs
\begin{equation*}
\payoff{i}(\bm{e}'_i,\bm{e}_{-i}) - \payoff{i}(\bm{e}_i,\bm{e}_{-i}) \geq \frac{\varepsilon^\gamma}{\varepsilon^\gamma + D^\gamma}v^k - \varepsilon c_i
= \varepsilon \left(\frac{\varepsilon^{\gamma-1}}{\varepsilon^\gamma + D^\gamma}v^k - c_i\right).
\end{equation*}
Since $\gamma \in (0,1)$ so
\begin{equation*}
\lim_{\varepsilon \rightarrow 0^{+}} \frac{\varepsilon^{\gamma-1}}{\varepsilon^\gamma + D^\gamma}v^k = +\infty
\end{equation*}
in particular, there exists a sufficiently small $\varepsilon > 0$ such that the difference in payoffs is positive.
Hence, there exists $\varepsilon > 0$ such that the deviation to $\bm{e}'_i$ is profitable for player $i$, a contradiction with the assumption that $(\bm{e}_1,\bm{e}_2)$ is a Nash equilibrium. Thus in equilibrium every battlefield receives positive effective effort from every player $i \in \{1,2\}$. In other words, $B^{+} = B$ and, by Lemma~\ref{lemma:unique}, equilibrium effective efforts are unique.
\end{proof}

\newpage

\section{On-line appendix}
\subsection{Two-node Example}
This section contains the calculations on equilibrium of Example~\ref{exampletwonodes} in Section~\ref{sec:eqprop}.

Given the efforts $(\bm{e}_1,\bm{e}_2)$, the effective efforts are $(\bm{y}_1,\bm{y}_2)$, where
\begin{align*}
y_1^1 & = e_1^1 \textrm{, } & y_1^2 & = e_1^2,\\
y_2^1 & = e_2^1 \textrm{, } & y_2^2 & = \lambda e_2^1 + e_2^2.
\end{align*}

Payoffs to the players from strategy profile $(\bm{e}_1, \bm{e}_2)$ are
\begin{align*}
\payoff{1}(\bm{e}_1,\bm{e}_2) & = \frac{e_1^1}{e_1^1 + e_2^1} + \frac{e_1^2}{e_1^2 + \lambda e_2^1 + e_2^2} - (e_1^1 + e_1^2)c_1,\\
\payoff{2}(\bm{e}_1,\bm{e}_2) & = \frac{e_2^1}{e_1^1 + e_2^1} + \frac{\lambda e_2^1 + e_2^2}{e_1^2 + \lambda e_2^1 + e_2^2} - (e_2^1 + e_2^2)c_2,
\end{align*}
and partial derivatives of the payoffs are
\begin{align*}
\frac{\partial \payoff{1}(\bm{e}_1,\bm{e}_2)}{\partial e_1^1} & = \frac{e_2^1}{(e_1^1 + e_2^1)^2} - c_1\\
\frac{\partial \payoff{1}(\bm{e}_1,\bm{e}_2)}{\partial e_1^2} & = \frac{\lambda e_2^1 + e_2^2}{\left(e_1^2 + \lambda e_2^1 + e_2^2\right)^2} - c_1\\
\frac{\partial \payoff{2}(\bm{e}_1,\bm{e}_2)}{\partial e_2^1} & = \frac{e_1^1}{(e_1^1 + e_2^1)^2} + \frac{\lambda e_1^2}{(e_1^2 + \lambda e_2^1 + e_2^2)^2} - c_2\\
\frac{\partial \payoff{2}(\bm{e}_1,\bm{e}_2)}{\partial e_2^2} & = \frac{e_1^2}{\left(e_1^2 + \lambda e_2^1 + e_2^2\right)^2} - c_2.
\end{align*}
Taking
\begin{equation}
\label{eq:example1:1}
\begin{aligned}
\mu_1^1 = \frac{e_2^1}{c_1\left(e_1^1 + e_2^1\right)^2},\qquad & \mu_1^2 = \frac{\lambda e_2^1 + e_2^2}{c_1\left(e_1^2 + \lambda e_2^1 + e_2^2\right)^2},\\
\mu_2^1 = \frac{e_1^1}{c_2(e_1^1 + e_2^1)^2},\qquad & \mu_2^2 = \frac{e_1^2}{c_2\left(e_1^2 + \lambda e_2^1 + e_2^2\right)^2},
\end{aligned}
\end{equation}
partial derivatives of the payoffs can be written as
\begin{align*}
\frac{\partial \payoff{1}(\bm{e}_1,\bm{e}_2)}{\partial e_1^1} & = c_1\left(\mu_1^1- 1\right), & \qquad \frac{\partial \payoff{1}(\bm{e}_1,\bm{e}_2)}{\partial e_1^2} & = c_1\left(\mu_1^2- 1\right),\\
\frac{\partial \payoff{2}(\bm{e}_1,\bm{e}_2)}{\partial e_2^1} & = c_2\left(\mu_2^1 + \lambda\mu_2^2 - 1\right), \qquad & \frac{\partial \payoff{2}(\bm{e}_1,\bm{e}_2)}{\partial e_2^2} & = c_2\left(\mu_2^2- 1\right).
\end{align*}
Quantities $\mu_i^k$ are marginal rates of substitution between the expected reward from winning the prize and the cost of effort at the individual battlefields: \begin{equation*}
\mu_i^k = \left. \frac{\partial(p_i^k(\bm{e}_1,\bm{e}_2) v^k)}{\partial e_i^k} \right/ \frac{\partial(c_i e_i^k)}{\partial e_i^k}.
\end{equation*}

If $(\bm{e}_1,\bm{e}_2)$ is a Nash equilibrium then, at every battlefield, the effective effort of at least one player has to be positive. For otherwise each player would strictly benefit from rising his effort at this battlefield by a sufficiently small, positive, amount. Hence in equilibrium $(\bm{e}_1,\bm{e}_2) \neq (\bm{0},\bm{0})$.

If $(\bm{e}_1,\bm{e}_2)$ is a Nash equilibrium then partial derivatives at $(\bm{e}_1,\bm{e}_2)$ have to be equal to $0$ for each player $i$ and at each battlefield $k$ where the player exerts positive real effort, $e_i^k > 0$, and they have to be non-positive at each battlefield where $e_i^k = 0$.

Suppose that for all $i \in \{1,2\}$ and $k \in \{0,1\}$, $e_i^k > 0$ so that $(\bm{e}_1,\bm{e}_2)$ is an interior equilibrium.
In terms of the quantities $\mu_i^k$, the first order condition can be written as
\begin{align*}
\mu_1^1 & = 1,& \qquad \mu_1^2 = 1,\\
\mu_2^1 + \lambda\mu_2^2 & = 1, & \qquad \mu_2^2 = 1,
\end{align*}
Which has a unique solution
\begin{align*}
\mu_1^1 & = 1, & \qquad \mu_1^2 = 1,\\
\mu_2^1 & = 1 - \lambda, & \qquad \mu_2^2 = 1.
\end{align*}
From this, the equilibrium efforts are
\begin{align*}
e_1^1 & = \frac{(1-\lambda)c_{2}}{(c_1 + (1 - \lambda)c_2)^2} \textrm{, } & e_1^2 & = \frac{c_{2}}{(c_1 + c_2)^2},\\
e_2^1 & = \frac{c_{1}}{(c_1 + (1 - \lambda)c_2)^2} \textrm{, } & e_2^2 & = \frac{c_{1}}{(c_1 + c_2)^2} - \frac{\lambda c_{1}}{(c_1 + (1 - \lambda)c_2)^2}.
\end{align*}
These efforts are positive if and only if $\lambda \in [0,L_1)$, where
\begin{equation*}
L_1 = \frac{\left(c_1 + c_2\right)\left(c_1 + 3c_2 - \sqrt{(c_1 + c_2)(c_1 + 5c_2)}\right)}{2c_2^2} < 1.
\end{equation*}
Notice that when $\lambda = 0$ we obtain a well-known unique equilibrium for a two battlefield model without spillovers.
When the spillover $\lambda = L_1$, $e_2^2 = 0$ and the remaining efforts are positive. 

Notice that in equilibrium it cannot be that $e_1^2 = 0$ and $e^2_2 > 0$, because in this case player $2$ would strictly benefit from deviating to $e_2^2/2$.

Suppose that $e_2^2 = 0$ and $e_1^2 > 0$. Since $e_1^2 > 0$ so in equilibrium it must be that $y_2^2 > 0$. For otherwise player $1$ would strictly benefit from deviating to $e_1^2/2$. From the first-order condition and the fact that $e_2^2 = 0$, 
\begin{align*}
\mu_1^1 & = 1, & \qquad \mu_1^2 = 1, & \\
\mu_2^1 + \lambda\mu_2^2 & = 1.
\end{align*}
Solving this, together with~\eqref{eq:example1:1}, we obtain a unique solution in non-negative real efforts:
\begin{align*}
e_1^1 & = \frac{(c_2 + \sqrt{\lambda}(\sqrt{\lambda}-1) c_{1})(1+\sqrt{\lambda})}{(c_1(1+\lambda) + c_2)^2} \textrm{, } & e_1^2 & = \frac{(c_2 - (\sqrt{\lambda}-1) c_{1})\sqrt{\lambda}(1+\sqrt{\lambda})}{(c_1(1+\lambda) + c_2)^2},\\
e_2^1 & = \frac{(1+\sqrt{\lambda})^2 c_{1}}{(c_1(1+\lambda) + c_2)^2} \textrm{, } & e_2^2 & = 0,
\end{align*}
with
\begin{equation*}
\mu_2^1 = \frac{c_2 + \sqrt{\lambda}(\sqrt{\lambda}-1)c_1}{(1+\sqrt{\lambda})c_2}, \qquad \mu_2^2 = \frac{c_2-(\sqrt{\lambda}-1)c_1}{\sqrt{\lambda}(1+\sqrt{\lambda})c_2}.
\end{equation*}
If $\lambda \geq L_1$, $\mu_2^2 \leq 1$ and so the first-order condition is satisfied. In addition, the efforts are non-negative if $\lambda \in [L_1,L_2)$, where
\begin{equation*}
L_2 = \left(\frac{c_1+c_2}{c_1}\right)^2.
\end{equation*}
When $\lambda = L_2$, $e_1^1 = 0$ while $e_1^1$ and $e_2^1$ remain positive.

Suppose that $e_1^2 = 0$ and $e_2^2 = 0$. Since $y_1^2 = e_1^2 = 0$ so $y_2^2 > 0$ and so $e_2^1 > 0$. Moreover, it must be that $e_1^1 > 0$, for otherwise player $2$ would strictly benefit from choosing $e_2^1/2$ instead of $e_2^1$. From the first-order condition and the fact that $e_1^2 = 0$, 
\begin{align*}
\mu_1^1 & = 1,\\
\mu_2^1 + \lambda\mu_2^2 & = 1, & \qquad \mu_2^2 = 0,
\end{align*}
from which we get
\begin{align*}
e_1^1 & = \frac{c_{2}}{(c_1 + c_2)^2} \textrm{, } & e_1^2 & = 0,\\
e_2^1 & = \frac{c_1}{(c_1 + c_2)^2} \textrm{, } & e_2^2 & = 0,
\end{align*}
with
\begin{equation*}
\mu_1^2 = \frac{1}{\lambda} \left(\frac{c_1+c_2}{c_1}\right)^2 = \frac{1}{\lambda}L_2.
\end{equation*}
If $\lambda \geq L_2$, $\mu_1^2 \leq 1$ and the first-order condition is satisfied. In addition, all efforts are non-negative in this case.

\subsection{Spillovers design with unequal prizes}

We consider the problem of total effort maximization by spillovers design with general values of the battlefields. The complete networks of spillovers, used in the case of all battlefields values being equal, do not guarantee interior equilibrium in the general case. 
We will show that there exist other networks of spillovers that allow the social planner to attain the maximal equilibrium total effort.

Consider networks of spillovers $\bm{\rho}^{*}_1$ and $\bm{\rho}^{*}_2$ defined as follows.
Assume that the battlefields in $B$ are numbered in the order increasing with respect to their values, so that $v^1 \leq \ldots \leq v^m$.
Let $r = \lfloor m/2 \rfloor$ so that $m = 2r + m \bmod 2$. Partition the nodes into $r$ groups:
$r-1$ groups of size $2$ and one group of size $3$ (when $m$ is odd) or of size $2$ (when $m$ is even):
$(1,2)$, $(3,4)$, $\ldots$, $(m-1,m)$ or $(m-2,m-1,m)$. 

Fix $i \in \{1,2\}$. Let the magnitudes of spillovers for network $\bm{\rho}^{*}_i$ be defined as follows. For $l \in \{1,\ldots,r\}$ such that the $l$'th group consist of two battlefields, let the spillovers between battlefields $2l-1$ and $2l$, ${\rho^{*}}_i^{2l-1,2l} = {\rho^{*}}_i^{2l,2l-1} = \lambda^{l}_{i}$. Set the spillovers between the remaining pairs of battlefields to $0$. A network $\bm{\rho}^{*}_i$ over $m = 8$ nodes is illustrated in Figure~\ref{fig:spilldeseven}.

\begin{figure}[htp]
\begin{center}
  \includegraphics{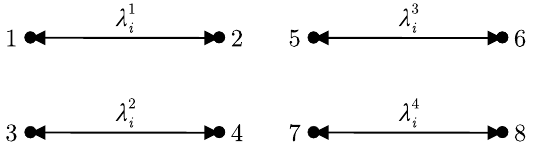}
  \caption{Network $\bm{\rho}_i^{*}$ over $m = 8$ nodes.}
  \label{fig:spilldeseven}
\end{center}
\end{figure}

In the case of $l = r$ and an odd $m$, the $l$'th group consist of three nodes, $\{m-2,m-1,m\}$. Let the spillovers from battlefield $m$ to battlefields $m-1$ and $m-2$, ${\rho^{*}}_i^{m,m-1} = {\rho^{*}}_i^{m,m-2} = \lambda^{r}_{i}/2$. 
In the case of 
\begin{equation*}
\frac{c_2}{c_1} < \frac{v^{m} + v^{m-1} + v^{m-2}}{v^{m} + v^{m-1} - v^{m-2}}
\end{equation*}
let the spillovers from battlefield $m-1$ to battlefields $m$ and $m-2$ as well as spillovers from battlefield $m-2$ to battlefields $m$ and $m-1$ be ${\rho^{*}}_i^{m-1,m} = {\rho^{*}}_i^{m-1,m-2} = {\rho^{*}}_i^{m-2,m} = {\rho^{*}}_i^{m-2,m-1} = \lambda^{r}_{i}/2$ and set all the remaining spillovers to $0$. 
In the case of 
\begin{equation*}
\frac{c_2}{c_1} \geq \frac{v^{m} + v^{m-1} + v^{m-2}}{v^{m} + v^{m-1} - v^{m-2}}
\end{equation*}
let the spillovers between battlefields $m-1$ and $m-2$ be ${\rho^{*}}_i^{m-1,m-2} = {\rho^{*}}_i^{m-2,m-1} = \lambda^{r}_{i}$ and set all the remaining spillovers to $0$.

The networks over three nodes for the two cases are presented in Figure~\ref{fig:spilldesodd}.
\begin{figure}%
    \centering
    \subfloat[\centering The case of $\frac{c_2}{c_1} < \frac{v^{m} + v^{m-1} + v^{m-2}}{v^{m} + v^{m-1} - v^{m-2}}$]{{\includegraphics{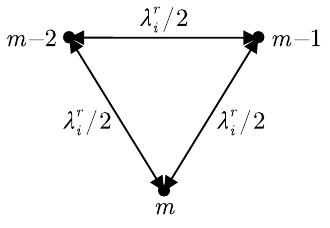} }}%
    \qquad
    \subfloat[\centering The case of $\frac{c_2}{c_1} \geq \frac{v^{m} + v^{m-1} + v^{m-2}}{v^{m} + v^{m-1} - v^{m-2}}$]{{\includegraphics{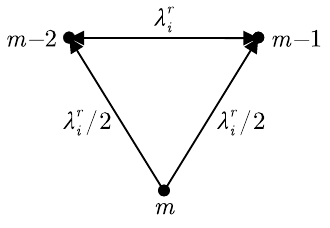} }}%
    \caption{Three node components}%
    \label{fig:spilldesodd}
\end{figure}

In the case of $i = 1$, define the magnitudes of spillovers as follows. Take any $l \in \{1,\ldots,r\}$. If the $l$'th group of nodes is of size $2$ then,
if 
\begin{equation*}
\frac{c_2}{c_1} < \frac{v^{2l-1} + v^{2l}}{v^{2l}},
\end{equation*}
take any
\begin{equation*}
\lambda^{l}_{1} \in \left[0,\frac{v^{2l-1} + v^{2l}}{v^{2l}}\frac{c_1}{c_2} - 1\right)
\end{equation*}
and if 
\begin{equation*}
\frac{c_2}{c_1} \geq \frac{v^{2l-1} + v^{2l}}{v^{2l}},
\end{equation*}
take any
\begin{equation*}
\lambda^{l}_1 > \frac{v^{2l}}{v^{2l-1}}.
\end{equation*}
If $l = r$ and group of nodes is of size $3$ then, if
\begin{equation*}
\frac{c_2}{c_1} < \frac{v^{m} + v^{m-1} + v^{m-2}}{v^{m} + v^{m-1} - v^{m-2}}
\end{equation*}
take any 
\begin{equation*}
\lambda^{l}_{1} \in \left[0,\frac{v^{m} + v^{m-1} + v^{m-2}}{v^{m} + v^{m-1} - v^{m-2}}\frac{c_1}{c_2} - 1\right)
\end{equation*}
and if
\begin{equation*}
\frac{c_2}{c_1} \geq \frac{v^{m} + v^{m-1} + v^{m-2}}{v^{m} + v^{m-1} - v^{m-2}}
\end{equation*}
take any 
\begin{equation*}
\lambda^{l}_{1} > \frac{v^{m} + v^{m-1} + v^{m-2}}{v^{m} + v^{m-1} - v^{m-2}}\frac{c_1}{c_2} - 1.
\end{equation*}

For $i = 2$, let
\begin{equation*}
\lambda^{l}_2 = \frac{c_2}{c_1} \lambda^{l}_1 + \frac{c_2 - c_1}{c_1},
\end{equation*}
for all $l \in \{1,\ldots,r\}$.
Since $c_2 \geq c_1$ and $\lambda^{l}_1 > 0$ so $\lambda^{l}_2 > 0$.

The game with the networks of spillovers $(\bm{\rho}^{*}_1,\bm{\rho}^{*}_2)$ has a unique interior equilibrium that obtains the maximum possible equilibrium total effort. 

\begin{theorem}
\label{th:maxeffort}
For any Tullock CSF $p$ with $\gamma \in (0,1]$, any costs of efforts $c_2 \geq c_1 > 0$, and any battlefield values $\bm{v}$,
and the networks of spillovers $(\bm{\rho}^{*}_1,\bm{\rho}^{*}_2)$, attain the maximal possible equilibrium total effort for each player $i \in \{1,2\}$, which is equal to
\begin{align*}
\sum_{k \in B} e^k_i = \frac{\gamma}{4c_i} \sum_{k\in B} v^k,
\end{align*}
in a unique equilibrium.
\end{theorem}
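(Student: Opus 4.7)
The plan is to apply Theorem~\ref{th:char} directly to the constructed pair $(\bm{\rho}_1^{*}, \bm{\rho}_2^{*})$ and exploit the block structure of the two spillover matrices. As a first step, equation~\eqref{eq:char:total} combined with the elementary bound $p_1^k p_2^k \leq 1/4$ (since $p_1^k+p_2^k=1$ and both are nonnegative) yields the upper bound $\sum_{k\in B} e_i^k \leq \frac{\gamma}{4c_i}\sum_{k\in B} v^k$, with equality if and only if $p_1^k = p_2^k = 1/2$ for every $k$. By~\eqref{eq:char:pr}, this equal-probability condition is equivalent to $c_1\mu_1^k = c_2\mu_2^k$ for every $k \in B$.

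Next, observe that by construction both $\mathbf{I}+\bm{\rho}_1^{*}$ and $\mathbf{I}+\bm{\rho}_2^{*}$ are block-diagonal along the groups of pairs (and, when $m$ is odd, one final triple). Hence $(\mathbf{I}+\bm{\rho}_i^{*})\bm{\mu}_i = \bm{1}$ can be solved block by block. For a pair-block $\bigl(\begin{smallmatrix}1 & \lambda_i^l \\ \lambda_i^l & 1\end{smallmatrix}\bigr)$, the unique solution is $\mu_i^{2l-1} = \mu_i^{2l} = 1/(1+\lambda_i^l)$. For either of the two triple configurations, substituting the ansatz $(x,x,y)$ into the three linear equations yields $x = y = 1/(1+\lambda_i^r)$. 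The defining relation $\lambda_2^l = (c_2/c_1)\lambda_1^l + (c_2-c_1)/c_1$ rearranges to $c_1(1+\lambda_2^l) = c_2(1+\lambda_1^l)$, i.e.\ $c_1\mu_1^k = c_2\mu_2^k$ on every battlefield, so $p_1^k = p_2^k = 1/2$ holds throughout and the upper bound is matched.

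The third and most delicate step is to verify that the effort profile given block-wise by formula~\eqref{eq:char:e} (with $P_i = B$) is coordinate-wise nonnegative, so that what we constructed is a valid interior equilibrium. For a pair, inverting the $2\times 2$ block and inserting $p_1^k p_2^k v^k/\mu_i^k = v^k(1+\lambda_i^l)/4$ yields
\begin{equation*}
e_i^{2l-1} = \frac{\gamma\,(v^{2l-1} - \lambda_i^l v^{2l})}{4c_i(1-\lambda_i^l)}, \qquad e_i^{2l} = \frac{\gamma\,(v^{2l} - \lambda_i^l v^{2l-1})}{4c_i(1-\lambda_i^l)}.
\end{equation*}
In the small-ratio regime $c_2/c_1 < (v^{2l-1}+v^{2l})/v^{2l}$, the prescribed range forces $\lambda_i^l < v^{2l-1}/v^{2l} \leq 1$ for both players (the translation relation between $\lambda_1^l$ and $\lambda_2^l$ is exactly tight at the endpoint of the interval), making numerators and denominators simultaneously positive. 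In the large-ratio regime, the prescribed range forces $\lambda_i^l > v^{2l}/v^{2l-1} \geq 1$, flipping all three signs together. A parallel block inversion and case split, using the threshold $(v^m+v^{m-1}+v^{m-2})/(v^m+v^{m-1}-v^{m-2})$, handles the triple in either configuration.

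Finally, once $(\bm{e}_1^{*}, \bm{e}_2^{*})$ is shown to satisfy the interior version of~\eqref{eq:char:e}--\eqref{eq:char:pr} with $\bm{\mu}_i \geq 0$, Theorem~\ref{th:char} certifies it as a Nash equilibrium and~\eqref{eq:char:total} evaluates to the claimed $\gamma\bigl(\sum_k v^k\bigr)/(4c_i)$. Uniqueness follows from Lemma~\ref{lemma:unique} together with non-singularity of each block of $\mathbf{I}+\bm{\rho}_i^{*}$: the pair-blocks have determinant $1-(\lambda_i^l)^2 \neq 0$ inside the prescribed intervals, and the triple-blocks are non-singular by direct determinant computation in both configurations. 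The chief obstacle I expect is the triple case: in configuration B, the $3\times 3$ block is not symmetric between rows $m-2$, $m-1$ and $m$, so the sign accounting for $e_i^{m-2}, e_i^{m-1}, e_i^{m}$ is more delicate and relies on the threshold condition on $c_2/c_1$ in exactly the right way.
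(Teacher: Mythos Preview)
Your proposal is correct and follows essentially the same route as the paper: bound total effort via $p_1^k p_2^k \le 1/4$, solve $(\mathbf{I}+\bm{\rho}_i^{*})\bm{\mu}_i=\bm{1}$ block by block to get $\mu_i^k = 1/(1+\lambda_i^l)$, use the relation $c_1(1+\lambda_2^l)=c_2(1+\lambda_1^l)$ to force $p_1^k=p_2^k=1/2$, verify coordinate-wise positivity of the efforts from~\eqref{eq:char:e}, and conclude uniqueness from Lemma~\ref{lemma:unique} plus non-singularity. The only place you are lighter than the paper is the triple block, which the paper handles by splitting off two dedicated lemmas (one per configuration) that carry out exactly the block inversion and sign-check you sketch.
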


\begin{proof}
Assume that $c_2 \geq c_1 > 0$ and let the battlefields in $B$ be numbered in the order increasing with respect to their values, so that $v^1 \leq \ldots \leq v^m$.

As we already argued in the proof of Proposition~\ref{pr:maxeffort}, the upper bound on the equilibrium total effort of each player is attained when, for all $k \in B$, $p^k_1 = p^k_2 = 1/2$ and the upper bound on equilibrium total effort is 
\begin{equation*}
\sum_{k \in B} e_i^k = \frac{\gamma}{4c_i}\sum_{k \in B} v^k
\end{equation*}
We show that this bound is attained by the networks $\bm{\rho^{*}}_1$ and $\bm{\rho^{*}}_2$.

The adjacency matrix of $\bm{\rho^{*}}_i$ given by
\begin{equation*}
\bm{\rho^{*}}_i = \begin{bmatrix}
\bm{\sigma}^{1}_{i} \\
& \bm{\sigma}^{2}_{i} \\
& & \ddots \\
& & & \bm{\sigma}^{r}_{i}
\end{bmatrix}
\end{equation*}
where
\begin{equation*}
\bm{\sigma}^l_i = \begin{bmatrix}
0 & \lambda^{l}_{i} \\
\lambda^{l}_{i} & 0
\end{bmatrix}
\end{equation*}
if component $l$ consist of two nodes and, in the case of component $l$ consisting of three nodes,
\begin{equation*}
\bm{\sigma}^l_i = \begin{cases}
\begin{bmatrix}
0 & \frac{\lambda^{l}_{i}}{2} & \frac{\lambda^{l}_{i}}{2} \\
\frac{\lambda^{l}_{i}}{2} & 0 & \frac{\lambda^{l}_{i}}{2} \\
\frac{\lambda^{l}_{i}}{2} & \frac{\lambda^{l}_{i}}{2} & 0 \\
\end{bmatrix}, & \textrm{if $\frac{c_2}{c_1} < \frac{v^{m} + v^{m-1} + v^{m-2}}{v^{m} + v^{m-1} - v^{m-2}}$,}\\
\vspace{1mm}\\
\begin{bmatrix}
0 & \lambda^{l}_{i} & 0 \\
\lambda^{l}_{i} & 0 & 0 \\
\frac{\lambda^{l}_{i}}{2} & \frac{\lambda^{l}_{i}}{2} & 0 \\
\end{bmatrix}, & \textrm{if $\frac{c_2}{c_1} \geq \frac{v^{m} + v^{m-1} + v^{m-2}}{v^{m} + v^{m-1} - v^{m-2}}$.}
\end{cases}
\end{equation*}

Notice that
\begin{equation*}
\bm{\mu}_1 = \begin{bmatrix}
\frac{1}{1+\lambda^{1}_{1}} \\
\frac{1}{1+\lambda^{1}_{1}} \\
\frac{1}{1+\lambda^{2}_{1}} \\
\frac{1}{1+\lambda^{2}_{1}} \\
\vdots\\
\frac{1}{1+\lambda^{r}_{1}} \\
\frac{1}{1+\lambda^{r}_{1}} \\
\end{bmatrix} \quad \textrm{and} \quad \bm{\mu}_2 = \frac{c_1}{c_2} \bm{\mu}_1
\end{equation*}
solve equations $(\mathbf{I} + \bm{\rho^{*}}_i) \bm{\mu}_i = \bm{1}$, for $i \in \{1,2\}$, respectively.
In addition, using~\eqref{eq:char:pr} in Theorem~\ref{th:char}, for all $k \in B$,
\begin{equation*}
p_1^k = \frac{(\mu_{2}^k c_{2})^\gamma}{\left(\mu_{1}^k c_{1}\right)^{\gamma}+ \left(\mu_{2}^k c_{2}\right)^{\gamma}} = \frac{1}{2}.
\end{equation*}

To complete the proof for the upper bound on total efforts, we need to show that the game has an equilibrium with the values of $\bm{\mu}_1$ and $\bm{\mu}_2$ given above. Like in the proof of Proposition~\ref{pr:maxeffort}, this is done by applying~\eqref{eq:char:e} in Theorem~\ref{th:char}
and showing that the quantities for the efforts for both players, across all battlefields, are positive. 

Notice first that, for $i \in \{1,2\}$,
\begin{equation*}
\left(\mathbf{I}+\bm{\rho^{*}}_i\right)^{-1} = \begin{bmatrix}
\left(\mathbf{I} + \bm{\sigma}^{1}_{i}\right)^{-1} \\
& \left(\mathbf{I} + \bm{\sigma}^{2}_{i}\right)^{-1} \\
& & \ddots \\
& & & \left(\mathbf{I} + \bm{\sigma}^{r}_{i}\right)^{-1}
\end{bmatrix}
\end{equation*}
where
\begin{equation*}
\left(\mathbf{I}+\bm{\sigma}^l_i\right)^{-1} = \frac{1}{1+\lambda^{l}_{i}} \begin{bmatrix}
\frac{1}{1-\lambda^{l}_{i}} & \frac{\lambda^{l}_{i}}{\lambda^{l}_{i}-1} \\
\frac{\lambda^{l}_{i}}{\lambda^{l}_{i}-1} & \frac{1}{1-\lambda^{l}_{i}},
\end{bmatrix} 
= \mu^{l}_{i} \begin{bmatrix}
\frac{1}{1-\lambda^{l}_{i}} & \frac{\lambda^{l}_{i}}{\lambda^{l}_{i}-1} \\
\frac{\lambda^{l}_{i}}{\lambda^{l}_{i}-1} & \frac{1}{1-\lambda^{l}_{i}},
\end{bmatrix} 
\end{equation*}
for $l \in \{1,\ldots,r\}$ and the $l$'th component of size $2$, and if $l = r$ and the $l$'th component is of size $3$,
\begin{equation*}
\left(\mathbf{I}+\bm{\sigma}^l_i\right)^{-1} = \frac{1}{1+\lambda^{l}_{i}} \begin{bmatrix}
\frac{\lambda^{l}_i+2}{2-\lambda^{l}_{i}} & \frac{\lambda^{l}_i}{\lambda^{l}_{i}-2} & \frac{\lambda^{l}_i}{\lambda^{l}_{i}-2} \\
\frac{\lambda^{l}_i}{\lambda^{l}_{i}-2} & \frac{\lambda^{l}_i+2}{2-\lambda^{l}_{i}} & \frac{\lambda^{l}_i}{\lambda^{l}_{i}-2} \\
\frac{\lambda^{l}_i}{\lambda^{l}_{i}-2} & \frac{\lambda^{l}_i}{\lambda^{l}_{i}-2} & \frac{\lambda^{l}_i+2}{2-\lambda^{l}_{i}},
\end{bmatrix}
= \mu^{l}_{i} \begin{bmatrix}
\frac{\lambda^{l}_i+2}{2-\lambda^{l}_{i}} & \frac{\lambda^{l}_i}{\lambda^{l}_{i}-2} & \frac{\lambda^{l}_i}{\lambda^{l}_{i}-2} \\
\frac{\lambda^{l}_i}{\lambda^{l}_{i}-2} & \frac{\lambda^{l}_i+2}{2-\lambda^{l}_{i}} & \frac{\lambda^{l}_i}{\lambda^{l}_{i}-2} \\
\frac{\lambda^{l}_i}{\lambda^{l}_{i}-2} & \frac{\lambda^{l}_i}{\lambda^{l}_{i}-2} & \frac{\lambda^{l}_i+2}{2-\lambda^{l}_{i}},
\end{bmatrix}
\end{equation*}
if 
\begin{equation*}
\frac{c_2}{c_1} < \frac{v^{m} + v^{m-1} + v^{m-2}}{v^{m} + v^{m-1} - v^{m-2}},
\end{equation*}
and
\begin{equation*}
\left(\mathbf{I}+\bm{\sigma}^l_i\right)^{-1} = \frac{1}{1+\lambda^{l}_{i}} \begin{bmatrix}
\frac{1}{1-\lambda^{l}_{i}} & \frac{\lambda^{l}_i}{\lambda^{l}_{i}-1} & 0 \\
\frac{\lambda^{l}_i}{\lambda^{l}_{i}-1} & \frac{1}{1-\lambda^{l}_{i}} & 0 \\
-\frac{\lambda^{l}_i}{2} & -\frac{\lambda^{l}_i}{2} & 1+\lambda^{l}_{i},
\end{bmatrix}
= \mu^{l}_{i} \begin{bmatrix}
\frac{1}{1-\lambda^{l}_{i}} & \frac{\lambda^{l}_i}{\lambda^{l}_{i}-1} & 0 \\
\frac{\lambda^{l}_i}{\lambda^{l}_{i}-1} & \frac{1}{1-\lambda^{l}_{i}} & 0 \\
-\frac{\lambda^{l}_i}{2} & -\frac{\lambda^{l}_i}{2} & 1+\lambda^{l}_{i},
\end{bmatrix}
\end{equation*}
if 
\begin{equation*}
\frac{c_2}{c_1} \geq \frac{v^{m} + v^{m-1} + v^{m-2}}{v^{m} + v^{m-1} - v^{m-2}}.
\end{equation*}

Using~\eqref{eq:char:e} in Theorem~\ref{th:char},
\begin{align*}
\bm{e}_i & = \frac{\gamma}{c_i} \left(\mathbf{I} + {\bm{\rho^{*}}_i}^T\right)^{-1} \left(\bm{p}_1 \odot\bm{p}_2 \odot \bm{v} \oslash \bm{\mu}_i\right) \\
         & = \frac{\gamma}{4c_i} \left(\mathbf{I} + {\bm{\rho^{*}}_i}^T\right)^{-1}\left(\bm{v} \oslash \bm{\mu}_i\right)
\end{align*}
so that, in the case of $l \in \{1,\ldots,r\}$ such that the $l$'th group is of size $2$,
\begin{align*}
e^{2l-1}_1 & = \frac{\gamma}{c_1} \frac{\lambda^{l}_1 v^{2l} - v^{2l-1}}{4(\lambda^{l}_1 - 1)}\\
e^{2l}_1 & = \frac{\gamma}{c_1} \frac{\lambda^{l}_1 v^{2l-1} - v^{2l}}{4(\lambda^{l}_1 - 1)}\\
e^{2l-1}_2 & = \frac{\gamma}{c_2} \frac{c_1(v^{2l-1} + v^{2l}) - c_2 v^{2l}(\lambda^{l}_1 + 1)}{4(2c_1 - c_2(\lambda^{l}_1 + 1))}\\
e^{2l}_2 & = \frac{\gamma}{c_2} \frac{c_1(v^{2l-1} + v^{2l}) - c_2 v^{2l-1}(\lambda^{l}_1 + 1)}{4(2c_1 - c_2(\lambda^{l}_1 + 1))},
\end{align*}
and, in the case of $l = r$ and group $l$ of size $3$, 
if
\begin{equation*}
\frac{c_2}{c_1} < \frac{v^{m} + v^{m-1} + v^{m-2}}{v^{m} + v^{m-1} - v^{m-2}}
\end{equation*}
then
\begin{align*}
e^{m-2}_1 & = \frac{\gamma}{c_1} \frac{2v^1 - \lambda_1(v^3 + v^2 - v^1)}{4(2 - \lambda_1)}\\
e^{m-1}_1 & = \frac{\gamma}{c_1} \frac{2v^2 - \lambda_1(v^3 + v^1 - v^2)}{4(2 - \lambda_1)}\\
e^{m}_1 & = \frac{\gamma}{c_1} \frac{2v^3 - \lambda_1(v^2 + v^1 - v^3)}{4(2 - \lambda_1)}\\
e^{m-2}_2 & = \frac{\gamma}{c_2} \frac{c_1(v^{m} + v^{m-1} + v^{m-2}) - c_2 (v^{m} + v^{m-1} - v^{m-2})(\lambda^{r}_1 + 1)}{4(3c_1 - c_2(\lambda^{r}_1 + 1))}\\
e^{m-1}_2 & = \frac{\gamma}{c_2} \frac{c_1(v^{m} + v^{m-1} + v^{m-2}) - c_2 (v^{m} + v^{m-2} - v^{m-1})(\lambda^{r}_1 + 1)}{4(3c_1 - c_2(\lambda^{r}_1 + 1))}\\
e^{m}_2 & = \frac{\gamma}{c_2} \frac{c_1(v^{m} + v^{m-1} + v^{m-2}) - c_2 (v^{m-1} + v^{m-2} - v^{m})(\lambda^{r}_1 + 1)}{4(3c_1 - c_2(\lambda^{r}_1 + 1))},
\end{align*}
and if
\begin{equation*}
\frac{c_2}{c_1} \geq \frac{v^{m} + v^{m-1} + v^{m-2}}{v^{m} + v^{m-1} - v^{m-2}}
\end{equation*}
then
\begin{align*}
e^{m-2}_1 & = \frac{\gamma}{c_1} \frac{\lambda^{r}_1 v^{m-1} - v^{m-2}}{4(\lambda^{r}_1 - 1)}\\
e^{m-1}_1 & = \frac{\gamma}{c_1} \frac{\lambda^{r}_1 v^{m-2} - v^{m-1}}{4(\lambda^{r}_1 - 1)}\\
e^{m}_1 & = \frac{\gamma}{c_1} \frac{2v^{m} + \lambda^{r}_1 (2v^{m} - v^{m-1} - v^{m-2})}{8}\\
e^{m-2}_2 & = \frac{\gamma}{c_2} \frac{c_2 v^{m-1}(\lambda^{r}_1+1) - c_1(v^{m-2} + v^{m-1})}{4(c_2(\lambda^{r}_1 + 1) - 2c_1)}\\
e^{m-1}_2 & = \frac{\gamma}{c_2} \frac{c_2 v^{m-2}(\lambda^{r}_1+1) - c_1(v^{m-2} + v^{m-1})}{4(c_2(\lambda^{r}_1 + 1) - 2c_1)}\\
e^{m}_2 & = \frac{\gamma}{c_2} \frac{c_2 (2v^{m} - v^{m-1} - v^{m-2})(\lambda^{r}_1+1) + c_1(v^{m-2} + v^{m-1})}{8c_1}.
\end{align*}

By Lemmas~\ref{lemma:interior2}, \ref{lemma:interior3:1}, and~\ref{lemma:interior3:2}, below, all these quantities are greater than zero.
This shows that the game has an interior equilibrium $(\bm{e}_1,\bm{e}_2)$ with the associated marginal rates of substitution $(\bm{\mu}_1,\bm{\mu}_2)$.

Since the game on the networks $\bm{\rho^{*}}_1$ and $\bm{\rho^{*}}_2$, defined above, has an interior equilibrium $(\bm{e}_1,\bm{e}_2)$ with the associated marginal rates of substitution $(\bm{\mu}_1,\bm{\mu}_2)$, it attains the upper bound on equilibrium total efforts of the two players. 
By Lemma~\ref{lemma:unique} in the appendix and non-singularity of matrices $\mathbf{I} + \bm{\rho^{*}}_i$, for all $i \in \{1,2\}$, this constructed equilibrium is unique. This completes the proof.
\end{proof}

The lemmas below establish that the efforts of the players obtained in proof of Theorem~\ref{th:maxeffort} are positive.

\begin{lemma}
\label{lemma:interior2}
Suppose that $v^2 \geq v^1 > 0$ and $c_2 \geq c_1 > 0$.
If
\begin{equation}
\label{eq:interior2:1}
\frac{c_2}{c_1} < \frac{v^1 + v^2}{v^2} \textrm{ and } \lambda_1 \in \left[0,\frac{v^1 + v^2}{v^2}\frac{c_1}{c_2} - 1\right)
\end{equation}
or
\begin{equation}
\label{eq:interior2:2}
\frac{c_2}{c_1} \geq \frac{v^1 + v^2}{v^2} \textrm{ and } \lambda_1 > \frac{v^2}{v^1}
\end{equation}
then 
\begin{align*}
e_1^1 & = \frac{\gamma}{c_1} \frac{\lambda_1 v^2 - v^1}{4(\lambda_1 - 1)} > 0\\
e_1^2 & = \frac{\gamma}{c_1} \frac{\lambda_1 v^1 - v^2}{4(\lambda_1 - 1)} > 0\\
e_2^1 & = \frac{\gamma}{c_2} \frac{c_1(v^1 + v^2) - c_2 v^2(\lambda_1 + 1)}{4(2c_1 - c_2(\lambda_1 + 1))} > 0\\
e_2^2 & = \frac{\gamma}{c_2} \frac{c_1(v^1 + v^2) - c_2 v^1(\lambda_1 + 1)}{4(2c_1 - c_2(\lambda_1 + 1))} > 0.
\end{align*}
\end{lemma}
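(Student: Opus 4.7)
My plan is to treat the two hypotheses \eqref{eq:interior2:1} and \eqref{eq:interior2:2} separately, and in each case determine the sign of the numerator and the denominator of each of the four fractions. The organizing observation I will use is a clean dichotomy on the position of $\lambda_1$ relative to $1$: hypothesis~\eqref{eq:interior2:1} forces $\lambda_1 < 1$, while hypothesis~\eqref{eq:interior2:2} forces $\lambda_1 > 1$. Indeed, under \eqref{eq:interior2:1}, using $c_1/c_2 \leq 1$ and $v^1 \leq v^2$,
\[
\lambda_1 \;<\; \frac{v^1+v^2}{v^2}\cdot\frac{c_1}{c_2} - 1 \;\leq\; \frac{v^1+v^2}{v^2} - 1 \;=\; \frac{v^1}{v^2} \;\leq\; 1,
\]
whereas under \eqref{eq:interior2:2}, $\lambda_1 > v^2/v^1 \geq 1$ is immediate.

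In case~\eqref{eq:interior2:1}, the denominator $4(\lambda_1 - 1)$ is negative, so to obtain $e_1^1, e_1^2 > 0$ I will verify that both corresponding numerators are negative: the refined bound $\lambda_1 < v^1/v^2$ above yields $\lambda_1 v^2 < v^1$, and $\lambda_1 < 1$ together with $v^1 \leq v^2$ yields $\lambda_1 v^1 < v^2$. For $e_2^1$ and $e_2^2$, I will first show the shared denominator $4(2c_1 - c_2(\lambda_1+1))$ is positive; this follows from $(v^1+v^2)/v^2 \leq 2$, which upgrades the hypothesis on $\lambda_1$ to $\lambda_1 < 2c_1/c_2 - 1$. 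The numerator of $e_2^1$ is then positive directly by hypothesis, since $\lambda_1 < (v^1+v^2)/v^2 \cdot c_1/c_2 - 1$ is equivalent to $c_2 v^2(\lambda_1+1) < c_1(v^1+v^2)$; and the numerator of $e_2^2$ follows a fortiori because $v^1 \leq v^2$.

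In case~\eqref{eq:interior2:2}, $4(\lambda_1-1) > 0$, and the numerators of $e_1^1, e_1^2$ are both positive: $\lambda_1 v^2 > v^2 \geq v^1$, while $\lambda_1 v^1 > v^2$ is exactly the hypothesis $\lambda_1 > v^2/v^1$. For $e_2^1$ and $e_2^2$, the common denominator is now negative because $\lambda_1 + 1 > 2$ and $c_2 \geq c_1$ give $c_2(\lambda_1+1) > 2c_2 \geq 2c_1$. For the corresponding numerators, I combine the hypothesis $c_2 v^2 \geq c_1(v^1+v^2)$ with $\lambda_1+1 > 2$ to conclude $c_2 v^2(\lambda_1+1) > c_1(v^1+v^2)$ for $e_2^1$; for $e_2^2$, I rewrite $\lambda_1 > v^2/v^1$ as $\lambda_1+1 > (v^1+v^2)/v^1$ and combine with $c_2 \geq c_1$ to obtain $c_2 v^1(\lambda_1+1) > c_2(v^1+v^2) \geq c_1(v^1+v^2)$.

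The main obstacle is only careful bookkeeping, but two small points deserve care. First, in case~\eqref{eq:interior2:1} the hypothesis on $\lambda_1$ implicitly encodes both $\lambda_1 < 1$ and $\lambda_1 < 2c_1/c_2 - 1$, and these two consequences must be extracted explicitly before the sign tracking can proceed. Second, the negativity of the numerator of $e_2^2$ in case~\eqref{eq:interior2:2} genuinely needs \emph{both} parts of the hypothesis (on $c_2/c_1$ and on $\lambda_1$), and the inequality chain routes through the intermediate value $c_2(v^1+v^2)$.
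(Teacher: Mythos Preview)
Your proposal is correct and follows essentially the same elementary sign analysis as the paper: both arguments hinge on the dichotomy that hypothesis~\eqref{eq:interior2:1} forces $\lambda_1 < v^1/v^2 \leq 1$ while hypothesis~\eqref{eq:interior2:2} forces $\lambda_1 > v^2/v^1 \geq 1$, and then verify that in each regime the numerators and denominators of all four fractions have matching signs. The only organizational difference is that the paper first records the threshold intervals $[0,v^1/v^2)$, $(v^2/v^1,\infty)$, $[0,(c_1/c_2)(v^1+v^2)/v^2 - 1)$, $((c_1/c_2)(v^1+v^2)/v^1 - 1,\infty)$ on which each $e_i^k$ is positive and then checks containment, whereas you track signs term-by-term within each case---but the underlying inequalities invoked are identical.
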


\begin{proof}
Since $v^2 \geq v^1 > 0$ so $e_1^k \geq 0$, for all $k \in \{1,2\}$, if either $\lambda_1 \in [0,v^1/v^2)$ (in which case $\lambda_1 < 1$) or $\lambda_1 > v^2/v^1$ (in which case $\lambda_1 > 1$).
Similarly, $e_2^k \geq 0$, for all $k \in \{1,2\}$, if either $\lambda_1 \in [0,(c_1/c_2)(v^1+v^2)/v^2 - 1)$ (in which case $\lambda_1 < 2c_1/c_2 - 1$) or $\lambda_1 > (c_1/c_2)(v^1+v^2)/v^1 - 1$ (in which case $\lambda_1 > 2c_1/c_2 - 1$).

If~\eqref{eq:interior2:1} is satisfied then the interval $[0,(c_1/c_2)(v^1+v^2)/v^2 - 1)$ is non-empty. Moreover, \eqref{eq:interior2:1}, together with $c_2 \geq c_1 > 0$, implies that $v^1/v^2 \geq (c_1/c_2)(v^1+v^2)/v^2 - 1$. Hence when~\eqref{eq:interior2:1} is satisfied, $e_i^k > 0$ for all $i \in \{1,2\}$ and $k \in \{1,2\}$.

If~\eqref{eq:interior2:2} is satisfied and $c_2 \geq c_1 > 0$ then $v^2/v^1 \geq (c_1/c_2)(v^1+v^2)/v^1 - 1$. Hence if~\eqref{eq:interior2:2} is satisfied then $e_i^k > 0$, for all $i \in \{1,2\}$ and $k \in \{1,2\}$.
\end{proof}

\begin{lemma}
\label{lemma:interior3:1}
Suppose that $v^3 \geq v^2 \geq v^1 > 0$ and $c_2 \geq c_1 > 0$.
If
\begin{equation}
\label{eq:interior3:1}
\frac{c_2}{c_1} < \frac{v^{3} + v^{2} + v^{1}}{v^{3} + v^{2} - v^{1}} \textrm{ and } \lambda_{1} \in \left[0,\frac{v^3 + v^2 + v^1}{v^3 + v^2 - v^1}\frac{c_1}{c_2} - 1\right)
\end{equation}
then
\begin{align*}
e^{1}_1 & = \frac{\gamma}{c_1} \frac{2v^1 - \lambda_1(v^3 + v^2 - v^1)}{4(2 - \lambda_1)} > 0\\
e^{2}_1 & = \frac{\gamma}{c_1} \frac{2v^2 - \lambda_1(v^3 + v^1 - v^2)}{4(2 - \lambda_1)} > 0\\
e^{3}_1 & = \frac{\gamma}{c_1} \frac{2v^3 - \lambda_1(v^2 + v^1 - v^3)}{4(2 - \lambda_1)} > 0\\
e^{1}_2 & = \frac{\gamma}{c_2} \frac{c_1(v^{3} + v^{2} + v^{1}) - c_2 (v^{3} + v^{2} - v^{1})(\lambda_1 + 1)}{4(3c_1 - c_2(\lambda_1 + 1))} > 0\\
e^{2}_2 & = \frac{\gamma}{c_2} \frac{c_1(v^{3} + v^{2} + v^{1}) - c_2 (v^{3} + v^{1} - v^{2})(\lambda_1 + 1)}{4(3c_1 - c_2(\lambda_1 + 1))} > 0\\
e^{3}_2 & = \frac{\gamma}{c_2} \frac{c_1(v^{3} + v^{2} + v^{1}) - c_2 (v^{2} + v^{1} - v^{3})(\lambda_1 + 1)}{4(3c_1 - c_2(\lambda_1 + 1))} > 0.
\end{align*}
\end{lemma}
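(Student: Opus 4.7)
The plan is to unify the six inequalities by noting a common structure. Writing $V = v^1 + v^2 + v^3$, each of player~$1$'s numerators takes the form $2v^k - \lambda_1(V - 2v^k)$ and each of player~$2$'s takes the form $c_1 V - c_2(\lambda_1+1)(V - 2v^k)$, for $k \in \{1,2,3\}$. The denominators $4(2-\lambda_1)$ and $4(3c_1 - c_2(\lambda_1+1))$ can be handled uniformly via one numerical bound.

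First I would establish the elementary inequality
\begin{equation*}
\frac{V}{V - 2v^1} \;=\; \frac{v^1+v^2+v^3}{v^2+v^3-v^1} \;\leq\; 3,
\end{equation*}
which is equivalent to $2v^1 \leq v^2 + v^3$ and follows from $v^1 \leq v^2$ and $v^1 \leq v^3$. Combined with $c_1 \leq c_2$, the hypothesis~\eqref{eq:interior3:1} then gives $\lambda_1 < 3(c_1/c_2) - 1 \leq 2$ and $c_2(\lambda_1+1) < 3c_1$, so both denominators are strictly positive.

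For the numerators, the same argument works in each case. If $V - 2v^k \leq 0$, then $-\lambda_1(V-2v^k) \geq 0$, and positivity of the player-$1$ (resp.\ player-$2$) numerator is immediate from $2v^k > 0$ (resp.\ $c_1 V > 0$). Otherwise, positivity of the player-$1$ numerator is equivalent to $\lambda_1 < V/(V-2v^k) - 1$, and positivity of the player-$2$ numerator to $\lambda_1 + 1 < (c_1/c_2)\,V/(V-2v^k)$. Since $v \mapsto V/(V-2v)$ is increasing on $[0,V/2)$, the tightest constraint, among those $k$ with $V - 2v^k > 0$, is attained at the smallest $v^k$, namely $v^1$ (where $V - 2v^1 = v^2+v^3-v^1 \geq v^1 > 0$). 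The resulting binding player-$2$ bound is exactly the hypothesis~\eqref{eq:interior3:1}, and the binding player-$1$ bound $\lambda_1 < V/(v^2+v^3-v^1) - 1$ is strictly weaker because $c_1/c_2 \leq 1$. All six numerators are therefore positive.

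The main obstacle is essentially organizational: recognizing the shared $V - 2v^k$ structure in all six expressions and the monotonicity of $v \mapsto V/(V-2v)$ that pins down $k=1$ as the single binding case. Once these observations are made, the proof collapses to the one-line comparison with the hypothesis, with the variable-sign terms $v^1+v^2-v^3$ dispatched by the trivial sub-case above.
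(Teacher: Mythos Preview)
Your proposal is correct and follows essentially the same approach as the paper's proof: both establish $\lambda_1 < 2$ and $c_2(\lambda_1+1) < 3c_1$ for the denominators, then verify the numerators by comparing the hypothesis bound on $\lambda_1$ against the threshold for each $k$, splitting on the sign of $v^1+v^2-v^3$ for the $k=3$ term. Your unified $V-2v^k$ notation and the monotonicity of $v\mapsto V/(V-2v)$ make the bookkeeping cleaner than the paper's case-by-case verification, but the underlying argument is the same.
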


\begin{proof}
The condition on the costs in~\eqref{eq:interior3:1} guarantees that the interval of possible values $\lambda_1$ is nonempty.
Since $c_2 \geq c_1 > 0$ and $v^3 \geq v^2 \geq v^1 > 0$ so
\begin{align*}
\frac{v^3 + v^2 + v^1}{v^3 + v^2 - v^1}\frac{c_1}{c_2} - 1 & \leq \frac{v^3 + v^2 + v^1}{v^3 + v^2 - v^1} - 1 = \frac{2v_1}{v^3 + v^2 - v^1} \leq 2.
\end{align*}
Hence $\lambda_1 < 2$ and the denominator in the quantities $e_1^k$ is positive, for all $k \in \{1,2,3\}$. Moreover, since $v^3 \geq v^2 \geq v_1$ so the numerator in $e_1^3$ is positive. In addition, 
\begin{equation*}
0 < \frac{2v_1}{v^3 + v^2 - v^1} \leq \frac{2v_2}{v^3 + v^1 - v^2}
\end{equation*}
and so the numerators in $e_1^1$ and $e_1^2$ are positive. Hence $e_1^k > 0$, for all $k \in \{1,2,3\}$.

Since $v^3 \geq v^2 \geq v_1$ so
\begin{equation*}
\frac{v^3 + v^2 + v^1}{v^3 + v^2 - v^1} < 3.
\end{equation*}
and 
\begin{equation*}
\frac{v^3 + v^2 + v^1}{v^3 + v^2 - v^1} \leq \frac{v^3 + v^2 + v^1}{v^3 + v^1 - v^2}.
\end{equation*}
This, together with the range of $\lambda_1$ assumed in~\eqref{eq:interior3:1}, implies that $\lambda_1 < 3c_1/c_2 - 1$. Hence the denominator in the quantities $e_2^k$ is positive, for all $k \in \{1,2,3\}$ and the numerator in the quantities $e_2^k$ is positive, for all $k \in \{1,2\}$.
In the case of $k = 3$, either $v_1 + v_2 - v_3 \leq 0$, in which case the numerator in $e_2^3 > 0$, or $v_1 + v_2 - v_3 > 0$, in which case
\begin{equation*}
\frac{v^3 + v^2 + v^1}{v^3 + v^2 - v^1} \leq \frac{v^3 + v^2 + v^1}{v^2 + v^1 - v^3}
\end{equation*}
and the numerator in $e_2^3 > 0$, because $\lambda_1$ satisfies~\eqref{eq:interior3:1}. 
Hence $e_2^k > 0$, for all $k \in \{1,2,3\}$.
\end{proof}

\begin{lemma}
\label{lemma:interior3:2}
Suppose that $v^3 \geq v^2 \geq v^1 > 0$ and $c_2 \geq c_1 > 0$.
If
\begin{equation}
\label{eq:interior3:2}
\frac{c_2}{c_1} \geq \frac{v^{3} + v^{2} + v^{1}}{v^{3} + v^{2} - v^{1}} \textrm{ and } \lambda_{1} > \frac{v^2}{v_1}
\end{equation}
then
\begin{align*}
e^{1}_1 & = \frac{\gamma}{c_1} \frac{\lambda_1 v^{2} - v^{1}}{4(\lambda_1 - 1)} > 0\\
e^{2}_1 & = \frac{\gamma}{c_1} \frac{\lambda_1 v^{1} - v^{2}}{4(\lambda_1 - 1)} > 0\\
e^{3}_1 & = \frac{\gamma}{c_1} \frac{2v^{3} + \lambda_1 (2v^{3} - v^{2} - v^{1})}{8} > 0\\
e^{1}_2 & = \frac{\gamma}{c_2} \frac{c_2 v^{2}(\lambda_1+1) - c_1(v^{1} + v^{2})}{4(c_2(\lambda_1 + 1) - 2c_1)} > 0\\
e^{2}_2 & = \frac{\gamma}{c_2} \frac{c_2 v^{1}(\lambda_1+1) - c_1(v^{1} + v^{2})}{4(c_2(\lambda_1 + 1) - 2c_1)} > 0\\
e^{3}_2 & = \frac{\gamma}{c_2} \frac{c_2 (2v^{3} - v^{2} - v^{1})(\lambda_1+1) + c_1(v^{1} + v^{2})}{8c_1} > 0.
\end{align*}
\end{lemma}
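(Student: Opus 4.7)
The plan is to verify the strict positivity of each of the six listed efforts by elementary inequality manipulations, exploiting the three blocks of hypotheses: the prize ordering $v^3 \geq v^2 \geq v^1 > 0$, the cost ordering $c_2 \geq c_1 > 0$, and the spillover bound $\lambda_1 > v^2/v^1$. The key preliminary observation is that $v^2 \geq v^1$ forces $v^2/v^1 \geq 1$, hence $\lambda_1 > 1$; this takes care of the signs of all denominators and underlies many of the subsequent bounds.

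First I would handle $e^1_1$ and $e^2_1$. The denominators share the factor $\lambda_1 - 1 > 0$. For $e^2_1$ the numerator $\lambda_1 v^1 - v^2$ is positive by direct translation of $\lambda_1 > v^2/v^1$. For $e^1_1$ the numerator $\lambda_1 v^2 - v^1$ follows from $\lambda_1 > 1$ combined with $v^2 \geq v^1$. The quantity $e^3_1$ reduces to the observation that $v^3 \geq v^2 \geq v^1$ implies $2v^3 \geq v^2 + v^1$, so its numerator $2v^3 + \lambda_1(2v^3 - v^2 - v^1)$ is the sum of a strictly positive term and a non-negative one.

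Next I would address $e^1_2$ and $e^2_2$. Their shared denominator $c_2(\lambda_1+1) - 2c_1$ is positive because $\lambda_1 + 1 > 2$ and $c_2 \geq c_1$ give $c_2(\lambda_1+1) > 2c_2 \geq 2c_1$. For the numerator of $e^1_2$ the chain
\begin{equation*}
c_2 v^2(\lambda_1+1) \;\geq\; c_1 v^2(\lambda_1+1) \;>\; 2 c_1 v^2 \;\geq\; c_1(v^1+v^2),
\end{equation*}
where the last step uses $v^2 \geq v^1$, delivers positivity. For $e^2_2$ the critical reformulation is $\lambda_1 + 1 > v^2/v^1 + 1 = (v^1+v^2)/v^1$, so $v^1(\lambda_1+1) > v^1 + v^2$ and consequently $c_2 v^1(\lambda_1+1) > c_2(v^1+v^2) \geq c_1(v^1+v^2)$. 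Finally, $e^3_2$ is immediate: its numerator $c_2(2v^3 - v^2 - v^1)(\lambda_1+1) + c_1(v^1+v^2)$ is the sum of a non-negative contribution (by $2v^3 \geq v^2 + v^1$) and a strictly positive one.

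The main thing to watch for is the role of the cost condition $c_2/c_1 \geq (v^3+v^2+v^1)/(v^3+v^2-v^1)$, which never enters the positivity bounds above; it serves only to identify the regime in which the three-node component of $\bm{\rho}^{*}_i$ takes the second of the two shapes, and therefore in which the closed-form expressions being checked are the relevant ones. I do not anticipate any conceptual obstacle; the only care required is in distinguishing the ordering $v^2 \geq v^1$ from the similar-looking but distinct inequality $\lambda_1 > v^2/v^1$, which must be applied in the correct direction in each of the six checks.
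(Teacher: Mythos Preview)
Your proposal is correct and follows essentially the same approach as the paper: both verify positivity of the six efforts by elementary inequality chasing from $\lambda_1 > v^2/v^1 \geq 1$ and $c_2 \geq c_1$, and neither uses the cost-ratio condition $c_2/c_1 \geq (v^3+v^2+v^1)/(v^3+v^2-v^1)$ for the positivity argument itself. The paper's proof is simply a terser version, compressing your six checks into the two displayed inequalities $v^2/v^1 \geq 1$ and $v^2/v^1 \geq (c_1/c_2)(v^1+v^2)/v^1 - 1 \geq 2c_1/c_2 - 1$.
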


\begin{proof}
Since $c_2 \geq c_1 > 0$ and $v_3 \geq v_2 \geq v_1 > 0$ so $e^3_i >0$, for all $i \in \{1,2\}$, and
\begin{equation*}
\frac{v^2}{v^1} \geq 1 \textrm{ and } \frac{v^2}{v^1} \geq \frac{c_1}{c_2}\frac{v^1 + v^2}{v^1} - 1 \geq \frac{c_1}{c_2}2 - 1.
\end{equation*}
This implies that $e^k_1 > 0$ and $e^k_2$, for all $k \in \{1,2\}$.
\end{proof}

\subsection{Multiple equilibria}

In this section we provide an example showing that possibility of multiple equilbria in the model, even if matrices $\mathbf{I} + \bm{\rho}_i$ and all their minors are non-singular, for all $i \in \{1,2\}$. 

Consider a scenario with $B = \{1,2,3\}$. Suppose that the network of spillovers of player $1$, $\bm{\rho}_1$, is as presented in Figure~\ref{fig:multeq} and network of spillovers of player $2$ is empty.

\begin{figure}[htp]
\begin{center}
  \includegraphics[scale=0.7]{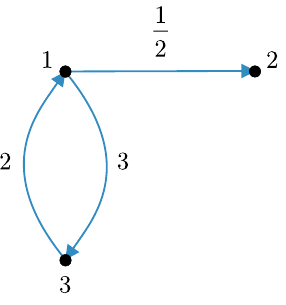}
\end{center}
  \caption{Network of player $1$.}
  \label{fig:multeq}
\end{figure}

 The adjacency matrices of spillovers for the two players are
\begin{equation*}
\bm{\rho}_{1} = \begin{bmatrix}
            0 & \frac{1}{2} & 3 \\
            0 & 0 & 0 \\
            2 & 0 & 0
            \end{bmatrix} \qquad\qquad\qquad
\bm{\rho}_{2} = \begin{bmatrix}
            0 & 0 & 0 \\
            0 & 0 & 0 \\
            0 & 0 & 0
            \end{bmatrix}.
\end{equation*}

Notice that matrix $\mathbf{I} + \bm{\rho}_1$ as well as all its minors are non-singular.
The CSF is Tullock CSF with $\gamma = 1$, the values of all battlefields are equal to $1$. The costs are equal to $1$, $c_1 = c_2 = 1$.

Let $P_1 = \{1,2,3\}$ and $P_2 = \{1,2\}$. Solving the system of equations defined by~\eqref{eq:char:mu} and~\eqref{eq:char:pr} in~Theorem~\ref{th:char} we compute the quantities $\bm{\mu}_1$ and $\bm{\mu}_2$:
\begin{align*}
\mu_1^1 & = \frac{1}{2} \textrm{, } & \mu_1^2 & = 1 \textrm{, } 
& \mu_1^3 & = 0 \\
\mu_2^1 & = 1 \textrm{, } & \qquad \mu_2^2 & = 1, & &
\end{align*}
and $\mu_2^3$ takes any non-negative real value.
Using~\eqref{eq:char:pr} we obtain the winning probabilities:
\begin{equation*}
p_1^1 = \frac{2}{3}\textrm{, } \qquad p_1^2 = \frac{1}{2},\qquad
p_1^3 = 1.
\end{equation*}

Inserting the $\bm{\mu}$'s and the probabilities into~\eqref{eq:char:e} we obtain a system of equations for finding the efforts. Solving it we obtain
\begin{align*}
e_1^1 & = \frac{18-4 \mu^3_2}{45 \mu^3_2} \textrm{, } & e_1^2 & = \frac{53 \mu^3_2 - 36 }{180 \mu^3_2} \textrm{, } & e_1^3 & = \frac{4 \mu^3_2-3}{15 \mu^3_2} \\
e_2^1 & = \frac{2}{9} \textrm{, } & e_2^2 & = \frac{1}{4} \textrm{, } & e_2^3 & = 0.
\end{align*}
This strategy profile is valid if $3/4 < \mu^3_2 < 9/2$. Notice, in particular, that each of the values of $\mu^2_3$ determines a different equilibrium and there is continuum of Nash equilibria. The equilibrium effective efforts are
\begin{align*}
y_1^1 & = \frac{4}{9} \textrm{, } & y_1^2 & = \frac{1}{4} \textrm{, } & y_1^3 & = \frac{1}{\mu^3_2} \\
y_2^1 & = \frac{2}{9} \textrm{, } & y_2^2 & = \frac{1}{4} \textrm{, } & y_2^3 & = 0.
\end{align*}
In any of these equilibria, battlefield $3$ receives $0$ effective effort from player $2$ and each battlefield receives positive effective effort from player $1$.

The equilibrium total efforts of the two players are equal to
\begin{equation*}
\sum_{k \in B} e^k_1 = \sum_{k \in B} e^k_2 = \frac{17}{36}
\end{equation*}
and the equilibrium payoffs are
\begin{equation*}
\payoff{1}(\bm{e}_1,\bm{e}_2) = \frac{61}{36}, \qquad\qquad \payoff{2}(\bm{e}_1,\bm{e}_2) = \frac{13}{36}.
\end{equation*}
 Notice, in particular, that equilibrium probabilities of winning battlefields, equilibrium total efforts, and equilibrium payoffs are the same across all equilibria.

\end{document}